\Crefname{equation}{Eq.}{Eqs.}
\Crefname{figure}{Fig.}{Figs.}
\definecolor{googleblue}{RGB}{34, 0, 204}
\definecolor{panblue}{RGB}{0,24,150}
\definecolor{carmine}{RGB}{150, 0, 24}
\newtheorem{Theorem}{Theorem}
\newtheorem{Lemma}{Lemma}
\newtheorem{Proposition}{Proposition}
\theoremstyle{definition}
\newtheorem{Definition}{Definition}
\newtheorem{Corollary}{Corollary}
\DeclareMathOperator{\Tr}{Tr}
\DeclareMathOperator{\supp}{supp}
\DeclareMathOperator{\Pa}{Pa}
\newcommand{\LA}{L^{(A)}}
\newcommand{\LC}{L^{(C)}}
\newcommand{\MA}{M^{(A)}}
\newcommand{\MB}{M^{(B)}}
\newcommand{\NB}{N^{(B)}}
\newcommand{\NC}{N^{(C)}}
\renewcommand{\H}{\mathcal{H}}
\newcommand{\E}{\mathcal{E}}
\begin{document}

\preprint{APS/123-QED}
\title{Fully quantum inflation: quantum marginal problem constraints in the service of causal inference}

\author{Isaac D. Smith}
\email{isaac.smith@uibk.ac.at}
\affiliation{University of Innsbruck, Department of Theoretical Physics, Technikerstr. 21A, Innsbruck A-6020, Austria}
\author{Elie Wolfe}
\email{ewolfe@perimeterinstitute.ca}
\author{Robert W. Spekkens}
\email{rspekkens@perimeterinstitute.ca}
\affiliation{Perimeter Institute for Theoretical Physics, 31 Caroline Street North, Waterloo, Ontario Canada N2L, 2Y5}

\date{\today}

\begin{abstract}   
Consider the problem of deciding, for a particular multipartite quantum state, whether or not it is realizable in a quantum network with a particular causal structure. This is a fully quantum version of what causal inference researchers refer to as the problem of causal discovery. In this work, we introduce a fully quantum version of the inflation technique for causal inference, which leverages the quantum marginal problem. The primary example by which we illustrate the utility of this method is testing compatibility of tripartite quantum states with the  quantum network known as the triangle scenario. We show, in particular, how the method yields a complete classification of pure three-qubit states into those that are and those that are not compatible with the triangle scenario.  We also provide some illustrative examples involving mixed states and some where one or more of the systems is higher-dimensional. Furthermore, we examine the question of when the incompatibility of a multipartite quantum state with a causal structure can  be inferred from the incompatibility of a joint probability distribution induced by implementing measurements on each subsystem. Finally, we present a family of networks, which includes the triangle scenario as a special case, for which causal compatibility constraints can be derived.
\end{abstract}

\maketitle
\tableofcontents

\section{Introduction}

By providing a method for formalizing intuitive or explicit cause-effect relationships between variables, causal modelling has become an active subfield of statistics and machine learning~\cite{peters2017elements}.  More recently, notions of causality have begun to play an instructive role also in quantum physics.  One of the best clues for deducing the precise manner in which quantum physics implies a departure from the principles underlying classical physics, Bell's theorem~\cite{Bell_Aspect_2004,sep-bell-theorem}, admits a distinctly causal interpretation~\cite{wood2015lesson, fritz2012beyond}. This newfound interest in the intersection of causality and quantum physics has led   to a better understanding of quantum-classical gaps in an array of causal structures~\cite{fritz2016beyond}, the investigation of potential benefits for causal inference~\cite{ried2015quantum,chiribella2019quantum}, and the formalization of the notion of an intrinsically quantum causal model~\cite{Allen_17,Costa_2016,barrett2019quantum}.
 
In a classical causal model~\cite{Pearl_2009,spirtes2001causation}, a causal structure is represented by a directed acyclic graph (DAG) where each node of the DAG represents a random variable and each directed edge represents a potential causal influence between these.  The parameters of the model can be taken to be a set of conditional probability distributions, one for each variable conditioned on its causal parents in the DAG.  A given probability distribution over the subset of observed variables is said to be {\em compatible} with the causal structure if it can be obtained from some choice of the parameters after marginalizing over the unobserved (i.e., latent) variables. A central problem in classical causal inference, which we term the {\em causal compatibility problem}, is deciding whether a given distribution over observed variables is compatible with a given causal structure.  

In this work, we are interested in a fully quantum version of the latter problem, where the distribution over observed variables is replaced by a quantum state over a multipartite quantum system.  That is, we are interested in the question: for a given quantum state on a multipartite system, is it compatible with a particular causal structure holding among the subsystems?
 
To make this question more precise, it is necessary to review the quantum generalization of the notion of a causal model. 
In a {\em quantum} causal model~\cite{Allen_17}, a causal structure is still represented by a DAG, but now each node of the DAG represents a quantum system.  The parameters of the model can be taken to be a set of quantum channels, one for each quantum system where the output of the channel is that quantum system and the inputs are the quantum systems that are its causal parents in the DAG.  Here, the distinction between visible and latent systems is best understood as the visible nodes being the ones that are probed in a quantum experiment.  If the visible quantum systems under consideration are such that there is no cause-effect relationship holding among any of them, then one can assign a joint quantum state to the visible systems.\footnote{Otherwise, one would need to consider some analogue of a joint quantum state over systems that are cause-effect related, which is a thorny problem~\cite{horsman2017can,fullwood2022quantum}.}  This is the case we consider here. A given joint quantum state on the visible systems is said to be {\em compatible} with the causal structure if it can be obtained from some choice of the parameters after marginalizing (i.e., taking the partial trace) over the latent quantum systems. 

The problem  we address, which we term the {\em fully quantum causal compatibility problem} is that of deciding whether a given quantum state over the visible systems is compatible with a given causal structure.

We make use of a method known as the inflation technique~\cite{WolfeSpekkensFritz_2019}, the first applications of which were focused on the case where the visible nodes are classical, and where the latent nodes can be classical or nonclassical.  Here, we apply the technique to the case where both visible and latent nodes are quantum. The inflation technique can be understood at a high level as a method for mapping the problem of determining whether a state  on the visible nodes is compatible with a causal structure $G$ to the problem of determining whether a particular set of  marginals of the state  is compatible with a causal structure $G'$ that is termed an ``inflation'' of $G$. More specifically, the \textit{failure} of a solution to the latter problem demonstrates the \textit{in}compatibility of the state with $G$.\footnote{It is known that the inflation technique completely solves the compatibility problem when the visible and latent nodes are classical~\cite{navascues2020inflation}.  For the case of visible nodes that are classical and latent nodes that are quantum, while the original technique can deliver some necessary conditions for compatibility, finding necessary and sufficient conditions requires the extension of the technique described in Ref.~\cite{wolfe_q_inflation}, with the proof that it completely solves the compatibility problem under certain constraints provided by Ref.~\cite{ligthart2023convergent,ligthart2023inflation}. } 

In tackling the problem of determining whether a particular set of  marginals of the state is compatible with a causal structure $G'$, one can leverage constraints coming from the marginal problem. For the case of visible nodes that are classical, it is the classical marginal problem that is pertinent, namely, the problem of deciding whether a given set of distributions on various subsets of variables are the marginals of a single joint distribution over all the variables. For the case of visible nodes that are {\em quantum}, it is the quantum marginal problem that is pertinent, namely, the problem of deciding whether a given set of {\em quantum states} on various subsets of quantum systems are the marginals of a single joint quantum state over all the quantum systems. 

This connection to the quantum marginal problem means that one can leverage preexisting results in the literature for the purpose of deciding compatibility of quantum states with causal structures having quantum latents. Despite the fact that the quantum marginal problem is known to be difficult in general~\cite{liu2006consistency}, there has nevertheless been progress in a number of cases.  See, e.g., the thorough list of references in Ref.~\cite[Section 1.2]{fraser2023estimation}. Of particular relevance for our purposes are the family of operator inequalities introduced by Hall~\cite{william_hall_05,hall_william_07} (see also~\cite{butterley2006compatibility}). 

\begin{figure}[htbp]
\begin{subfigure}{0.9\linewidth}
        \centering
        \includegraphics[width=\linewidth]{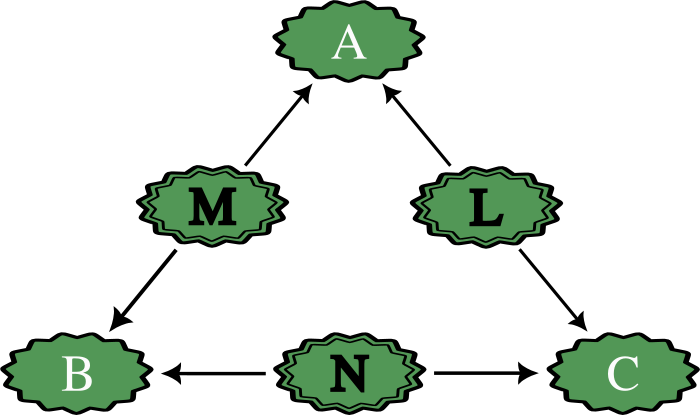}
        \caption{}
    \end{subfigure}
    \vspace{0.5cm}\\
    \begin{subfigure}{0.9\linewidth}
        \centering
        \includegraphics[width=\linewidth]{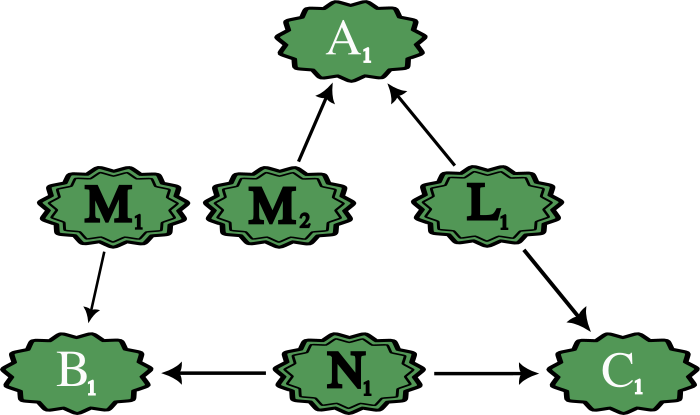}
        \caption{}
    \end{subfigure}
    \vspace{0.5cm}\\
    \begin{subfigure}{0.9\linewidth}
        \centering
        \includegraphics[width=\linewidth]{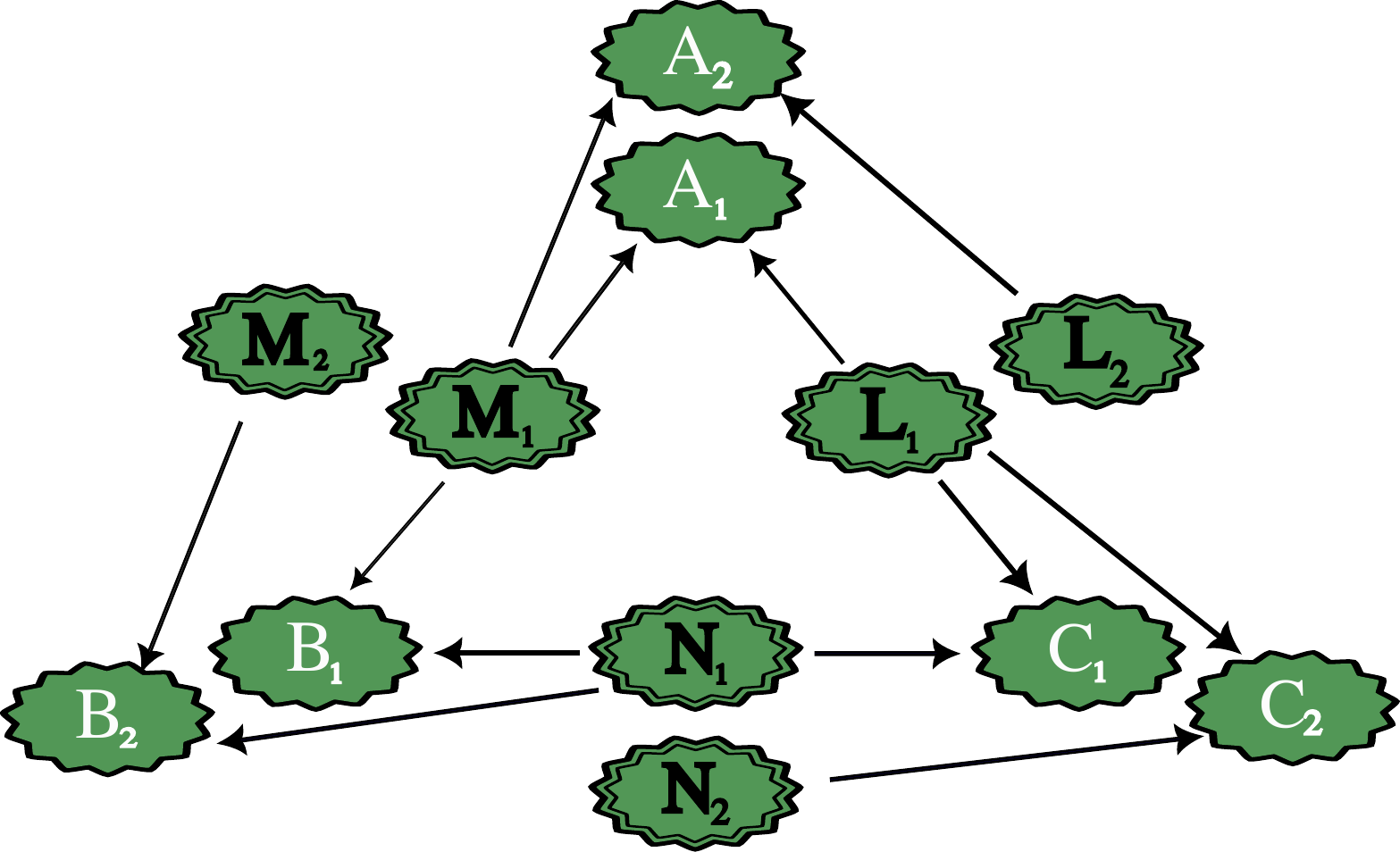}
        \caption{}
    \end{subfigure}
\begin{center}
\caption{ (a) The DAG describing the triangle scenario with quantum semantics for the visible and latent nodes.  (b) The $AB$-Cut inflation of the triangle scenario. (c) The Spiral inflation of the triangle scenario.  In all subfigures, visible nodes are indicated by a single-stroke border while latent nodes are indicated by a dual-stroke border.  The textured edges and green colour of the nodes is to indicate that these are quantum systems rather than classical variables.  The convention for the latter is seen in~\Cref{fig:classicaltriangle}.}
\label{fig:triangleandinflations}
\end{center}
\end{figure}

The main aim of this work is to showcase the utility of the inflation technique in conjunction with these operator inequalities for questions of compatibility of a joint quantum state with a given causal structure.  

For illustrative purposes, we primarily focus on the case of a tripartite system and the well-studied causal structure known as the triangle scenario, i.e., the structure that is  depicted in~\Cref{fig:triangleandinflations}(a). In particular, we use the {\em Cut inflation} of the triangle scenario \cite{WolfeSpekkensFritz_2019}, depicted in~\Cref{fig:triangleandinflations}(b). 
Because the Cut inflation is a so-called {\em nonfanout} inflation (a notion that will be explained further on), it is applicable in the fully quantum case. Furthermore, for the Cut inflation, it is possible to leverage the marginal problem constraint derived by Hall~\cite{william_hall_05}  to obtain strong witnesses of incompatibility. 

We will apply the technique for a few different choices of the dimensionalities of the visible quantum systems.  We focus on the case of three qubits, but we consider also the case of three qutrits and the case of two qubits and a ququart.  We also apply the technique not only to assessments of compatibility of pure states, but mixed states as well.  Furthermore, we show that it is possible to leverage the same marginal problem constraints to obtain causal compatibility constraints for a range of network causal structures beyond the triangle scenario.

A number of benefits of this approach are worth elucidating here. Firstly, this technique is well-suited to numerically specified quantum states. For example, given a state $\rho$ that is specified by experimental state tomography or related methods, it is computationally simple to test compatibility for a specific choice of inflation and marginal inequality. A verdict of incompatibility could then witness the divergence of the actual causal structure of the experiment from the expected one, indicating a deficiency in the experimentalist's understanding of their set-up. This approach is also suitable for considering symbolically specified quantum states, as it utilizes \emph{analytic} quantum marginal problem criteria, in contrast to the semidefinite programming approach of Ref.~\cite{navascues2020genuine}.

One motivation for studying compatibility of pure states with the triangle scenario (and $n$-node generalizations thereof where every subset of ${n{-}1}$ nodes have a latent common cause acting on them) is the definition of genuine $n$-partite entanglement proposed in Refs.~\cite{schmid2020understanding,navascues2020genuine}, based on entanglement under Local Operations and Shared Randomness (LOSR) rather than Local Operations and Classical Communication (LOCC).\footnote{When multipartite entanglement is studied within the framework of resource theories, one can make different choices for the free operations relative to which it is considered a resource \cite{schmid2020understanding}.  We summarize the main distinctions here.  If each party can implement any unitary operation in their local lab,   then we refer to the free operations as 
   {\em Local Unitaries}  (LU). If they can implement any quantum operation, then we refer to the free operations as   {\em Local Operations} (LO). 
  If all parties have access to a common source of classical randomness (i.e., a dice roll or a coin flip) that produces an outcome which may influence the operation that each party implements, then  we refer to the free operations as 
{\em  Local Operations with Shared Randomness} (LOSR). If, in addition to this shared randomness, the parties have access to bipartite entangled quantum states distributed pairwise between parties, then we refer to the free operations as
{\em   Local Operations with Shared Randomness and $2$-Way Shared Entanglement} (LOSR2WSE). If no shared randomness is available but pairwise entanglement is, then we refer to the free operations as LO2WSE and if the allowed operations are restricted to being unitary, then we refer to the free operations as LU2WSE. If the parties do not have access to pairwise entanglement but do have access to classical communication channels between them (which subsumes having access to shared randomness), then we refer to the free operations as {\em Local Operations and Classical Communication} (LOCC).} This definition is distinct from the conventional definition (the one based on biseparability), and overcomes various conceptual problems of the latter, as argued in Ref.~\cite{schmid2020understanding}.   Consider the case of $n=3$ for simplicity. The LOSR-based definition in this case is that a state is genuine tripartite entangled if it cannot be realized by resources of bipartite entanglement.  
More precisely, it is proposed that a genuine tripartite entangled state is one that cannot be realized by Local Operations and Shared Randomness with 2-Way Shared Entanglement (LOSR2WSE). If the state to be prepared is  {\em pure}, the shared randomness does not add any additional power\footnote{To see this, we suppose that one can prepare the pure state $|\Psi\rangle_{ABC}$ using LOSR2WSE and show that it follows that one can also prepare it without making use of the shared randomness, i.e., with just LO2WSE.  Suppose the shared randomness consists of sampling a variable $\Lambda$ from a distribution $P_{\Lambda}$ and that in the LOSR2WSE protocol, the state prepared for the value $\Lambda=\lambda$ is denoted $|\psi_{\lambda}\rangle_{ABC}$, such that the overall state prepared in the protocol is $\rho_{ABC}= \sum_{\lambda} P_{\Lambda}(\lambda) \ket{\psi_{\lambda}}\!\! \bra{\psi_{\lambda}}_{ABC}$.  By virtue of the convex extremality of the state $|\Psi\rangle_{ABC}$, the only way to have $\rho_{ABC}= |\Psi\rangle\! \langle\Psi |_{ABC}$ is if  $|\psi_{\lambda}\rangle_{ABC}=  |\Psi\rangle_{ABC}$ for all $\lambda$ that are assigned nonzero probability by  $P_{\Lambda}$.  But this implies that the protocol allows for the preparation $|\Psi\rangle_{ABC}$ without use of the shared randomness.}, so we have that a pure state is genuinely tripartite entangled when it cannot be realized by Local Operations with 2-Way Shared Entanglement (LO2WSE).  But these operations are precisely what can be achieved in the triangle scenario, so a pure state is genuinely tripartite entangled when it is incompatible with the triangle scenario.  Characterizing the boundary between pure states that are genuinely tripartite entangled and those that are not, therefore, provides a motivation for characterizing the boundary between pure states that are compatible with the triangle scenario and those that are not.

 The notion of compatibility with the triangle scenario studied here is similar to, but distinct from, the notion of compatibility studied in Ref.~\cite{Kraft_21} (where they refer to the triangle scenario as the {\em independent triangle network}).  This is because the latter article considered only unitaries for the channels mapping the latent systems to the visible systems, rather than allowing arbitrary operations as we do here.\footnote{Ref.~\cite{Kraft_21} did not allow for tracing out subsystems, which is why the two approaches are distinct in spite of the Stinespring dilation theorem.} That is, Ref.~\cite{Kraft_21} considers state-preparability by Local {\em Unitaries} with Shared Randomness and 2-Way Shared Entanglement (LUSR2WSE), while here we consider state-preparability by LO2WSE. When considering the preparability of {\em pure} multipartite states, which we do in~\Cref{sec:qubit_nodes} for the triangle scenario, the shared randomness is irrelevant. In such a case, then, the differences between the results of Ref.~\cite{Kraft_21} and our own consists in the differences between state-preparability by LU2WSE and state-preparability by LO2WSE. Because the set of local unitaries is a strict subset of the set of local operations, any state shown to be preparable by LU2WSE is also preparable by LO2WSE, but it is unclear whether or not the opposite implication holds. As such, it is possible that there is a strict inclusion relation between the set of LU2WSE-preparable states and the set of LO2WSE-preparable states. Consequently, any demonstration of incompatibility relative to LU2WSE does not necessarily imply incompatibility relative to LO2WSE. Below, we develop methods that allow for demonstrating the latter type of incompatibility.

The remainder of this article is structured as follows.~\Cref{sec:prelims} provides a review of the necessary background on causal models, classical and quantum.~\Cref{sec:inflation} reviews the inflation technique for causal inference and shows how to leverage a particular family of operator inequalities related to the quantum marginal problem to derive constraints on compatibility of a quantum state with a causal structure. Application of these new constraints are presented in~\Cref{sec:qubit_nodes} and~\Cref{sec:higher_dim_nodes} for qubit and higher-dimensional visible nodes respectively, with proofs presented in the appendices. In~\Cref{sec:beyond_triangle}, we present a family of networks where causal compatibility constraints can be obtained via the methods developed in earlier sections.~\Cref{sec:dicussion} discusses avenues for generalizing our results and conclusions.

\section{Preliminaries} \label{sec:prelims}

Throughout this manuscript, we make frequent reference to directed acyclic graphs (DAGs) as the primary method for representing causal structure \cite{Pearl_2009,spirtes2001causation}, so it is worthwhile establishing some of the relevant notation for these DAGs here. DAGs will typically be denoted by the letter $G$ and variations thereof, such as $G'$. The set of nodes of $G$ will be denoted $\mathsf{Nodes}(G)$, and partitions into two subsets: the visible nodes, 
denoted  $\textsf{Vnodes}(G)$, and the latent nodes, denoted $\textsf{Lnodes}(G)$. A DAG $G$ with a partitioning of nodes in this way was termed a {\em partitioned DAG} in Ref.~\cite{ansanelli2024everything}. All nodes will be labelled by capital letters $A$, $B$, $C$, etc., with letters at the start of the alphabet ($A$, $B$, $C$) usually denoting visible nodes and with letters around the middle ($L$, $M$, $N$) denoting latent nodes. For each node $X  \in \textsf{Nodes}(G)$, $\Pa(X) \subset \mathsf{Nodes}(G)$ denotes the set of parents of $X$ in $G$, that is, the set of nodes of $G$ for which there is a directed edge from the node to $X$. Similarly, for each $X \in \textsf{Nodes}(G)$, $\textrm{Ch}(X) \subset \mathsf{Nodes}(G)$ denotes the set of children of $X$ in $G$, that is, the set of nodes of $G$ for which there is a directed edge from $X$ to the node. A node $X \in \mathsf{Nodes}(G)$ is called an {\em ancestor} of a node $Y \in \mathsf{Nodes}(G)$ if there is a directed path from $X$ to $Y$ in $G$.

\subsection{Classical causal models}
 \label{subsec:cc_inflation}

To begin with, consider what is arguably the simplest case:  {\em classical} semantics for both the visible and the latent nodes. In the case of the triangle scenario, this is depicted in \Cref{fig:classicaltriangle}.  This means that both the visible and the latent nodes are associated to random variables. A joint state on the visible nodes is therefore a joint {\em probability distribution} over the associated random variables.  
  
\begin{figure}[htbp]
\begin{center}
\includegraphics[width=0.30\textwidth]{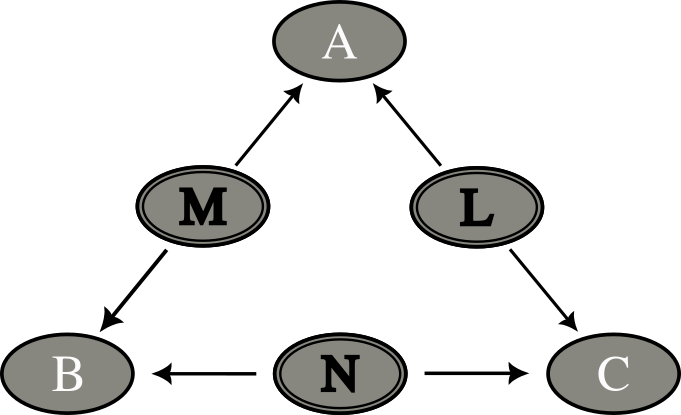}
\caption{The DAG describing the triangle scenario with classical semantics for the visible nodes, labelled $A$, $B$ and $C$, and the latent nodes, labelled $L$, $M$ and $N$.}
\label{fig:classicaltriangle}
\end{center}
\end{figure}

Consider a DAG $G$ describing the causal structure. We can also presume that $G$ is in a canonical form wherein latent nodes are necessarily exogenous, that is, they have no parents in $G$.\footnote{ There is no loss of generality in doing so in the case of classical causal modelling because Ref.~\cite{Evans16} showed that every partitioned DAG is observationally equivalent to one wherein all the latent nodes are exogenous.}  

For a classical causal model, the state on the visible nodes is a joint distribution  $P_{\textsf{Vnodes}(G)}$.  To specify a choice of parameter values in the causal model is to specify a distribution over each latent variable, i.e., a set $\{ P_{L} : L \in \textsf{Lnodes}(G) \}$ (here, the cardinality of the set of values of each $L$ is assumed to be finite but allowed to be arbitrarily large), and a conditional probability for each visible variable given its causal parents $\{  P_{A |\Pa(A)} : A \in \textsf{Vnodes}(G)\}$   (where the parental sets may include both visible and latent variables)\footnote{A comment regarding notation: if $\Pa(A)$ is empty, then $P_{A|\Pa(A)} = P_{A}$.}.

\begin{Definition}[Classical $G$-compatibility of a joint distribution] 
Consider a joint distribution $P_{\textsf{Vnodes}(G)}$ over the variables associated to the visible nodes of a causal structure $G$.  We say that $P_{\textsf{Vnodes}(G)}$ is {\em classically $G$-compatible} if there exists a choice of the parameters in the classical causal model, that is, a choice of $\{ P_{L} : L \in \textsf{Lnodes}(G) \}$ (distributions on the classical systems associated to the latent nodes) and of $\{  P_{A |\Pa(A)} : A \in \textsf{Vnodes}(G)\}$ (conditional probability distributions defining a stochastic map whose inputs are the classical systems associated to the latent nodes and whose outputs are the classical systems associated to the visible nodes) such that these realize $P_{\textsf{Vnodes}(G)}$:
 \begin{equation}\label{eq:classicalcausalmodeljoint}
 P_{\textsf{Vnodes}(G)} = \sum_{L\in\textsf{Lnodes}(G)}\prod_{A \in \textsf{Vnodes}(G)}P_{A|\Pa(A)}   \prod_{L\in\textsf{Lnodes}(G)} P_{L}.
\end{equation}
where $\sum_{L\in\textsf{Lnodes}(G)}$ indicates marginalisation over the variables assigned to latent nodes. 
\end{Definition}

We now consider the concrete example of the triangle scenario.  We use the notation for nodes presented in Fig.~\ref{fig:classicaltriangle}, that is, the visible nodes are denoted $A,B,C$ and the latent nodes are denoted $L,M,N$, and the variables associated to these nodes are denoted in the same manner as the nodes themselves. A joint distribution $P_{A BC}$ is compatible with the triangle scenario in a classical causal model if there exists a choice of parameters $P_{A|LM}, P_{B|MN}, P_{C|LN}, P_{L}, P_{M}$, and $P_{N}$ such that
\begin{align}
P_{A BC} = \sum_{LMN} P_{A|LM} P_{B|MN} P_{C|LN} P_{L} P_{M} P_{N}.
\end{align}

\subsection{Quantum causal models}\label{Qcausalmodels}

We now consider quantum causal models~\cite{Allen_17,Costa_2016,barrett2019quantum}, where both the latent and visible nodes have quantum semantics.  
 
In the case of the triangle scenario, this is depicted in Fig.~\ref{fig:triangleandinflations}(a).

For simplicity, we restrict our attention to causal structures wherein no two visible nodes appear in a cause-effect relationship (i.e., one visible node is never in the causal ancestry of another visible node) and wherein all latent nodes are exogenous (i.e., have no parents in the causal structure). This pair of restrictions together imply that the causal structure has just two layers: a layer of exogenous latent nodes and a layer of visible nodes. Such causal structures are sometimes termed {\em networks}~\cite{fritz2012beyond,wolfe_q_inflation} and we shall adopt this terminology here.\footnote{This terminology can also be motivated by recent developments of quantum networks within the context of the quantum internet~\cite{kimble2008quantum,wehner2018quantum}. For example, several proposals exist for satellite-based ``downlink'' quantum networks (see, e.g., Refs.~\cite{Liao_18,bourgoin2013comprehensive}) whereby satellites act as sources distributing entangled states to separated receiver stations on the ground. In such cases, if the satellites are considered to be the latent nodes of the network and the ground stations the visible nodes, and if the lack of communication between these nodes can be justified on practical grounds (i.e., due to the separation of satellites in orbit or the stations on the ground \cite{simon2017towards}), then the topology of these downlink networks matches the type of causal structures termed networks here.}

In contrast with a classical causal model, which assigns a {\em variable} to each node (the important aspect of which is the cardinality of its set of values), a quantum causal model assigns a {\em quantum system} to each node (the important aspect of which is the dimension of Hilbert space associated to it).  Consequently, to specify a choice of parameter values in a quantum causal model is to specify two types of parameters: (i) quantum states on the systems associated to the latent nodes (where the dimension of the Hilbert space associated to each latent node is assumed to be finite but allowed to be arbitrarily large)\footnote{The assumption that the latent spaces are finite-dimensional derives from nuances regarding the definition of causal compatibility that occur if this assumption is dropped --- see the comments below Def.~\ref{def:G_comp_QCM}.}, and (ii) quantum channels (i.e., completely positive and trace-preserving linear maps) from these to the quantum systems associated to the visible nodes.  

It follows that the entity whose compatibility with a causal structure is to be assessed is no longer a joint probability distribution, but a joint quantum state, specifically, $\rho_{\textsf{Vnodes}(G)}$, i.e., a positive trace-one operator on the Hilbert space $\bigotimes_{A\in \textsf{Vnodes}(G)}\mathcal{H}_{A}$.

Because of the no-broadcasting theorem in quantum theory~\cite{Barnum_96}, if a given latent node influences more than one visible node, one cannot imagine that  the state of that node is broadcast to each of the children, as one could classically.  The main differences between the definitions of the notion of a quantum causal model proposed in various works, such as Ref.~\cite{Costa_2016} and Refs.~\cite{Allen_17,barrett2019quantum}, relate to how to meet this challenge.  However, disagreements about the definition are not relevant to the compatibility problem. This is because the definition of the causal compatibility problem allows the Hilbert spaces associated to the latent nodes to be of arbitrary (finite)  dimensionality, and  the differences between these approaches arise only when one fixes the dimensionality. We here adopt the following approach: Suppose that the set of visible nodes that are children of a latent node $L$ are denoted $\textrm{Ch}(L)$.  Then the Hilbert space associated to the latent node $L$,  $\mathcal{H}_{L}$, is presumed to factorize into a set of Hilbert spaces indexed by its children.  That is, $\mathcal{H}_{L} = \bigotimes_{A \in\textrm{Ch}(L)} \mathcal{H}_{L^{(A)}}$. The first type of parameter in the causal model is a quantum state $\rho_{L}$, i.e., a positive trace-one operator on the space of linear operators on the Hilbert space associated to $L$,  $\rho_{L} \in \mathcal{L}(\mathcal{H}_{L})$.  Given the factorization just described, the state $\rho_{L}$ can be conceptualized as $\rho_{L^{(A)}, L^{(B)}\dots L^{(Z)}} \in \mathcal{L}(\mathcal{H}_{L^{(A)}} \otimes \mathcal{H}_{L^{(B)}}\otimes  \dots \otimes \mathcal{H}_{L^{(Z)}})$ where $A,B,\dots, Z \in \textrm{Ch}(L)$. Such a state can clearly describe correlations between the different subsystems of $L$. 

The second type of parameter of the causal model specifies how each visible node  depends on its causal parents. For each visible node  $A\in \textsf{Vnodes}(G)$, one specifies a quantum channel whose output space is $\mathcal{L}(\mathcal{H}_{A})$, the operator space associated to $A$.  Let $\widetilde{\textrm{Pa}}(A)$ be defined as $\{ L^{(A)} : L \in \textrm{Pa}(A) \}$.  
That is, for each latent node $L$ among the parents of $A$, include within $\widetilde{\textrm{Pa}}(A)$ only $L^{(A)}$, the subsystem of $L$ that influences $A$. Thus, whereas $\textrm{Pa}(A)$ describes the collection of latent nodes that are parents of $A$, $\widetilde{\textrm{Pa}}(A)$ describes the collection of {\em subsystems} of the latent nodes that have an influence on $A$. The input space of the quantum channel whose output space is $\mathcal{L}(\mathcal{H}_{A})$ is $\mathcal{L}(\mathcal{H}_{\widetilde{\textrm{Pa}}(A)})$.  The channel can consequently be denoted $\mathcal{E}_{A|\widetilde{\textrm{Pa}}(A)}$. There is one such channel for each visible node. 

We can now define what it means for a joint quantum state on the visible nodes of a  network $G$ to be compatible with $G$ for quantum semantics of the latent nodes. 
\begin{Definition}[$G$-compatibility in quantum causal models] \label{def:G_comp_QCM}
Let $G$ be a  network and consider a joint quantum state $\rho_{\textsf{Vnodes}(G)}$ over the systems associated to the visible nodes of $G$. The state $\rho_{\textsf{Vnodes}(G)}$ is said to be $G$-compatible with the network $G$ within a quantum causal model if there exist a choice of the parameters of the model that realizes $\rho_{\textsf{Vnodes}(G)}$. More precisely, the joint state $\rho_{\textsf{Vnodes}(G)}$ is $G$-compatible if, for each latent node $L\in \textsf{Lnodes}(G)$, there exists a choice of quantum state $\rho_{L}$, and for each visible node $A \in \textsf{Vnodes}(G)$, there exists a choice of quantum channel $\mathcal{E}_{A|\widetilde{\textrm{Pa}}(A)}$, such that
\begin{equation}
\rho_{\textsf{Vnodes}(G)} =  \bigotimes_{A \in \textsf{Vnodes}(G)} \mathcal{E}_{A|\widetilde{\textrm{Pa}}(A)} \left( \bigotimes_{L\in \textsf{Lnodes}(G)} \rho_{L} \right).
\end{equation}
\end{Definition}

We briefly comment on the  choice of having the latent spaces factorize and what that means for the notion of $G$-compatibility defined above. An alternative choice would be to not assume that latent spaces factorize but instead enforce that the channel parameters $\mathcal{E}_{A|\textrm{Pa}(A)}$ mutually commute on common input spaces. As was demonstrated in Ref.~\cite{Allen_17}, this alternative is well-motivated by a quantum generalisation of the notion of common cause in classical causal modelling. However, in the context of quantum causal compatibility, if the latent spaces are finite dimensional (but of arbitrarily large dimension) as we consider here, there is, in fact, no distinction between the factorizing case and the commuting case. That is, any state $\rho$ that is $G$-compatible with respect to the former notion is also $G$-compatible with respect to the latter, and vice versa. If the latent spaces are instead infinite dimensional, then the results of, e.g., Refs.~\cite{slofstra2019set,ji2021mip} indicate that there may indeed be a distinction between the factorizing and commuting cases in such a scenario. For our present purposes, we consider finite (but arbitrary large) dimension of latent spaces only, which can be understood as defining the scope of application of the techniques developed below.

 To better understand the notion of $G$-compatibility in a quantum causal model, it is worthwhile once more to consider the concrete example of the triangle scenario, using the notational convention of Fig.~\ref{fig:triangleandinflations}(a). In this case, the Hilbert spaces of each latent node factorize into a pair of subspaces, i.e., $\H_{L} \cong \H_{\LA} \otimes \H_{\LC}$, $\H_{M} \cong \H_{\MA} \otimes \H_{\MB}$, $\H_{N} \cong \H_{\NB} \otimes \H_{\NC}$.  A quantum state $\rho_{ABC}$ on $\H_{A} \otimes \H_{B} \otimes \H_{C}$ is \textit{triangle-compatible} if there exist quantum states of the latent nodes $\varrho_{L} = \varrho_{\LA \LC}$, $\varrho_{M} = \varrho_{\MA \MB}$, $\varrho_{N} = \varrho_{\NB \NC}$ and quantum channels $\E_{A|\LA \MA}: \H_{\LA} \otimes \H_{\MA} \rightarrow \H_{A}$, $\E_{B|\MB \NB}: \H_{\MB}\otimes \H_{\NB} \rightarrow \H_{B}$, and $\E_{C|\LC \NC}: \H_{\LC} \otimes \H_{\NC} \rightarrow \H_{C}$ such that\footnote{We note that the definition of a state being compatible with the triangle scenario using quantum latents, asks about the existence of arbitrary quantum channels $\E_{A|\LA \MA}, \E_{B|\MB \NB}$ and $\E_{C|\LC \NC}$ in Eq.~\eqref{eq:q_triangle_compat}.  In Refs.~\cite{Kraft_21,luo2021new}, by contrast, these quantum channels are restricted to be unitary channels.  Consequently, the notion of compatibility defined here is more permissive than the notion of compatibility proposed in Refs.~\cite{Kraft_21,luo2021new}.  Because we cannot see a good reason to restrict these channels to unitaries, we consider the notion of compatibility  described above  to be the fundamental one, rather than that of Refs.~\cite{Kraft_21,luo2021new}.  Nonetheless, there are cases where the two definitions coincide, such as when studying compatibility of pure states. See Sec.~\ref{subsec:qubit_pure} for a discussion of this case.} 
\begin{align}\label{eq:q_triangle_compat}
 \rho_{ABC} = & \E_{A|\LA \MA} \otimes \E_{B|\MB \NB}  \\\nonumber
&  \otimes  \E_{C|\LC \NC}(\varrho_{\LA \LC} 
\otimes \varrho_{\MA \MB} \otimes \varrho_{\NB \NC}). 
\end{align}

As an aside, note that the partitioning of the latent systems into subsystems, each of which influences a different child node of the latent node, also provides an alternative way of defining a classical causal model.  Take the triangle scenario as an example.  The parametrization of the classical causal model that we provided earlier was 
  $P_{A|LM}, P_{B|MN}, P_{C|LN}, P_{L}, P_{M}$, and $P_{N}$ such that
\begin{align}
P_{A BC} = \sum_{LMN} P_{A|LM} P_{B|MN} P_{C|LN} P_{L} P_{M} P_{N}.
\end{align}
However, one can also partition each latent variable into a pair of latent variables and make each child depend on just one element of the pair. For instance, one defines $L=(\LA,\LC)$ and one makes $A$ depend only on $\LA$  and $C$ depend only on $\LC$.  The parametrization of the classical causal model we obtain in this way is 
  $P_{A|\LA \MA}, P_{B|\MB \NB}, P_{C|\LC \NC}, P_{\LA \LC}, P_{\MA \MB}$, and $P_{\NB \NC}$  such that
\begin{align}
 P_{A BC} &= \sum_{L M,N}
  P_{A|\LA \MA} P_{B|\MB \NB} P_{C|\LC \NC}\nonumber\\
&  \times P_{\LA \LC} P_{\MA \MB} P_{\NB \NC}. 
\end{align}
This second parametrization is clearly subsumed as a special case of the first. To see that the first is also subsumed as a special case of the second, it suffices to note that one can take $\LA$  and $\LC$ to be two copies of the $L$ of the first model and to take the distribution over $\LA \LC$ to be what one obtains from the distribution $P_L$ by applying a copy operation, so that $P_{\LA \LC} = P_{L} \delta_{\LA, L} \delta_{\LC, L}$.  A similar observation holds for any causal structure. 

This demonstrates that one can recover a classical causal model from a quantum causal model by particularizing to density operators  that are diagonal in some fixed product basis over the subsystems, since these are equivalent to probability distributions, and channels that act as stochastic maps relative to these bases, since these are equivalent to conditional probability distributions.

\section{Fully quantum inflation}
\label{sec:inflation}

\subsection{A description of the technique}

The inflation technique for causal inference~\cite{WolfeSpekkensFritz_2019} provides a method for resolving questions of causal compatibility.  This method was inspired by ideas from the literature on Bell's theorem, namely, derivations of Bell's theorem that leverage results from the classical marginal problem.  

At a high level, the inflation technique can be conceptualized as a means for converting facts about a marginal problem into facts about causal compatibility.  We explore a particular version of this problem in the present paper, namely, how to make use of facts about the {\em quantum} marginal problem to obtain conclusions about causal compatibility between a causal structure and a joint quantum state. 

In the following, we will generalize the inflation technique to the case of quantum causal models. 
Note that previous work allowed the {\em latent nodes} to be quantum, but a description of the inflation technique for the case where the {\em visible} nodes are quantum has not been provided previously. It is worth noting a subtlety regarding terminology at this point. Whether a causal model is termed `classical' or `quantum' refers to whether all the nodes are classical or all are quantum. The term `quantum inflation', on the other hand, has been used~\cite{wolfe_q_inflation} to describe the inflation technique wherein the latent nodes are quantum but the visible nodes are still classical  (and in particular the case where one can treat fanout inflations).  One consequently cannot refer to the inflation technique applied to quantum causal models as simply `quantum inflation'. Therefore, we introduce the term `fully quantum inflation' to refer to the technique when both the latent and the visible nodes are quantum.

It is useful to contrast the classical and quantum cases.  We will not, however, review the inflation technique for classical causal models.  Rather, we will present the technique only for quantum causal models.  The description of the technique for classical causal models can be recovered from the quantum one, as noted at the end of Sec.~\ref{Qcausalmodels}, by simply restricting attention to density operators that are diagonal in some fixed product basis over the subsystems and channels that act as stochastic maps relative to these bases. 

The first step in using the inflation technique consists of producing, from the original DAG $G$, a new DAG $G'$, termed the {\em inflation} of $G$, in the following way (the reader is referred to Ref.~\cite{WolfeSpekkensFritz_2019} for a more complete treatment). The set of nodes of $G'$, denoted $\mathsf{Nodes}(G')$,  are labelled in such a way that each can be associated to a node of $G$, with the possibility that more than one node of $G'$ is associated to  a single node in $G$. That is, the inflation $G'$ comes with a fixed surjective map $\mathsf{DropCopyIndex}_{G' \rightarrow G} : \mathsf{Nodes}(G') \rightarrow \mathsf{Nodes}(G)$. If the nodes of $G$ are labelled as $A,B,\dots,Z$ then the nodes of $G'$ can be labelled as $A_{i}, B_{j}, \dots, Z_{k}$ where $i$, $j$ and $k$ are indices in the natural numbers whose ranges depend on the choice of $G'$. The surjective map is then defined via $\mathsf{DropCopyIndex}_{G' \rightarrow G}(A_{i}) = A$ for all $i$ and similarly for the other nodes. We refer to the nodes $A_{i}$ of $G'$ that are associated to a given node $A$ of $G$ as {\em copies} of the latter, and the subscript $i$ is referred to as the {\em copy-index}. 

For certain subsets of nodes of $G'$ and of $G$, the restriction of the map $\mathsf{DropCopyIndex}_{G' \rightarrow G}$ to these subsets forms a bijection. Suppose that ${\bf X} \subseteq \mathsf{Nodes}(G)$ and ${\bf X}' \subseteq \mathsf{Nodes}(G')$ are such that, for every $A \in {\bf X}$ there is precisely one value of $i$ such that $A_{i} \in {\bf X}'$. In such a case, the restriction of $\mathsf{DropCopyIndex}_{G' \rightarrow G}$ to ${\bf X}'$ forms a bijection between ${\bf X}'$ and ${\bf X}$, and we write ${\bf X} \sim {\bf X}'$. This bijection can be considered as capturing ``sameness of the labels of the nodes up to the identity of the copy-index'' (where no copy-index is considered to be one possible choice). 

Furthermore, for certain choices of $G'$, ${\bf X}' \in \mathsf{Nodes}(G')$ and ${\bf X} \in \mathsf{Nodes}(G)$ such that ${\bf X} \sim {\bf X}'$, the bijection can be considered to be structure-preserving in the following sense. For a given graph $G$ and a subset ${\bf X} \in \mathsf{Nodes}(G)$, we denote by $\mathsf{SubDAG}_G ({\bf X})$ the sub-DAG of $G$ consisting of those nodes of $G$ contained in ${\bf X}$ and those directed edges of $G$ that connect pairs of nodes within ${\bf X}$ (analogous notation is used for $G'$ and ${\bf X}' \in \mathsf{Nodes}(G')$). We then say that the map $\mathsf{DropCopyIndex}_{G' \rightarrow G}$ is a structure-preserving bijection between $\mathsf{SubDAG}_{G'}({\bf X}')$ and $\mathsf{SubDAG}_G ({\bf X})$ if the restriction of $\mathsf{DropCopyIndex}_{G' \rightarrow G}$ to ${\bf X}'$ is a bijection between ${\bf X}'$ and ${\bf X}$, and if, for $A_{i}, B_{j} \in {\bf X}'$, the directed edge from $A_{i}$ to $B_{j}$ exists in $\mathsf{SubDAG}_{G'}({\bf X}')$ if and only if the directed edge from  $A$ to $B$ exists in $\mathsf{SubDAG}_{G}({\bf X})$. In such a case we write $\mathsf{SubDAG}_{G'} ({\bf X}' ) \sim \mathsf{SubDAG}_G ({\bf X})$. This bijection captures ``sameness of the sub-DAGs (both labels and connectivity) up to the identity of the copy-index''. 

Next, let us consider subsets of nodes ${\bf U}' \subseteq \mathsf{Nodes}(G')$ and ${\bf U} \subseteq \mathsf{Nodes}(G)$. Let ${\bf X}'_{\bf{U}'} \subseteq \mathsf{Nodes}(G')$ be the set of nodes comprised of ${\bf U}'$ and all ancestors of the nodes of ${\bf U}'$ in $G'$ and similarly for ${\bf X}_{\bf{U}} \subseteq \mathsf{Nodes}(G)$. If $\mathsf{SubDAG}_{G'} ({\bf X}'_{\bf{U}'}) \sim \mathsf{SubDAG}_G ({\bf X}_{\bf{U}})$ in the sense outlined above, then we write
\begin{align}
\label{eq:definjectable}
\textsf{ansubgraph}_{G'}(\bm{U}')\sim\textsf{ansubgraph}_G(\bm{U}),
\end{align}
where $\textsf{ansubgraph}_G(\bm{U})$ is called the ancestral subgraph of $\bm{U}$ in $G$. In this case, the ancestral subgraph of ${\bf U}'$ in $G'$ and that of ${\bf U}$ in $G$ are the same up to the identity of the copy-indices.

There are two important notions from the inflation technique that are defined with respect to the correspondence in \Cref{eq:definjectable}. The first is the notion of injectable set, defined as follows. If ${\bf U}'$ consists entirely of visible nodes of $G'$, i.e., ${\bf U}' \subseteq \textsf{Vnodes}(G')$, and ${\bf U} \subseteq \textsf{Vnodes}(G)$ is such that $\textsf{ansubgraph}_{G'}(\bm{U}')\sim\textsf{ansubgraph}_G(\bm{U})$, then ${\bf U}'$ is called an {\em injectable set} and ${\bf U}$ is the {\em image of this injectable set} under the dropping of copy indices. The set of injectable sets of $G'$ is denoted $\mathsf{InjectableSets}(G')$ and the set of all their images in $G$ is denoted $\mathsf{ImagesInjectableSets}(G)$.

The second is the very notion of an inflation of a DAG. A DAG $G'$ is an {\em inflation of $G$} if and only if, for every $A \in \mathsf{Nodes}(G)$ and $A_{i} \in \mathsf{Nodes}(G')$ (so that $\textsf{DropCopyIndex}_{G' \rightarrow G}(A_{i}) = A$), $\mathsf{ansubgraph}_{G'}(A_i)\sim
\mathsf{ansubgraph}_G(A)$.  In other words, for $G'$ to be an inflation of $G$, every {\em singleton} set of visible nodes in $G'$ must be an injectable set.  The set of DAGs that are inflations of $G$ is denoted $\mathsf{Inflations}(G)$.

To explain these concepts further, let us consider an inflation used in the remainder of this work, namely the {\em $AB$-Cut inflation} of the triangle scenario, depicted in~\Cref{fig:triangleandinflations}(b). In this case, the inflated DAG contains just a single copy of each of the visible nodes, $A,B,C$, and a single copy of two of the latent nodes,  namely, $L$ and $N$, but two copies of the latent node $M$.  The latter are denoted by $M_{1}$ and $M_{2}$, while the former are denoted by $A_{1},B_{1},C_{1},L_{1},N_{1}$. Consider next the requirement of matching the ancestral subgraphs of visible nodes. The ancestry of each of the visible nodes in the original triangle consists only of the node's parents, so that the ancestral sub-DAG of node $A$ is the $v$-shaped ``collider'' DAG with arrows pointing from latent nodes $L$ and $M$ to $A$ (and similarly for $B$ with parents $L,N$ and for $C$ with parents $M,N$). In the Cut inflation, the situation is the same: the visible node $A_{1}$ has ancestry given by the $v$-shaped DAG with parents $L_{1}$ and $M_{2}$, $B_{1}$ has parents $M_{1}$ and $N_{1}$, and $C_{1}$ has parents $L_{1}$ and $N_{1}$. By removing the copy-indices (the subscripts) these ancestral sub-DAGs exactly match those of $A$, $B$ and $C$. By the same token, it can be seen that $\{A_{1}, C_{1}\}$ is an injectable set: the ancestral subgraph of $\{A_{1}, C_{1}\}$ in $G'$ is the $w$-shaped subgraph made up of $A_{1}, C_{1}$, their parents $L_{1}, M_{2}, N_{1}$ and the arrows between them, which corresponds precisely to the subgraph made up of $A,B,L,M,N$ in $G$. Likewise, $\{B_{1}, C_{1}\}$ is also an injectable set.

However, in the $AB$-Cut inflation of the triangle scenario, the set $\{A_{1},B_{1}\}$ is {\em not} injectable. This can be seen by noting that, in $G'$, $A_{1}$ and $B_{1}$ share no common ancestor, whereas $A$ and $B$ {\em do} share an ancestor in $G$ (i.e., $M$). While this ancestral independence disallows $\{A_{1}, B_{1}\}$ from being an injectable set, it {\em is} an example of a related notion, that of an {\em expressible} set. In the context of this work, namely, when considering DAGs $G$ that have the connectivity of a network, an {\em expressible} set of $G' \in \mathsf{Inflations}(G)$ is any set that can be written as the union of injectable sets that share no common ancestors in $G'$. That is, a subset $\bm{Y}' \subseteq \textsf{Vnodes}(G')$ is an expressible set, i.e., $\bm{Y}'\in\textsf{ExpressibleSets}(G')$, if there exist $\bm{U}'_{(1)},\dots, \bm{U}'_{(y)} \in\textsf{InjectableSets}(G')$ such that $\bm{Y}' = \cup_{j=1}^{y} \bm{U}'_{(j)} $ and such that for any pair $j,k$ such that $j\ne k$, $\textsf{ansubgraph}_{G'}(\bm{U}'_{(j)})\cap \textsf{ansubgraph}_{G'}(\bm{U}'_{(k)})= \emptyset$.\footnote{In particular, this means that $\bm{U}'_{(j)}$ and $\bm{U}'_{(k)}$ are $d$-separated by the empty set in $G'$. Also note that, in Ref.~\cite{WolfeSpekkensFritz_2019}, a more general definition of expressible set was given, with the notion presented here being referred to as ``ai-expressibility'' (with ``ai'' standing for ``ancestrally independent''). However, in the case where $G$ is a network, the set of ai-expressible sets is equal to the set of all expressible sets. As we focus exclusively on networks in this article, we refrain from providing the full definition here.}
All injectable sets are trivial examples of expressible sets.

Up until this point in our description of the inflation technique, we have not needed to stipulate whether we are considering a classical or quantum causal model (or indeed a causal model for some other generalized probabilistic theory), a distinction that we refer to as the {\em semantics} of the causal model.   The semantics specifies the nature of the parameters of the causal model.  From this point on, however, what we say is specific to quantum causal models.  As noted above, the case of classical causal models can be recovered by restricting attention to density operators that are diagonal relative to some fixed product basis over the subsystems and channels that act as stochastic maps relative to this basis.

We now introduce a map from the {\em parameters} of a causal model on $G$, denoted \texttt{par}, to the parameters  of a causal model on $G'$ (an inflation of $G$), denoted \texttt{par}$'$.  In other words, we describe how inflation acts on the parameters of a model. We denote this map from \texttt{par} to \texttt{par}$'$ by $\mathsf{Inflation}_{G\to G'}$.\footnote{Note that in Ref.~\cite{WolfeSpekkensFritz_2019}, the pair consisting of $G$ and a particular choice of parameter values, \texttt{par}, was referred to as {\em a causal model}.  For physicists, the term `model' appears natural as the description of a {\em particular} choice of parameter values.  For statisticians, however, the term `model' naturally corresponds to a {\em set} of possible such choices.  To reduce confusion, we here use the term `causal model' in a manner that is more in keeping with the statistician's usage.}

The map $\mathsf{Inflation}_{G\to G'}$ taking a choice of parameters on $G$ to a choice of parameters on  $G'\in \textsf{Inflations}(G)$ is defined by the following conditions: (i) For every visible node $A_i$ in $G'$, the quantum channel relating $A_i$ to its parents within $G'$ is the same as the quantum channel relating $A$ to its parents within $G$, 
\begin{align}\label{eq:funcdependences}
 \forall A_i \in \textsf{Vnodes}(G'):\; \mathcal{E}_{A_i| \widetilde{\textrm{Pa}}_{G'}(A_i)} =\mathcal{E}_{A|\widetilde{\textrm{Pa}}_{G}(A)}.
\end{align}
(ii) For every latent node $L_{j}$ in $G'$, let $\textrm{Ch}(L_{j})$ denote the set of children of $L_{j}$ in $G'$ and let $\widehat{\textrm{Ch}}(L_{j}) := \mathsf{DropCopyIndex}_{G' \rightarrow G}(\textrm{Ch}(L_{j})) \subseteq \textrm{Ch}(L)$ denote the set of visible nodes of $G$ that are the same up to copy-index as the nodes of $\textrm{Ch}(L_{j})$. The state on $L_{j}$ is then taken to be the reduced state of the state $\rho_{L}$ on $L$ in $G$ on the subsystems indexed by the elements of $\widehat{\textrm{Ch}}(L_{j})$. That is,
\begin{align}\label{eq:funcdependences2}
 \forall L_j \in \textsf{Lnodes}(G'):\; \rho_{L_j } = \textrm{Tr}_{L^{(A)} : A \in \textrm{Ch}(L) \setminus \widehat{\textrm{Ch}}(L_{j})}[\rho_{L}]
\end{align}
with the understanding that, if $\widehat{\textrm{Ch}}(L_{j}) = \textrm{Ch}(L)$, then $\rho_{L_{j}} = \rho_{L}$.\footnote{The map $\mathsf{Inflation}_{G\to G'}$ has been defined in this way since we focus on networks $G$ in this work, meaning that no latent node of $G$ has any parents. It should be noted, however, that there are DAGs $G$ where latent nodes {\em do} have parents in $G$. To cover this more general case, the inflation map can be defined without distinguishing between latent and visible nodes as is done above: for each $X_{i} \in \textsf{Nodes}(G')$, $\mathsf{Inflation}_{G\to G'}$ assigns to $X_{i}$ the parameter $\mathcal{E}_{X_i| \widetilde{\textrm{Pa}}_{G'}(X_i)} = \mathcal{E}_{X|\widetilde{\textrm{Pa}}_{G}(X)}$ with the understanding that (a) if $\widetilde{\textrm{Pa}}_{G'}(X_i)$ is empty, the map is a state-preparation, and (b) if the children of $X_{i}$ in $G'$ is a subset of the set of children of $X$ in $G$ (up to sameness of copy-indices), then there is a partial trace over the unused subspaces of $X_{i}$ to produce the reduced state on those subspaces corresponding to $\textrm{Ch}(X_{i})$, as in (ii) above.} 

So, why is the inflation map --- and the broader machinery of the inflation technique --- relevant for problems of causal compatibility? In essence, it allows a question of compatibility with $G$ to be mapped to a question of compatiblity with an inflation $G'$ of $G$. The formal statement of this is given in Proposition~\ref{mainlemma} below. 

Before stating and proving the proposition, let us establish some of the notation to be used throughout the remainder of this section. As above, $\bm{Y}'$ will continue to denote an expressible set of $G'$, i.e., $\bm{Y}' \in \mathsf{ExpressibleSets}(G')$. The set of elements $\bm{U}'_{(j)} \in \mathsf{InjectableSets}(G')$, for $j = 1, ..., y$, that are ancestrally independent and make up $\bm{Y}'$ is denoted $\mathsf{InjectableComponents}(\bm{Y}')$ (note that $y \in \mathbb{N}$ depends on $\bm{Y}'$). The images under the $\mathsf{DropCopyIndex}$ map of the $\bm{U}'_{(j)}$ in $\mathsf{ImagesInjectableSets}(G)$ for a given $\bm{Y}'$ are denoted $\bm{U}_{(j)}$, and the set of these is denoted $\mathsf{ImagesInjectableComponents}(\bm{Y}')$. We have the following:

\begin{Proposition} \label[proposition]{mainlemma}  
Let $G'$ be an inflation of the network $G$.
In this case, if the family of states $\{ \rho_{\bm{U}} : \bm{U} \in \mathsf{ImagesInjectableSets}(G) \}$ is compatible with $G$, then the family of states $\{ \sigma_{\bm{Y}'} : \bm{Y}' \in \mathsf{ExpressibleSets}(G') \}$, where 
 \begin{align}
 \sigma_{\bm{Y}'} := \bigotimes_{\bm{U} \in \mathsf{ImagesInjectableComponents}(\bm{Y}')} \rho_{\bm{U}},
 \end{align} 
 is compatible with $G'$. 
\end{Proposition}

\begin{proof} 
Suppose that $\{ \rho_{\bm{U}} : \bm{U} \in \mathsf{ImagesInjectableSets}(G) \}$ is compatible with $G$. That means there exists a choice of parameters $\texttt{par}$ on $G$ that realises a state $\rho_{\textsf{Vnodes}(G)}$ such that 
\begin{align}
\rho_{\bm{U}} = \textrm{Tr}_{\textsf{Vnodes}(G) \setminus \bm{U}}[\rho_{\textsf{Vnodes}(G)}]
\end{align} 
for all $\bm{U} \in \mathsf{ImagesInjectableSets}(G)$. Consider the parameters $\texttt{par}'$ on $G'$ given by $\texttt{par}' = \textsf{Inflation}_{G\rightarrow G'}(\texttt{par})$, which realise some state $\sigma_{\textsf{Vnodes}(G')}$ on the visible nodes of $G'$. For any $\bm{X'} \in \mathsf{InjectableSets}(G')$, with image $\bm{X} \in \mathsf{ImagesInjectableSets}(G)$, we have that 
\begin{align}
\sigma_{\bm{X}'} := \textrm{Tr}_{\textsf{Vnodes}(G')\setminus \bm{X'}}[\sigma_{\textsf{Vnodes}(G')}] \equiv \rho_{\bm{X}}. \label{eq:equivalence_via_inj_set}
\end{align}
This follows from the definition of injectable set and the definition of the map $\textsf{Inflation}_{G\rightarrow G'}$.

Suppose $\bm{Y}' \in \mathsf{ExpressibleSets}(G')$ is such that 
\begin{align}
{}&\mathsf{InjectableComponents}(\bm{Y}') \nonumber\\
&\quad \quad = \{ \bm{U}'_{(j)} \in \mathsf{InjectableSets}(G') |j = 1, \dots, y\}.
\end{align}
As a consequence of ancestral independence, we get that
\begin{align}
\sigma_{\bm{Y'}} &:= \textrm{Tr}_{\textsf{Vnodes}(G') \setminus \bm{Y}'}[\sigma_{\textsf{Vnodes}(G')}] \\
&= \bigotimes_{j=1}^{y} \; \textrm{Tr}_{\textsf{Vnodes}(G') \setminus \bm{U}'_{(j)}}[\sigma_{\textsf{Vnodes}(G')}]\\
&= \bigotimes_{j=1}^{y} \; \sigma_{\bm{U}'_{(j)}}.
\end{align}
Since each $\bm{U}'_{(j)}$ is injectable, with image $\bm{U}_{(j)}$, applying Eq.~(\ref{eq:equivalence_via_inj_set}) gives that $\sigma_{\bm{Y'}} = \bigotimes_{j=1}^{y} \rho_{\bm{U}_{(j)}}$. As the states $\sigma_{\bm{Y}'}$ for $\bm{Y}' \in \mathsf{ExpressibleSets}(G')$ are all marginals of the same state $\sigma_{\textsf{Vnodes}(G')}$ which is compatible with $G'$, it is also the case that the family $\{ \sigma_{\bm{Y}'} : \bm{Y}' \in \mathsf{ExpressibleSets}(G') \}$ is compatible with $G'$.
\end{proof}

An example illustrating this proposition for the case where $G$ is the triangle scenario and $G'$ is the $AB$-Cut inflation is given in \Cref{app:example_prop_1}.

Next, it is useful to define the notion of a causal compatibility inequality.

Let $G$ be a causal structure and let $X$ be a family of subsets of the visible nodes of $G$.  Let $I_X$ denote an inequality that operates on the corresponding family of states,  $\{ \rho_{\bm{X}}: \bm{X} \in X\}$.  Then $I_X$ is a {\em causal compatibility inequality for the causal structure $G$} (or a {\em $G$-compatibility inequality} for short) whenever it is satisfied by every family of states $\{ \rho_{\bm{X}}: \bm{X} \in X\}$ that is compatible with $G$.

Note that while violation of a causal compatibility inequality for $G$ witnesses the incompatibility of a state with the causal structure $G$, satisfaction of the inequality does not guarantee compatibility: satisfaction of the inequality merely provides a {\em necessary} condition for compatibility. 

Inflation provides a means of ``pulling back'' causal compatibility inequalities from the inflation DAG $G'$ to answer questions of causal compatibility with the original DAG $G$. This is formalized in the following corollary of Proposition~\ref{mainlemma}.

\begin{Corollary} \label[corollary]{maincorollary}
Let $G'$ be an inflation of a network $G$ and let $\{\rho_{\bm{U}} : \bm{U} \in \mathsf{ImagesInjectableSets}(G)\}$ be a family of states that is compatible with $G$. Let $T'  \subseteq \mathsf{ExpressibleSets}(G')$. If $I_{T'}$ is a causal compatibility inequality for $G'$ operating on families of states $\{ \tau_{\bm{Y}'} : \bm{Y}' \in T'\}$, then the family of states $\{ \sigma_{\bm{Y}'} : \bm{Y}' \in T' \}$, defined via 
\begin{align}
\sigma_{\bm{Y}'} := \bigotimes_{\bm{U} \in \mathsf{ImagesInjectableComponents}(\bm{Y}')}\rho_{\bm{U}_{(j)}}, \label{eq:sigma_Y_imagesinjcomp}
\end{align}
satisfies $I_{T'}$.
\end{Corollary}

\begin{proof} By Proposition~\ref{mainlemma}, if $\{\rho_{\bm{U}} : \bm{U} \in \mathsf{ImagesInjectableSets}(G)\}$ is compatible with $G$, then $\{ \sigma_{\bm{Y}'} : \bm{Y}' \in \mathsf{ExpressibleSets}(G') \}$, with 
\begin{align}
\sigma_{\bm{Y}'} := \bigotimes_{\bm{U} \in \mathsf{ImagesInjectableComponents}(\bm{Y}')}\rho_{\bm{U}_{(j)}}, 
\end{align}
is compatible with $G'$. Since the compatibility of $\{ \sigma_{\bm{Y}'} : \bm{Y}' \in \mathsf{ExpressibleSets}(G') \}$ with $G'$ implies that $\{ \sigma_{\bm{Y}'} : \bm{Y}' \in T'\}$ is compatible with $G'$ for each $T' \subseteq \mathsf{ExpressibleSets}(G')$, and since $I_{T'}$ is satisfied for all families of states $\{ \tau_{\bm{Y}'} : \bm{Y}' \in T'\}$ that are compatible with $G'$, it must also be satisfied for $\{ \sigma_{\bm{Y}'} : \bm{Y}' \in T' \}$.
\end{proof}
The utility of this corollary for questions of compatibility with the original causal model $G$ arises from the contrapositive of the above statement: if $I_{T'}$ is {\em not} satisfied for $\{ \sigma_{\bm{Y}'} : \bm{Y}' \in T' \}$ with $\sigma_{\bm{Y}'}$ as in Eq.~(\ref{eq:sigma_Y_imagesinjcomp}), then $\{\rho_{\bm{U}} : \bm{U} \in \mathsf{ImagesInjectableSets}(G)\}$ cannot be compatible with $G$. It is in this direction that we most use the above results in the remainder of this work.

It should be noted that all of these results are simply straightforward quantum generalizations of the ones presented in Ref.~\cite{WolfeSpekkensFritz_2019} for the case of classical causal models.

\subsubsection{The distinction between fanout and nonfanout inflations}

An inflation is called \textit{fanout} if it contains a latent node that influences visible nodes that are the same up to copy-indices, that is, if it contains a latent node that  influences more than one copy of a visible node from the original DAG. Otherwise, the inflation is termed \textit{nonfanout}.\footnote{Equivalently, one can define an inflation as nonfanout if the {\em descendent subgraph} of every latent node in the inflation DAG is the same up to copy-indices as  the  {\em descendent subgraph} of the corresponding latent node in the original DAG.}
For the triangle scenario, an example of a fanout inflation is the spiral inflation, depicted in~\Cref{fig:triangleandinflations}(c), while an example of a nonfanout inflation is the Cut inflation, depicted in~\Cref{fig:triangleandinflations}(b).

If one is considering the causal compatibility problem in quantum causal models, then it is not clear how to make use of a fanout inflation. In particular, the interpretation of a fanned-out latent node cannot be that it is copied and made to influence each of several children that are the same up to copy-indices, because the quantum no-broadcasting theorem prohibits such copying~\cite{Barnum_96}. As such, we here restrict attention to nonfanout inflations when deriving causal compatibility constraints in the case of quantum causal models.
 
In the case where the visible nodes are classical, so that the compatibility problem concerns a joint probability distribution over the visible nodes, then it is an interesting problem to determine, for a given causal structure, whether there is a difference between what can be realized with classical versus quantum latent nodes. Nonfanout inflations derive constraints on compatible distributions that hold for both types of latent nodes.  It is only with fanout inflations that one can derive constraints that can separate the cases of classical and quantum latents.  For the Bell scenario, for instance, it is a fanout inflation that allows one to derive Bell inequalities, which are the constraints that establish the existence of a quantum-classical gap in this case.
 
It is worth noting that some causal structures do not admit of any nontrivial nonfanout inflations. For example, the causal structure known as the bilocality scenario (see e.g.,~\cite{ligthart2023inflation}) is such an example that is moreover a member of the class of causal structures considered here (i.e., it is a network).

\subsection{Example of Cut inflation of triangle scenario: classical case}

As noted above, one obtains a description of how inflation can be used to derive causal compatibility inequalities for {\em classical} causal models by simply restricting to density operators that are diagonal relative to some fixed product basis over the subsystems and channels that act as stochastic maps relative to this basis. 

In this article, we will primarily focus on using the inflation technique to derive causal compatibility constraints for the triangle scenario using the cut inflation.  We do so first for classical causal models, i.e., using the classical version of the inflation technique, in order to facilitate a comparison with the case of quantum causal models which is our main focus.

\subsubsection{Leveraging classical marginal inequalities}
\label{sec:classicalcompatibility}
 
We consider the $AB$-Cut inflation of the triangle scenario, depicted in~\Cref{fig:triangleandinflations}(b). As noted earlier, the injectable sets of visible nodes in this case are, in addition to all the singleton sets, the two-node sets $\{ A_1, C_1\}$ and $\{ B_1, C_1\}$. Meanwhile, the two-node set $\{A_{1}, B_{1}\}$ is not injectable, but it is expressible. By the classical version of~\Cref{maincorollary}, which follows from taking all density operators to be diagonal in some product basis (see corollary 6 of Ref.~\cite{WolfeSpekkensFritz_2019}), any causal compatibility inequality for the inflation DAG that involves only the expressible sets $\{ A_1\}, \{ B_1\}, \{ C_1\}$, $\{ A_1, B_1\}$, $ \{ A_1, C_1\}$ and $\{B_1, C_1 \}$ defines (by the dropping of copy-indices) a causal compatibility inequality for the original DAG. By considering these sets as marginal contexts, we turn  to the question of how to leverage results from the classical marginal problem to derive causal compatibility inequalities on the inflation DAG that are of this type.

We begin by recalling the classical marginal problem on a set of variables ${\bf V}$.  Any subset of variables is termed a {\em marginal context}. The set of {\em all} marginal contexts, which is simply the power set of the set ${\bf V}$, we denote by $2^{\bf V}$.  An instance of the classical marginal problem is to determine, for a given set of marginal contexts $S \subseteq 2^{\bf V}$, and a given set of distributions on each of these contexts, $\{ Q_{\bf X}: {\bf X} \in S\}$, whether there is a joint distribution $Q_{\bf V}$ such that each of the $Q_{\bf X}$ is recovered as the marginal  of $Q_{\bf V}$ on ${\bf X}$, i.e., such that $Q_{\bf X}=\sum_{{\bf V}/{\bf X}} Q_{\bf V}$.   In this case, we say that the set of distributions $\{ Q_{\bf X}: {\bf X} \in S\}$ satisfies {\em marginal compatibility}. We are here introducing an important notational convention.  When we denote distributions by `$Q$', such as $\{ Q_{\bf X} :{\bf X} \in S\}$, we mean a set of distributions on marginal contexts that may or may not be marginals of a single distribution, whereas if we denote distributions by `$P$', such as $\{ P_{\bf X} :{\bf X} \in S\}$, then they {\em are} assumed to be marginals of a single distribution.

The most obvious constraint that a set $\{ Q_{\bf X}: {\bf X} \in S\}$ must satisfy in order to be marginally compatible is termed the {\em equimarginal property} and asserts that for any inclusion relation holding among the marginal contexts, i.e., ${\bf X}' ,{\bf X} \in S$ such that ${\bf X}' \subset {\bf X}$, we require $Q_{\bf X'}:=\sum_{{\bf X}/{\bf X'}} Q_{\bf X}$.  This constraint is straightforward to check, so we will restrict our attention to {\em nontrivial} constraints.    That is, we will only ask about the marginal compatibility of sets $\{ Q_{\bf X}: {\bf X} \in S\}$ that satisfy the equimarginal property.

For our example, the set of variables is ${\bf V}= \{A,B,C\}$, and the set of marginal contexts of interest is the powerset of ${\bf V}$ excluding ${\bf V}$ itself, so that $S= 2^{\bf V}\!\setminus\!{\bf V} :=\{ \{A\}, \{B\},\{C\},\{AB\}, \{AC\},\{BC\}\}$.   The input to our classical marginal problem in this case is an equimarginal set of distributions $\{ Q_{\bf X}:{\bf X}\in  2^{\bf V}\!\setminus\!{\bf V}   \} = \{Q_A, Q_B, Q_C, Q_{AB}, Q_{AC}, Q_{BC} \}$. 

A necessary condition for such an equimarginal family of distributions to be marginally compatible is  that the following inequality be satisfied~\cite{butterley2006compatibility}:
\begin{align}
\textbf{1} - Q_A -Q_B-Q_C + Q_{AB}+ Q_{AC}+ Q_{BC} \ge \textbf{0},\label{BellWignerInequality}
\end{align}
where the distributions appearing here are conceptualized as functions over the full sample space, $\bf{1}$ denotes the function that takes the value 1 everywhere, and $\bf{0}$ denotes the function that takes the value 0 everywhere.  We refer to this as a {\em  marginal compatibility inequality}. 

To emphasize that this is to be conceptualized as an inequality on a family of distributions $\{ Q_{\bf X}:{\bf X}\in 2^{\bf V}\!\setminus\!{\bf V}\}$,
It is useful to write it as:
\begin{align}
 \Delta(\{ Q_{\bf X}:{\bf X}\in 2^{\bf V}\!\setminus\!{\bf V} \}) \ge \textbf{0}.
 \end{align}
where $  \Delta(\{ Q_{\bf X}:{\bf X}\in 2^{\bf V}\!\setminus\!{\bf V} \}) := \textbf{1} - Q_A -Q_B-Q_C + Q_{AB}+ Q_{AC}+ Q_{BC} $.  

Also, if one evaluates this functional inequality at  $A=a, B=b, C=c$ for each set of values $a,b,c$, one obtains a {\em set} of inequalities on the real parameters appearing in these distributions, specifically, 
\begin{align}\forall a,b,c: &1- Q_A(a) -Q_B(b)-Q_C(c) \nonumber\\
&+ Q_{AB}(ab)+ Q_{AC}(ac)+Q_{BC}(bc) \ge 0.
\end{align} 
If the constraint is presented in this form, then we have a set of inequalities, rather than a single inequality.  In this case, we term them {\em marginal compatibility inequalities} (plural).

Now consider a distribution $P_{A_1 B_1 C_1}$ on the variables appearing in the $AB$-Cut inflation DAG, namely, $A_1$, $B_1$ and $C_1$. The marginal compatibility inequality above implies that 
\begin{align}
&\textbf{1} - P_{A_1} -P_{B_1}-P_{C_1} + P_{A_1 B_1}+ P_{A_1 C_1}+ P_{B_1 C_1}\ge \textbf{0}. \label{BellWignerInequalityprime}
\end{align} 
Note that this constraint follows simply from classical probability theory and has nothing to do with the causal structure of the $AB$-Cut inflation DAG.  However, we will now combine it with a constraint that {\em does} follow from this causal structure.

Specifically, we consider the $d$-separation relations among the visible nodes. In brief, the $d$-separation relations in a given DAG $G$ are the graphical description of the conditional independence of variables (or sets thereof) associated to the nodes of $G$ for any distribution compatible with $G$. These descriptions are given in terms of paths between the nodes in question (see, e.g., Refs.~\cite{peters2017elements,Pearl_2009} for a formal definition of $d$-separation). In the case of the $AB$-Cut inflation, there is one such relation, namely, that $A_1$ and $B_1$ are $d$-separated given the empty set. This holds because the only path between $A_{1}$ and $B_{1}$ in the $AB$-Cut inflation passes through $C_{1}$, which is a collider---see Refs.~\cite{peters2017elements,Pearl_2009}. As a consequence, the variables $A_{1}$ and $B_{1}$ are marginally independent (which is equivalent to independence conditioned on the empty set) in any distribution compatible with the $AB$-Cut inflation. That is, for any such distribution $P_{A_1 B_1 C_1}$, we require 
\begin{align}\label{eq:dsepABcut}
P_{A_1 B_1}=P_{A_1}P_{B_1}.
\end{align}

To get a nontrivial causal compatibility inequality on the marginals of $P_{A_1 B_1 C_1}$, it suffices to combine the equality constraint of Eq.~\eqref{eq:dsepABcut} with the marginal compatibility inequality of Eq.~\eqref{BellWignerInequality} to obtain:
\begin{align}
\textbf{1} - P_{A_1} -P_{B_1}-P_{C_1} + P_{A_1}P_{B_1}+ P_{A_1 C_1}+ P_{B_1 C_1} \ge \textbf{0}.\label{BellWignerInequalityprimeprime}
\end{align}
(Note the difference to Eq.~\eqref{BellWignerInequalityprime}, namely, that $P_{A_1}P_{B_1}$ appears in place of $P_{A_1 B_1}$.)

From this, we can infer a causal compatibility inequality for the triangle scenario DAG.  It suffices to note that the inequality of Eq.~\eqref{BellWignerInequalityprimeprime} is a polynomial function of $P_{A_1} , P_{B_1}, P_{C_1}, P_{A_1 C_1}$ and $P_{B_1 C_1}$ and that these are all marginals on {\em injectable sets} of visible nodes in the inflation DAG.  By~\Cref{maincorollary}, we infer that the inequality having the same functional form as Eq.~\eqref{BellWignerInequalityprimeprime} but where $A_1$, $B_1$ and $C_1$ are replaced by $A$, $B$ and $C$ respectively, namely, 
\begin{align}
\textbf{1} - P_A - P_B - P_C + P_{A}P_{B} + P_{AC} + P_{BC} \ge \textbf{0},\label{ABCutBellWignerInequality}
\end{align}
is a causal compatibility inequality for the triangle scenario in terms of the marginals $P_A, P_B, P_C, P_{AC},$ and $P_{BC}$ of $P_{ABC}$. If a given joint distribution $P_{ABC}$ violates the inequality of Eq.~\eqref{ABCutBellWignerInequality}, then it is incompatible with the triangle scenario. Following the terminology of a ``$G$-compatibility inequality'' introduced above, it is appropriate to refer to such an  inequality on $P_{ABC}$ as a {\em triangle-compatibility inequality}.

One can of course consider Cut Inflations where the Cut is between $B$ and $C$ or between $A$ and $C$, rather than being between $A$ and $B$.  We get analogous inequalities for each case, namely, for each   $x,y,z \in \{A,B,C\}$ such that $x\ne y\ne z$,
\begin{align}
\textbf{1} - P_x  - P_y - P_z + P_{x}P_{y}+ P_{xz}+ P_{yz} \ge \textbf{0}.
\label{xyCutBellWignerInequality}
\end{align}
A violation of any of these inequalities suffices to demonstrate triangle-incompatibility. In general, it will be clear from the context and the notation which inequality is being used (such as via the use of subscripts as in, e.g., \eqref{TriangleCompatibilityInequality2} below).

Because the inequality of Eq.~\eqref{xyCutBellWignerInequality} expresses a necessary condition on the joint distribution $P_{ABC}$ for triangle-compatibility using the $xy$-Cut inflation, it is useful to write it as 
\begin{align}
I_{xy}(P_{ABC}) \ge \bf{0},\label{TriangleCompatibilityInequality2}
\end{align}
where
\begin{align*}
&I_{xy}(P_{ABC}) := \textbf{1} - P_x - P_y - P_z + P_{x}P_{y} + P_{xz} + P_{yz},
\end{align*}
and where $P_x$ is the marginal on the singleton set $x$, $P_{xz}$ is the marginal on the two-variable set $xz$, etcetera. 

\subsubsection{Some distributions whose triangle-incompatibility is witnessed}

We now explore some of the conclusions that can be inferred from this  triangle-compatibility inequality.

As a simple example, consider the following tripartite distribution, which we will refer to as the {\em GHZ distribution} (because of its similarity with the Greenberger-Horne-Zeilinger (GHZ) state in quantum theory \cite{greenberger1989going}):
\begin{align}\label{eq:GHZdistn}
P_{ABC}^{(\rm GHZ)}
 = \frac{1}{2} \left( [000]_{ABC} +  [111]_{ABC} \right),
\end{align}
where $[abc]_{ABC}$ denotes the point distribution wherein all the probabilistic weight is on $A=a$, $B=b$, and $C=c$.\footnote{Formally, $[abc]_{ABC}$ can be understood as a vector in the $(|A|\times |B| \times |C| - 1)$-dimensional probability simplex (in $\mathbb{R}^{|A| \times |B| \times |C|}$) containing all distributions $P_{ABC}$, where $|X|$ denotes the (finite) number of values that the variable $X$ can take. By choosing a labelling $f: A \times B \times C \rightarrow \{1, \dots, |A|\times|B|\times|C|\}$, the vector $[abc]_{ABC}$ corresponds to a unit vector along the axis in $\mathbb{R}^{|A|\times |B| \times |C|}$ identified by $f(a,b,c)$ (i.e., it corresponds to a vertex of the probability simplex).}
It is straightforward to see that $P^{\rm GHZ}_{ABC}$ violates a triangle-compatibility inequality. For example, using Eq.~(\ref{ABCutBellWignerInequality}) and taking $A=0, B=0, C=1$, we find that the left-hand-side evaluates to $1 - \tfrac{1}{2} -  \tfrac{1}{2}-\tfrac{1}{2}+ \tfrac{1}{4} + 0 +0 =-\tfrac{1}{4}$.  Since this is negative, the inequality is violated. It follows that the GHZ distribution is  incompatible with the triangle scenario.
 
It is worth recalling that violations of a $G$-compatibility inequality by a distribution witnesses the incompatibility of that distribution with the causal structure $G$, but satisfaction of such inequalities by a distribution does not guarantee that it is compatible. A good example of this is 
the following distribution, 
\begin{align}\label{eq:Wdistn}
P_{ABC}^{({\rm W})} = \frac{1}{3} \left( [100]_{ABC} +  [010]_{ABC} +  [001]_{ABC} \right).
\end{align}
which is termed the ``W distribution'' (this distribution is likewise inspired by a quantum state, namely, the $W$-state of Ref.~\cite{Durr_00}). 
$P^{\rm W}_{ABC}$ can easily be verified to {\em satisfy} the  triangle-compatibility inequalities (satisfaction of the inequality for any of the Cuts implies satisfaction of the inequalities for the other Cuts due to the symmetries in the distribution), but it is nonetheless incompatible with the triangle scenario. Triangle-incompatibility of the W distribution is established in Ref.~\cite{WolfeSpekkensFritz_2019} using the Spiral inflation of the triangle scenario, which is depicted in Fig.~\ref{fig:triangleandinflations}(c).\footnote{\label{footnote:ringinflation} The triangle-incompatibility of the W distribution can also be established using a nonfanout inflation, known as the Ring inflation, thereby establishing the incompatibility of the W distribution with the triangle scenario even when the latent nodes are quantum, a fact that will be of use to us further on. (For a description of the Ring inflation see Refs.~\cite{gisin2020constraints, navascues2020genuine}.)
}

\subsection{Example of Cut inflation of triangle scenario: quantum case}
\label{subsec:fully_q_inflation}

Again, we consider  the $AB$-Cut inflation of the triangle scenario, depicted in~\Cref{fig:triangleandinflations}(b). Analogously to the classical case,~\Cref{maincorollary} guarantees that if we can identify a causal compatibility inequality on the inflation DAG that involves only the 
 expressible sets $\{ A_1\}, \{ B_1\}, \{ C_1\}, \{A_{1}, B_{1}\}, \{ A_1, C_1\}$ and $ \{ B_1, C_1\}$, then dropping the copy-indices yields a causal compatibility inequality for the triangle scenario.  We find such a causal compatibility inequality on the inflation DAG by leveraging the results from the quantum marginal problem.  This is the sense in which the fully quantum inflation technique allows questions of causal compatibility to be related to quantum marginal problems, allowing progress on the latter to be brought to bear on the former.

\subsubsection{Leveraging quantum marginal inequalities}\label{subsec:q_marginal_ineqs}

This subsection presents a family of operator inequalities describing solutions to a particular instance of the quantum marginal problem and demonstrates how these can be leveraged in fully quantum inflation. Specifically, the extra constraints enforced by the causal structure of the inflation DAG (such as factorization constraints) can be substituted into these operator inequalities to yield modified operator inequalities that constitute causal compatibility inequalities for the inflation DAG. 

The quantum version of the marginal problem is defined precisely as the classical version, but where we consider joint quantum states rather than joint probability distributions, and marginalization corresponds to the partial trace operation.
Consequently, if ${\bf V}$ denotes the full set of quantum systems under consideration, an instance of the quantum marginal problem is to determine, for a given set of marginal contexts $S \subseteq 2^{\bf V}$, and a given set of quantum states on each of these contexts, $\{ \sigma_{\bf X}: {\bf X} \in S\}$, whether there is a joint quantum state $\sigma_{\bf V}$ such that each of the $\sigma_{\bf X}$ is recovered as the marginal  of $\sigma_{\bf V}$ on ${\bf X}$, i.e., such that $\sigma_{\bf X}= {\rm Tr}_{{\bf V} \setminus {\bf X}} [\sigma_{\bf V}]$.   In this case, we say that the set of states $\{ \sigma_{\bf X}: {\bf X} \in S\}$ satisfy {\em marginal compatibility}. The notations $\sigma$ and $\rho$ will parallel those of $Q$ and $P$ in the classical case: $\{ \sigma_{\bf X} :{\bf X} \in S\}$ denotes  a set of states on marginal contexts that may or may not be marginals of a single joint state, whereas if we write $\{ \rho_{\bf X} : {\bf X} \in S\}$, then they are assumed to be marginals of a single joint state.

In Ref.~\cite{william_hall_05}, Hall introduced a necessary condition  that the marginals of any quantum state on a (finite-dimensional) composite Hilbert space consisting of an odd number of subsystems must satisfy (c.f.~\cite[Thm 1]{william_hall_05} and the discussion thereafter). 
Suppose ${\bf V}$ denotes a set of quantum systems and let $\sigma_{\bf X} \in \mathcal{L}(\bigotimes_{X\in {\bf X} }\H_{X})$
 denote a joint quantum state on  ${\bf X}\subseteq {\bf V}$ (i.e., a positive semidefinite operator with unit trace). 
The necessary condition on a set $\{\sigma_{\bf X}: {\bf X}\in 2^{\bf V}\!\setminus\!{\bf V}  \}$ of states to be the marginals of a single state $\sigma_{\bf V}$ when the set ${\bf V}$ is of odd cardinality  is the following operator inequality:
\begin{align}
 \Delta(\{\sigma_{\bf X}: {\bf X}\in 2^{\bf V}\!\setminus\!{\bf V}  \}) \ge 0 \label{eq:Hall_op_ineq}
\end{align}
where 
\begin{align}
 \Delta(\{\sigma_{\bf X}: {\bf X}\in 2^{\bf V}\!\setminus\!{\bf V}  \}) := 
 \sum_{{\bf X}\in 2^{\bf V}\!\setminus\!{\bf V}} (-1)^{|{\bf X}|}\sigma_{\bf X} \otimes \mathbb{1}_{{\bf V}\!\setminus\! {\bf X}}, \label{eq:Hall_ineqs}
\end{align}
with $|{\bf X}|$ denoting the cardinality of the set ${\bf X}$, with $\mathbb{1}_{{\bf V}\!\setminus\! {\bf X}}$ denoting the identity operator on the subsystems in ${\bf V}\!\setminus\! {\bf X}$, and with $\ge$ denoting the standard partial order relation over Hermitian operators (i.e., $M \ge N$ if and only if $M-N$ is a positive semi-definite operator), so that an operator inequality of the form $O \ge 0$ signifies that $O$ is positive semidefinite.\footnote{For pedagogical purposes, we here explain why there is a difference between the cases of even and odd numbers of systems. 
 In Ref.~\cite{william_hall_05}, it is shown that the operator
 \begin{align}
    \Lambda(\{\sigma_{\bf X}: {\bf X}\in 2^{\bf V}  \}) := \sum_{{\bf X}\in 2^{\bf V}} (-1)^{|{\bf X}|}\sigma_{\bf X} \otimes \mathbb{1}_{{\bf V}\!\setminus\! {\bf X}}
 \end{align}
 is positive for any cardinality of ${\bf V}$.  However, this is not the same as  \eqref{eq:Hall_ineqs} being positive for any cardinality of ${\bf V}$, since the sum in the above expression includes the case where  ${\bf X} = {\bf V}$
  while the sum in \eqref{eq:Hall_ineqs} doesn't. For odd cardinality of ${\bf V}$, we have that
 \begin{align}
    \Lambda(\{\sigma_{\bf X}: {\bf X}\in 2^{\bf V}  \}) = \Delta(\{\sigma_{\bf X}: {\bf X}\in 2^{\bf V}\!\setminus\!{\bf V}  \}) - \sigma_{{\bf V}}
 \end{align}
while for even cardinality of ${\bf V}$, we have
 \begin{align}
    \Lambda(\{\sigma_{\bf X}: {\bf X}\in 2^{\bf V}  \}) = \Delta(\{\sigma_{\bf X}: {\bf X}\in 2^{\bf V}\!\setminus\!{\bf V}  \}) + \sigma_{{\bf V}}
 \end{align}
 In the odd cardinality case, $\Delta(\{\sigma_{\bf X}: {\bf X}\in 2^{\bf V}\!\setminus\!{\bf V}  \})$ is a sum of two positive operators and so we can conclude that it is positive, while in the even cardinality case, we cannot make this inference.  
}

 In this work, we focus on the case of ${\bf V} = \{ A,B,C\}$,  and thus on the operator
\begin{align}
 \Delta&(\{\sigma_{\bf X}: {\bf X}\in 2^{\bf V}\!\setminus\!{\bf V}  \})\nonumber\\
 &  = \mathbb{1}_{ABC} - \sigma_{A}\otimes \mathbb{1}_{BC} - \sigma_{B}\otimes \mathbb{1}_{AC} - \sigma_{C}\otimes \mathbb{1}_{AB} \nonumber\\
& \quad   + \sigma_{AB}\otimes \mathbb{1}_{C} + \sigma_{AC}\otimes \mathbb{1}_{B} + \sigma_{BC}\otimes \mathbb{1}_{A}. \label{eq:Hall_tripartitewithidentities}
\end{align}
It is convenient to suppress tensor products with identity operators, as well as the subscripts on the remaining identity operator on the entire space, and write simply 
\begin{align}
 \Delta&(\{\sigma_{\bf X}: {\bf X}\in 2^{\bf V}\!\setminus\!{\bf V}  \})\nonumber\\
 &  = \mathbb{1} - \sigma_{A} - \sigma_{B} - \sigma_{C} 
 + \sigma_{AB} + \sigma_{AC} + \sigma_{BC}. \label{eq:Hall_tripartite}
\end{align}
Because ${\bf V}$ is of odd cardinality, we infer from Hall's result that $\Delta(\{\sigma_{\bf X}: {\bf X}\in 2^{\bf V}\!\setminus\!{\bf V}  \})$ must be positive semidefinite when the elements of $\{\sigma_{\bf X}: {\bf X}\in 2^{\bf V}\!\setminus\!{\bf V}  \}$ are marginals of a single state $\sigma_{{ \bf V}}$, so that 
\begin{align}
 \mathbb{1} - \sigma_{A} - \sigma_{B} - \sigma_{C} 
 + \sigma_{AB} + \sigma_{AC} + \sigma_{BC}\ge 0.
\label{quantumBellWigner}
\end{align}
In this form, the operator inequality  is clearly the quantum counterpart of Eq.~\eqref{BellWignerInequality}.  We refer to it as a {\em quantum marginal compatibility inequality.} 

Now consider a state $\rho_{A_1 B_1 C_1}$ on the systems that appear  in the $AB$-Cut inflation DAG, namely, $A_1$, $B_1$ and $C_1$.  The quantum marginal compatibility inequality above implies that 
\begin{align}
&\mathbb{1}
 - \rho_{A_1} -\rho_{B_1}-\rho_{C_1} + \rho_{A_1 B_1}+ \rho_{A_1 C_1}+ \rho_{B_1 C_1}\ge 0.
\label{quantumBellWignerInequality}
 \end{align} 
As with the constraint of Eq.~\eqref{BellWignerInequalityprime} in the classical case, this constraint has nothing to do with the causal structure of the $AB$-Cut inflation DAG, but we can combine it with such a constraint to obtain a causal compatibility inequality.

Again, we leverage the fact that, in the $AB$-Cut inflation DAG, $A_1$ and $B_1$ are d-separated given the empty set. It follows that any state $\rho_{A_1 B_1 C_1}$ that is compatible with this DAG must satisfy
\begin{align}\label{eq:quantumdsepABcut}
\rho_{A_1 B_1}=\rho_{A_1} \otimes \rho_{B_1}.
\end{align}

To get a nontrivial causal compatibility inequality for the $AB$-Cut inflation DAG, we combine the equality constraint of Eq.~\eqref{eq:quantumdsepABcut} with the quantum marginal compatibility inequality of Eq.~\eqref{quantumBellWignerInequality} to obtain:
 \begin{align}
\mathbb{1} - \rho_{A_1} -\rho_{B_1}-\rho_{C_1} + \rho_{A_1}\otimes \rho_{B_1}+ \rho_{A_1 C_1}+ \rho_{B_1 C_1} \ge 0.\label{quantumBellWignerInequalityprimeprime}
\end{align}

Finally, we can use this together with~\Cref{maincorollary} to obtain a causal compatibility inequality for the original triangle scenario DAG.  Because the inequality is a polynomial function of $\rho_{A_1} , \rho_{B_1}, \rho_{C_1}, \rho_{A_1 C_1}$ and $\rho_{B_1 C_1}$ and because these are all marginals on {\em injectable sets} of visible nodes in the inflation DAG, the inequality having the same functional form as Eq.~\eqref{BellWignerInequalityprimeprime} but where $A_1$, $B_1$ and $C_1$ are replaced by $A$, $B$ and $C$ respectively, 
 \begin{align}
\mathbb{1} - \rho_{A} -\rho_{B}-\rho_{C} + \rho_{A}\otimes \rho_{B}+ \rho_{A C}+ \rho_{B C} \ge 0\label{quantumABCutBellWignerInequality}
\end{align}
is a causal compatibility inequality for the triangle scenario. 

We refer to this operator inequality as a {\em triangle-compatibility inequality} for the quantum state $\rho_{ABC}$.   It is convenient to write this as
\begin{align}
I_{AB}(\rho_{ABC}) \ge 0,
\end{align}
where
 \begin{align}\label{eq:witness}
I_{AB}(\rho_{ABC})  := \mathbb{1} - \rho_{A} - \rho_{B} - \rho_{C} + \rho_{A} \otimes \rho_{B} + \rho_{AC} + \rho_{BC}.
\end{align}

Summarizing, if the state $\rho_{ABC}$ is triangle-compatible, then $I_{AB}(\rho_{ABC})$
 is a positive operator. Conversely, if $I_{AB}(\rho_{ABC})$ is found to be non-positive, then $\rho_{ABC}$ is triangle-incompatible. Note that $I_{AB}(\rho_{ABC})$
can be a positive operator even though $\rho_{ABC}$ is triangle-incompatible, which is why the nonpositivity of  $I_{AB}(\rho_{ABC})$ is only a necessary and not a sufficient condition for triangle-incompatibility.   
We refer to the operator $I_{AB}(\rho_{ABC})$ as a {\em triangle-incompatibility witness}.

As we will also be considering Cut inflations between the visible nodes $A$ and $C$ and between $B$ and $C$, 
it is convenient to define, for a given $\rho_{ABC}$,
\begin{align}
I_{xy}(\rho_{ABC}) = \mathbb{1} - \rho_x  - \rho_y - \rho_z + \rho_{x}\otimes \rho_{y}+ \rho_{xz}+ \rho_{yz} \label{eq:q_I_xy}
\end{align}
where  $x,y,z \in \{A,B,C\}$ such that $x\ne y\ne z$ and where we are suppressing tensor products with identity operators. For each type of cut, we have a corresponding triangle-incompatibility  inequality, $I_{xy}(\rho_{ABC})\ge0$.

It is worth noting that $I_{xy}(\rho_{ABC})$ can be equivalently expressed as
\begin{align}
I_{xy}(\rho_{ABC})   = \Delta(\{\rho_{\bf X}: {\bf X}\in 2^{\bf V}\!\setminus\!{\bf V}  \}) + \rho_{x} \otimes \rho_{y} - \rho_{xy}.
\end{align}

\subsubsection{Some technical results}

Before proceeding to our general results, we note a few useful facts about the  operators $I_{xy}(\rho_{ABC})$. Since $\Delta(\{\rho_{\bf X}: {\bf X}\in 2^{\bf V}\!\setminus\!{\bf V}  \})$ is guaranteed to be positive, the only chance for demonstrating that $I_{xy}(\rho_{ABC})$ is non-positive must derive from the non-positivity of $\rho_{x} \otimes \rho_{y} - \rho_{xy}$. We note the following:
\begin{Lemma} \label{lem:xy_minus_x_y} 
If $\rho_{xy} \neq \rho_{x} \otimes \rho_{y}$ then $\rho_{x} \otimes \rho_{y} - \rho_{xy} \ngeq  0$
\end{Lemma}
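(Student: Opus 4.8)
The plan is to exploit the fact that $\rho_{x} \otimes \rho_{y} - \rho_{xy}$ is a \emph{traceless} Hermitian operator, together with the elementary observation that a positive semidefinite operator of vanishing trace must be the zero operator. This turns the lemma into a one-line contrapositive argument.

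First I would set $D := \rho_{x} \otimes \rho_{y} - \rho_{xy}$ and note that $D$ is Hermitian, being the difference of two density operators. Next I would compute $\Tr D$. Since $\rho_{xy}$ is a quantum state it has unit trace, and its marginals are $\rho_{x} = \Tr_{y}\rho_{xy}$ and $\rho_{y} = \Tr_{x}\rho_{xy}$; because the partial trace preserves the total trace, both $\rho_{x}$ and $\rho_{y}$ are themselves unit-trace, so $\Tr(\rho_{x}\otimes\rho_{y}) = (\Tr\rho_{x})(\Tr\rho_{y}) = 1$. Hence $\Tr D = 1 - 1 = 0$.

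Then I would argue by contraposition. Suppose $D \geq 0$. A positive semidefinite operator has nonnegative eigenvalues, and if in addition its trace, which is the sum of its eigenvalues, is zero, then every eigenvalue must equal zero; therefore $D = 0$, i.e. $\rho_{xy} = \rho_{x}\otimes\rho_{y}$. Taking the contrapositive of this implication yields exactly the statement of the lemma: if $\rho_{xy} \neq \rho_{x}\otimes\rho_{y}$, then $\rho_{x}\otimes\rho_{y} - \rho_{xy} \ngeq 0$.

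There is no substantive obstacle here; the whole content reduces to the trace computation and the standard ``traceless and positive semidefinite implies zero'' fact. The only point deserving a moment's care is the factorization $\Tr(\rho_{x}\otimes\rho_{y}) = (\Tr\rho_{x})(\Tr\rho_{y})$, which together with trace-preservation of the partial trace guarantees $D$ is traceless; everything else is immediate.
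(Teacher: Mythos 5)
Your proof is correct, and it is genuinely different from (and considerably more elementary than) the one in the paper. The paper proves this lemma by contradiction using the max relative entropy: assuming $\rho_{xy} \leq \rho_{x}\otimes\rho_{y}$, it invokes positivity of $D_{\max}(\rho_{xy}\|\rho_{x}\otimes\rho_{y})$ for unit-trace arguments, faithfulness (equality to zero iff the two operators coincide), and the fact that operator domination $\rho \leq \sigma$ forces $D_{\max}(\rho\|\sigma)\leq 0$, these properties being imported from the theory of sandwiched R\'enyi relative entropies via the limit $\alpha\to\infty$. Your argument replaces all of that machinery with the observation that $D := \rho_{x}\otimes\rho_{y}-\rho_{xy}$ is Hermitian and traceless, so $D \geq 0$ would force every eigenvalue of $D$ to vanish, hence $D=0$; the contrapositive is the lemma. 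Both arguments are sound, but yours is self-contained, requires nothing beyond the spectral theorem and $\Tr(\rho_{x}\otimes\rho_{y})=(\Tr\rho_{x})(\Tr\rho_{y})=1$, and would be the more natural proof to include in the paper. What the entropic route buys, if anything, is a quantitative refinement (a strictly positive value of $D_{\max}(\rho_{xy}\|\rho_{x}\otimes\rho_{y})$ measures the extent to which $\rho_{xy}$ fails to be dominated by $\lambda\,\rho_{x}\otimes\rho_{y}$ for $\lambda$ near $1$) and a formulation that transfers to settings where one works with relative-entropy inequalities anyway; but for the qualitative statement actually claimed, your trace argument is strictly simpler and fully rigorous.
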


The proof is given in~\Cref{app:proof_lem_xy_minus_x_y}. Considered as an operator on $\H_{x} \otimes \H_{y}$, let us write
\begin{align}
\rho_{x} \otimes \rho_{y} - \rho_{xy} = \nu_{xy}^{+} - \nu_{xy}^{-}\label{nus}
\end{align}
where the $\{ \nu_{xy}^{\pm} \}_{xy\in \{AB,AC,BC\}}$ are positive semidefinite and where $\nu_{xy}^+$ is mutually orthogonal to $\nu_{xy}^-$ relative to the Hilbert-Schmidt inner product. The above lemma guarantees that $\nu_{xy}^{-}$ is non-zero and also positive semi-definite whenever $\rho_{xy} \neq \rho_{x} \otimes \rho_{y}$. Clearly, if $\rho_{ABC}$ is such that there exists some $\ket{\Phi} \in \supp(\nu_{xy}^{-} \otimes \mathbb{1}_{z}) \cap \ker(\Delta(\{\rho_{\bf X}: {\bf X}\in 2^{\bf V}\!\setminus\!{\bf V}  \}))$ for some  $x,y,z \in \{A,B,C\}$ such that $x \neq y \neq z$ then
\begin{align}
\braket{\Phi| I_{xy}(\rho_{ABC})|\Phi} < 0,
\end{align}
demonstrating that $\rho_{ABC}$ is triangle-incompatible. Despite the fact that the existence of such a $\ket{\Phi}$ is not \textit{a priori} necessary for demonstrating the triangle-incompatibility of $\rho_{ABC}$, it is sufficient, so this constitutes  a fruitful method for deriving such results, as we will see further on. 
  We state it as a proposition for later reference:
\begin{Proposition} \label{prop:supp_ker_intersect}
Let $\rho_{ABC}$ be a quantum state on the set of visible nodes ${\bf V}=\{A,B,C\}$, and let $\Delta(\{\rho_{\bf X}: {\bf X}\in 2^{\bf V}\!\setminus\!{\bf V}  \})$ and $\nu_{xy}^{-}$ be defined in terms of it via Eqs.~\eqref{eq:Hall_tripartite} and \eqref{nus}.
 If 
\begin{align}
\supp(\nu_{xy}^{-} \otimes \mathbb{1}_{z}) \cap \ker(\Delta(\{\rho_{\bf X}: {\bf X}\in 2^{\bf V}\!\setminus\!{\bf V}  \})) \neq \emptyset \label{eq:supp_ker_intersect}
\end{align}
for some $x,y,z \in \{A,B,C\}$ such that $x\neq y \neq z$,
 then $\rho_{ABC}$ is triangle-incompatible. 
\end{Proposition}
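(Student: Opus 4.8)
The plan is to exhibit a single test vector on which the triangle-incompatibility witness $I_{xy}(\rho_{ABC})$ has a strictly negative expectation value; by the contrapositive of the compatibility inequality $I_{xy}(\rho_{ABC}) \ge 0$ established in Sec.~\ref{subsec:q_marginal_ineqs}, this immediately forces $\rho_{ABC}$ to be triangle-incompatible. The natural starting point is the rewriting noted just after Eq.~\eqref{eq:q_I_xy},
\[
I_{xy}(\rho_{ABC}) = \Delta + \left(\nu_{xy}^{+} - \nu_{xy}^{-}\right)\otimes \mathbb{1}_{z},
\]
where I abbreviate $\Delta \equiv \Delta(\{\rho_{\mathbf{X}}: \mathbf{X}\in 2^{\mathbf{V}}\setminus\mathbf{V}\})$ of Eq.~\eqref{eq:Hall_tripartite} and where $\nu_{xy}^{\pm}$ are the positive and negative parts from Eq.~\eqref{nus}. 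This decomposition isolates $-\nu_{xy}^{-}\otimes\mathbb{1}_{z}$ as the only source of non-positivity, which is precisely why the hypothesis concerns $\supp(\nu_{xy}^{-}\otimes\mathbb{1}_{z})$.

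Taking a nonzero normalized $\ket{\Phi}$ in the intersection $\supp(\nu_{xy}^{-}\otimes\mathbb{1}_{z})\cap\ker(\Delta)$, I would expand $\braket{\Phi|I_{xy}(\rho_{ABC})|\Phi}$ into three contributions. The first term vanishes outright, $\braket{\Phi|\Delta|\Phi}=0$, since $\ket{\Phi}\in\ker(\Delta)$ and $\Delta\ge 0$. The key step is to show that the $\nu_{xy}^{+}$ contribution also vanishes. Here I would use that $\nu_{xy}^{+}$ and $\nu_{xy}^{-}$ are positive semidefinite and orthogonal with respect to the Hilbert--Schmidt inner product, i.e.\ $\Tr(\nu_{xy}^{+}\nu_{xy}^{-})=0$; for positive semidefinite operators this is equivalent to $\nu_{xy}^{+}\nu_{xy}^{-}=0$, so their ranges are orthogonal subspaces. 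Tensoring each with $\mathbb{1}_{z}$ preserves this orthogonality, whence $\supp(\nu_{xy}^{+}\otimes\mathbb{1}_{z})\perp\supp(\nu_{xy}^{-}\otimes\mathbb{1}_{z})$, and since $\ket{\Phi}$ lies in the latter support we get $\braket{\Phi|\nu_{xy}^{+}\otimes\mathbb{1}_{z}|\Phi}=0$.

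It remains to handle the third term. Because the support of a positive semidefinite operator is the orthogonal complement of its kernel, any nonzero vector lying in $\supp(\nu_{xy}^{-}\otimes\mathbb{1}_{z})$ has strictly positive expectation, so $\braket{\Phi|\nu_{xy}^{-}\otimes\mathbb{1}_{z}|\Phi}>0$ (note that Lemma~\ref{lem:xy_minus_x_y} guarantees $\nu_{xy}^{-}\neq 0$ whenever the intersection is nontrivial). Collecting the three contributions yields $\braket{\Phi|I_{xy}(\rho_{ABC})|\Phi} = -\braket{\Phi|\nu_{xy}^{-}\otimes\mathbb{1}_{z}|\Phi}<0$, so $I_{xy}(\rho_{ABC})\not\ge 0$ and $\rho_{ABC}$ is triangle-incompatible. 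I do not expect a substantive obstacle: the argument is a direct expectation-value computation, and the only points needing care are the passage from Hilbert--Schmidt orthogonality of $\nu_{xy}^{\pm}$ to orthogonality of their supports (and its stability under $\otimes\,\mathbb{1}_{z}$), and the convention that $\ket{\Phi}$ denotes a genuine nonzero vector rather than the trivial element common to any pair of subspaces.
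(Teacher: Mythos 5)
Your proof is correct and takes essentially the same route as the paper: the paper's own (terse) justification, given in the text immediately preceding the proposition, is exactly that any nonzero $\ket{\Phi}$ in the intersection yields $\braket{\Phi|I_{xy}(\rho_{ABC})|\Phi}<0$, and your three-term expansion --- the $\Delta$ term vanishing on $\ker(\Delta)$, the $\nu_{xy}^{+}\otimes\mathbb{1}_{z}$ term vanishing by support orthogonality, and the $\nu_{xy}^{-}\otimes\mathbb{1}_{z}$ term being strictly positive on its support --- is precisely the computation the paper leaves implicit behind the word ``Clearly''. One cosmetic remark: your appeal to Lemma~\ref{lem:xy_minus_x_y} is unnecessary, since a nontrivial intersection already forces $\supp(\nu_{xy}^{-}\otimes\mathbb{1}_{z})\neq\{0\}$ and hence $\nu_{xy}^{-}\neq 0$.
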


One further advantage of evaluating compatibility with the triangle scenario by evaluating whether the operator $I_{xy}(\rho_{ABC})$ is  positive semi-definite consists in the finer-grained understanding of \textit{how} incompatibility of a given state can be shown in certain cases. If there exists a pure product state $\ket{\Phi} = \ket{\phi}_{A}\ket{\phi'}_{B}\ket{\phi''}_{C}$ such that $\braket{\Phi| I_{xy}(\rho_{ABC}) |\Phi} < 0$, then it is possible to get a verdict of triangle-incompatibility for $\rho_{ABC}$ purely by local measurement statistics, i.e., by measuring locally on $A$ according to the basis determined by $\ket{\phi}_{A}$, and similarly for $B$ and $C$.
 We will see shortly why the question of whether one can witness incompatibility using the distribution obtained from local measurements is significant.

\subsubsection{Some states whose triangle-incompatibility is witnessed}
 \label{subsec:state_vs_distr}\label{sec:witnessable}

 We begin by considering quantum states $\rho_{ABC}$ that simply encode a classical distribution.  Any quantum state that is diagonal in the computation basis, for instance, merely encodes a classical distribution over binary variables.  The quantum state encoding the GHZ distribution is:
\begin{align}\label{eq:incGHZstate}
\rho^{({\rm GHZdistn})} &= \frac{1}{2} \left( |000\rangle \! \langle 000|_{ABC} +|111\rangle \! \langle 111|_{ABC} \right)\nonumber\\
&:= \sum_{a,b,c} P^{({\rm GHZ})}(a,b,c)  |abc\rangle \! \langle abc|_{ABC},
\end{align}
while the one encoding the W distribution is
\begin{align}\label{eq:incWstate}
\rho^{({\rm Wdistn})} &:= \frac{1}{3} ( |100\rangle\!\langle 100|_{ABC} +|010\rangle\!\langle 010|_{ABC} \nonumber\\
&\quad \quad \quad +|001\rangle\!\langle 001|_{ABC} )\nonumber\\
&= \sum_{a,b,c} P^{({\rm W})}(a,b,c)  |abc\rangle\! \langle abc|_{ABC}.
\end{align}
Note that asking about the triangle-compatibility of a state that encodes a distribution in a quantum causal model is not equivalent to asking about the triangle-compatibility of the corresponding distribution in a classical causal model.  This is because in the first case, one allows the latent nodes to be quantum while in the second case they are constrained to be classical, and    it can happen that a given distribution is realizable in the first but not the second type of model.  Indeed, for the case of the triangle scenario, the existence of a gap in the sorts of distributions that are realizable using quantum latent nodes versus those that are realizable using classical latent nodes was shown in  Ref.~\cite{fritz2012beyond} and further studied in Refs.~\cite{fraser2018causal,polino2023experimental}.

To settle questions about  $G$-compatibility of a state that encodes a distribution, it suffices to use a version of the inflation technique that tests for the $G$-compatibility of a distribution on the visible nodes when the latent nodes are allowed to be quantum.  The standard inflation technique provides the means of doing so: as long as one considers a nonfanout inflation, the technique can witness $G$-incompatibility of a distribution even if the latent nodes are quantum~\cite{WolfeSpekkensFritz_2019}.  One can also consider latent nodes that are quantum using fanout inflations by making use of the quantum inflation technique introduced in Ref.~\cite{wolfe_q_inflation}.  The point is that fully quantum inflation is not required for states that encode a distribution.  

In the case of the GHZ distribution, its triangle-incompatibility was established in Ref.~\cite{WolfeSpekkensFritz_2019} using the standard inflation technique with the Cut inflation.  Given that the latter is nonfanout, its incompatibility holds even for quantum latent nodes. 

As noted previously, the triangle-incompatibility of the W distribution can also be established using a nonfanout inflation, namely, the Ring inflation (see the discussion in footnote~\ref{footnote:ringinflation}), so the triangle-incompatibility of the W distribution also holds even for quantum latent nodes.

 Whether fully quantum inflation might provide a {\em more efficient means} of witnessing triangle-incompatibility for such states  remains a question for future research.  In any case, we move on now to consider quantum states that are not merely encoding probability distributions.

Consider the GHZ state
\begin{align}\label{eq:GHZstate}
|{\rm GHZ}\rangle := \frac{1}{\sqrt{2}} \left( |000\rangle_{ABC} +|111\rangle_{ABC} \right). 
\end{align}
and the W state,
\begin{align}\label{eq:Wstate}
|{\rm W}\rangle := \frac{1}{\sqrt{3}} \left( |100\rangle_{ABC} +|010\rangle_{ABC} +|001\rangle_{ABC} \right),
\end{align}
which differ from the states in Eqs.~\eqref{eq:incGHZstate} and \eqref{eq:incWstate} by being coherent superpositions of the relevant states rather than incoherent mixtures. 
Both of these are found to still violate the triangle-compatibility inequality obtained from fully quantum inflation, i.e.,  Eq.~\eqref{quantumABCutBellWignerInequality}, and so both 
are triangle-incompatible.  This can be verified by direct computation (or inferred from Theorem~\ref{thm:main}).  

We again pause to ask whether fully quantum inflation was {\em necessary} to reach the conclusion of triangle-incompatibility for these states.  

In fact, fully quantum inflation is {\em not} needed to see that the GHZ state is triangle-incompatible. It suffices to note that if one implements local measurements in the computation basis on the GHZ state, one prepares the GHZ distribution, or, equivalently, the quantum state that encodes the GHZ distribution, Eq.~\eqref{eq:incGHZstate}.  Such local measurements are consistent with the causal structure of the triangle scenario, so the possibility of transforming the GHZ state into the state encoding the GHZ distribution in this way means that if the GHZ state is realizable in a quantum causal model within the triangle scenario then so is the state encoding the GHZ distribution.  The contrapositive of this statement is that if the state encoding the GHZ distribution is triangle-incompatible then so is the GHZ state. But we established earlier in this section that the state encoding the GHZ distribution is indeed triangle-incompatible.  Note, moreover, that the triangle-incompatibility of the state encoding the GHZ distribution can be established with the Cut inflation. 

In the case of the W state, fully quantum inflation is also not needed to establish triangle-incompatibility.  Again, we can implement local measurements in the computation basis to convert the W state into the state encoding the W distribution and then make use of the fact that the state encoding the W distribution is shown to be triangle-incompatible using the standard inflation technique with a nonfanout inflation, as noted above.  

Unlike the situation with the GHZ distribution, the conclusion of triangle-incompatibility for the W distribution cannot be obtained using the Cut inflation DAG; one must instead use a nonfanout inflation that lies higher in the hierarchy of inflations~\cite{Navascues_20}, termed the Ring inflation.  Nonetheless, the question remains of whether some {\em other} choice of local measurement on the W state might lead  to a distribution whose triangle-incompatibility is witnessable by the Cut inflation.\footnote{In general, for those states whose triangle-incompatibilty is distribution-witnessable by the Cut inflation,  we do not expect {\em all} choices of local measurements to yield a distribution that witnesses this triangle-incompatibility.  For instance, even though the triangle-incompatibility of the GHZ state is distribution-witnessable using the $AB$-Cut inflation by implementing a measurement of the computation basis on each qubit, it is not distribution-witnessable if one instead measures the  $|+\rangle, |-\rangle$ basis on each qubit. }  
This does indeed turn out to be the case! By locally measuring each subsystem of the W state in the Pauli-$X$ basis, a probability distribution is produced that can be demonstrated to be triangle-incompatible using the $AB$-Cut inflation (some details of this calculation are presented in~\Cref{app:distn_witn_W_state}).

Generalizing  beyond the triangle scenario example,  two questions arise regarding whether a conclusion of $G$-incompatibility for some quantum state  could have been reached without requiring the `big guns' of fully quantum inflation.  The first question is whether it could have been reached just using the standard inflation technique for some nonfanout inflation DAG~\cite{WolfeSpekkensFritz_2019} or using the quantum inflation technique for some fanout inflation DAG~\cite{wolfe_q_inflation}.   The second question is whether this same conclusion could have been reached 
using the standard inflation technique with the very same inflation DAG that was used to establish the conclusion using the fully quantum inflation technique.  More precisely, the second question is: if the $G$-incompatibility of a {\em state} can be witnessed by an inflation $G'$, is there always some choice of local measurements yielding a distribution whose $G$-incompatibility can also be witnessed by $G'$?  

When one can prove incompatibility of a state by leveraging the incompatibilty of a distribution obtained from this state by local measurements, we will say that the state incompatibility is {\em distribution-witnessable}.  
We will also say that the proof of incompatibility of a state has {\em piggybacked} on a proof of incompatibility of a distribution. 

Using this terminology, the first question is whether there are examples of $G$-incompatibility results for states that are not distribution-witnessable, and the second question is whether there are examples of $G$-incompatibility results for states that are not distribution-witnessable {\em using the same inflation DAG}. 

At issue here is whether a test of state incompatibility leveraging the quantum marginal problem, i.e., the fully quantum inflation technique, is  strictly more powerful than a test based on results about the classical marginal problem.

In this article, we will focus on the second question.   In Section~\ref{sec:redux}, we will show that it receives a negative answer:   there are instances of $G$-incompatibility results for states that are not distribution-witnessable using the same inflation DAG.

We do not settle the first question here.  Consequently, it might
be the case that any given $G$-incompatibility result for a state obtained using an inflation DAG $G'$ can always be inferred from the $G$-incompatibility of a distribution obtained from the state by some local measurements, by using a different inflation DAG $G''$
that is at a higher level than $G'$ in the inflation heirarchy of Ref.~\cite{Navascues_20}.  It is worth noting, however, that even if this is the case,  the utility of this sort of piggybacking technique  is limited.  First, one requires an algorithm for determining the correct choice of local measurements.     Second, the computational cost of a compatibility test increases significantly as one moves up the inflation hierarchy of Ref.~\cite{Navascues_20}.  Both facts imply that there is a computational advantage to testing state compatibility directly using fully quantum inflation.

\section{Qubit visible nodes} \label{sec:qubit_nodes}

In this section, we begin to showcase the utility of these new ways of witnessing the incompatibility of a quantum state with a given causal structure. We consider the triangle scenario where each of the visible nodes corresponds to a qubit  and we investigate the compatibility of both pure (\Cref{subsec:qubit_pure}) and mixed (\Cref{subsec:qubit_mixed}) three-qubit states. In~\Cref{sec:redux}, we consider the question of  which states have distribution-witnessable incompatibility for a given inflation DAG, and show that there are examples of incompatibility that are not distribution-witnessable.

\subsection{The quantum compatibility problem for pure states} \label{subsec:qubit_pure}

It is straightforward to see that any tripartite pure state that factorizes (i.e., is a product state) across some bipartition ($A|BC$, $B|AC$, or $C|AB$) is compatible with the triangle scenario. For example, suppose that $\rho_{ABC}$ is the state $\ket{\phi}\!\!\bra{\phi}_{AB} \otimes \ket{\phi'}\!\!\bra{\phi'}_{C}$ for some $\ket{\phi}_{AB}$ and $\ket{\phi'}_{C}$. It suffices to illustrate parameters for the causal model such that the condition of Eq.~\eqref{eq:q_triangle_compat} can be satisfied:     (i) let  $\H_{\MB} \cong \H_{B}$, $\H_{\MA} \cong \H_{A}$ and $\H_{\NC} \cong \H_{C}$,  where $\cong$ denotes the existence of an isometry, let $\H_{\NB}$ and $\H_{L}$ be arbitrary (for example, they could be 1-dimensional), and take $\varrho_{\MA \MB} = \ket{\phi}\!\!\bra{\phi}_{\MA \MB}$,  $\varrho_{\NB \NC} =  \varrho_{\NB} \otimes \ket{\phi'}\!\!\bra{\phi'}_{\NC} $ with $\varrho_{\NB}$ arbitrary, and $\varrho_{L}$ similarly arbitrary; 
 (ii) take
            $\E_{A|\LA \MA} = {\rm Tr}_{\LA} \otimes \mathcal{I}_{A|\MA} $, 
           i.e., an identity channel from $\H_{\MA}$ to $\H_{A}$ and a trace over $\H_{\LA}$, 
   $\E_{B|\MB \NB}= \mathcal{I}_{B|\MB} \otimes {\rm Tr}_{\NB}$, i.e., 
           an identity channel from $\H_{\MB}$ to $\H_{B}$ and a trace over $\H_{\NC}$, and 
         $\E_{C|\LC \NC} = {\rm Tr}_{\LC} \otimes \mathcal{I}_{C|\NC}$, i.e., an identity channel from $\H_{\NC}$ to $\H_{C}$ and a trace over $\H_{\LC}$.

Furthermore, among three-qubit pure states, those that factorize across a bipartition (i.e., the biseparable ones) are the {\em only} states that are triangle-compatible. In other words, in the case of three-qubit pure states, the boundary between triangle-compatible and triangle-incompatible coincides precisely with the boundary between those that factorize across a bipartition and those that do not. 
\begin{Theorem} \label{thm:main} 
For $A$, $B$ and $C$ qubits and a joint quantum state that is pure, $\rho_{ABC} = \ket{\psi}\!\!\bra{\psi}_{ABC}$, the state is triangle-compatible    if and only if it is factorizing across a bipartition (i.e., it is a biseparable pure state). 
\end{Theorem}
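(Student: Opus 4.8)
The statement splits into two implications. The reverse implication (biseparable $\Rightarrow$ triangle-compatible) is already settled by the explicit causal model exhibited just before the theorem, so the plan is to prove the forward implication through its contrapositive: every pure three-qubit state $\ket{\psi}_{ABC}$ that factorizes across \emph{no} bipartition is triangle-incompatible. Equivalently, I would assume that all three single-party marginals $\rho_A$, $\rho_B$, $\rho_C$ have rank $2$ (rank $1$ for any of them is exactly factorizability across the corresponding cut) and derive incompatibility. Since composing each local channel $\E_{x|\cdots}$ in \eqref{eq:q_triangle_compat} with a local unitary preserves triangle-compatibility, I may first bring $\ket{\psi}$ to a canonical form by local unitaries without loss of generality, e.g.\ the five-term generalized Schmidt form $\ket{\psi}=\lambda_0\ket{000}+\lambda_1 e^{i\varphi}\ket{100}+\lambda_2\ket{101}+\lambda_3\ket{110}+\lambda_4\ket{111}$.

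The engine of the argument is \Cref{prop:supp_ker_intersect}: to conclude triangle-incompatibility it suffices to produce, for a single cut $xy \in \{AB,AC,BC\}$, a vector $\ket{\Phi} \in \supp(\nu_{xy}^{-} \otimes \mathbb{1}_{z}) \cap \ker\Delta$, where $\Delta$ abbreviates the Hall operator of \eqref{eq:Hall_tripartite} and $\nu_{xy}^{-}$ is the negative part defined in \eqref{nus}; such a $\ket{\Phi}$ forces $\braket{\Phi|I_{xy}(\rho_{ABC})|\Phi}<0$ for the witness \eqref{eq:witness}. I would first record that, under the rank-$2$ hypothesis, each two-party marginal $\rho_{xy}$ has rank $2$ (equal to $\rank\rho_z$) while $\rho_x\otimes\rho_y$ has rank $4$, so $\rho_{xy}\neq\rho_x\otimes\rho_y$ and \Cref{lem:xy_minus_x_y} guarantees $\nu_{xy}^{-}\neq 0$ for every cut. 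The nontrivial task is therefore to locate a vector of $\ker\Delta$ inside the padded support $\supp(\nu_{xy}^{-})\otimes\H_z$ for at least one cut.

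To get a handle on $\ker\Delta$ I would exploit that for a pure state $\rho_{ABC}=\ket{\psi}\!\bra{\psi}$ Hall's operators obey $\Delta=\Lambda+\ket{\psi}\!\bra{\psi}$ with $\Lambda\ge 0$; since both summands are positive semidefinite, $\ker\Delta=\ker\Lambda\cap\{\ket{\psi}\}^{\perp}$, which I would compute explicitly in the canonical parametrization. In parallel I would diagonalize the $4\times 4$ operator $\rho_x\otimes\rho_y-\rho_{xy}$ on $\H_x\otimes\H_y$ to read off $\supp(\nu_{xy}^{-})$, and then show that for genuinely entangled $\ket{\psi}$ the subspace $\supp(\nu_{xy}^{-})\otimes\H_z$ meets $\ker\Delta$ for some cut. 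Because the relevant family is low-dimensional, I expect it to be cleanest to organize this verification by SLOCC type, treating the GHZ class (nonzero three-tangle) and the W class (vanishing three-tangle but genuine entanglement) separately and reducing each, via local unitaries, to a canonical subfamily in which the intersection vector can be written down, degenerating to the known GHZ and W witnesses at the symmetric points.

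The step I expect to be hardest is this intersection claim holding \emph{uniformly} across the whole genuinely entangled family rather than only at symmetric representatives. Two features make it delicate. First, the witnessing vector $\ket{\Phi}$ generally cannot be taken to be a product state---the later observation that certain incompatibilities are not distribution-witnessable shows a product $\ket{\Phi}$ need not exist---so the test vector must itself be entangled and cannot be guessed from local measurement statistics. Second, I must check that the construction degenerates correctly at the boundary: the nonvanishing conditions used to build $\ket{\Phi}$ should fail precisely when some marginal drops to rank $1$, which both keeps the argument consistent with the reverse direction and confirms that the rank-$2$ hypothesis is exactly the right dividing line. Should a single uniform choice of $\ket{\Phi}$ prove unwieldy, the fallback is to dispatch the GHZ and W SLOCC classes by separate explicit test vectors on their respective canonical subfamilies.
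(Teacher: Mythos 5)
Your skeleton coincides with the paper's: prove the reverse direction by the explicit model already given in the text, and prove the forward direction in contrapositive form by using \Cref{lem:xy_minus_x_y} to get $\nu_{xy}^{-}\neq 0$ for every cut and \Cref{prop:supp_ker_intersect} as the engine, so that everything reduces to exhibiting a vector $\ket{\Phi}\in\supp(\nu_{xy}^{-}\otimes\mathbb{1}_{z})\cap\ker\Delta$ for some cut. The genuine gap is that you never establish this intersection: you defer it to an explicit computation of $\ker\Delta$ and $\supp(\nu_{xy}^{-})$ in the Ac\'in canonical form, organized by SLOCC class, and you yourself flag the uniformity of that computation over the whole multi-parameter family of genuinely entangled states as the step you cannot yet do. That step is not a loose end; it is the entire content of the theorem. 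Moreover, the SLOCC organization buys you essentially nothing: SLOCC (non-unitary) reductions do not preserve the marginals, the witness, or triangle-compatibility, so within each SLOCC class you are still left with the full local-unitary parameter family, and ``degenerating to the known GHZ and W witnesses at the symmetric points'' does not cover the generic member of either class.

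The paper closes exactly this gap with two ingredients that are absent from your plan. First, in place of your decomposition $\Delta=\Lambda+\ket{\psi}\!\bra{\psi}$ with $\Lambda\geq 0$ (true, but it yields no \emph{upper} bound on $\rank\Delta$, hence no lower bound on $\dim\ker\Delta$), the paper imports from Butterley \emph{et al.} the identity of \Cref{eq:Delta_rho_taurho}, $\Delta=\rho+\tau^{-1}\rho\tau$ for an explicit three-qubit antiunitary $\tau$; for pure $\rho$ both terms are rank one, so $\dim\supp(\Delta)=2$ and $\dim\ker\Delta=6$, uniformly in the state and with no canonical-form computation. Second, a single cut cannot be handled by dimension counting alone, since $\dim\supp(\nu_{xy}^{-}\otimes\mathbb{1}_{z})\geq 2$ and $2+6=8$ forces no intersection in $\mathbb{C}^{8}$; the paper's \Cref{lem:lin_ind} shows that the padded negative-eigenvalue eigenvectors drawn from \emph{different} cuts always contain at least three linearly independent vectors, and $3+6>8$ then forces a nontrivial intersection with the kernel. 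Your proposal contains no analogue of this cross-cut linear-independence lemma, which is precisely the idea that converts ``each $\nu_{xy}^{-}$ is nonzero'' into a uniform, case-free conclusion. If you insist on your computational route you would have to carry out the canonical-form calculation in full generality, at which point you would in effect be rediscovering both of these facts.
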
 
The proof  of the `if' half is given above.  The proof of the `only if' half, which is given in~\Cref{app:proof_cor_pure_qubit_incompat}, consists in applying~\Cref{lem:xy_minus_x_y} to each of the marginals of $\rho_{ABC}$ and finding that the sufficient condition of~\Cref{prop:supp_ker_intersect}, namely Eq.~\eqref{eq:supp_ker_intersect}, holds, from which the result follows.

As noted in the introduction, Ref.~\cite{Kraft_21}  also considered the problem of compatibility of tripartite states with the triangle scenario, but restricting attention to local unitaries (from subsystems of the latent nodes to the visible nodes) rather than arbitrary local channels. In other words, as noted in the introduction, Ref.~\cite{Kraft_21}  studied the set of LUSR2WSE-preparable states. For the purpose of preparing {\em pure} tripartite states, the shared randomness is irrelevant, so that the free operations used in Ref.~\cite{Kraft_21} and in the above theorem are LU2WSE and LO2WSE respectively.  It is possible that in the case of preparing {\em pure} tripartite states, arbitrary quantum channels offer no additional power relative to unitary channels. If a proof of this latter claim were found, then our Theorem~\ref{thm:main} could be obtained as a corollary of Observation 5 from Ref.~\cite{Kraft_21}. Unfortunately, we have not been able to settle this question one way or the other.   In any event, the proof technique used here to establish Theorem~\ref{thm:main} is quite different from that used to establish Observation 5 in Ref.~\cite{Kraft_21}.

\subsection{Examples of incompatibility that are not distribution-witnessable}
\label{sec:redux}

In Sec.~\ref{sec:witnessable}, we noted that in assessing incompatibility of a quantum state with a given network via some inflation, it is not necessarily the case that this conclusion could be reached by first converting the state into a distribution through local measurements and then witnessing the incompatibility of this distribution using the same inflation DAG.  We now prove this result, which we formalize by the following proposition:
\begin{Proposition} \label{prop:state_vs_distr} 
There exist states that are $G$-incompatible, but for which this incompatibility is not distribution-witnessable with respect to the same inflation of $G$.
\end{Proposition}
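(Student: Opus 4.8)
The plan is to establish the proposition by exhibiting a single explicit three-qubit state $\rho^{\star}_{ABC}$ that is triangle-incompatible according to the fully quantum Cut inflation witness, yet whose distribution under \emph{every} local measurement satisfies all of the classical Cut-inflation inequalities of Eq.~\eqref{xyCutBellWignerInequality}. Since the very same Cut inflation plays the role of $G'$ in both the quantum and the classical test, such a state directly realizes the claim. To keep both halves tractable I would restrict the search to states with maximally mixed single-party marginals, $\rho_A=\rho_B=\rho_C=\mathbb 1/2$, because this is exactly the regime in which the two tests simplify to clean, comparable conditions.

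For the quantum (incompatibility) half, substituting $\rho_A=\rho_B=\rho_C=\mathbb 1/2$ into Eq.~\eqref{eq:q_I_xy} collapses the witness for the cut $xy$ to $I_{xy}(\rho^{\star}_{ABC})=-\tfrac14\mathbb 1+\rho_{xz}\otimes\mathbb 1_y+\rho_{yz}\otimes\mathbb 1_x$. Hence $\rho^{\star}_{ABC}$ is triangle-incompatible as soon as $\lambda_{\min}\!\big(\rho_{xz}\otimes\mathbb 1_y+\rho_{yz}\otimes\mathbb 1_x\big)<\tfrac14$ for some cut, a fact I would certify via \Cref{prop:supp_ker_intersect} (equivalently, by direct diagonalization) by exhibiting an explicit vector $\ket{\Phi}$ with $\braket{\Phi|I_{xy}(\rho^{\star}_{ABC})|\Phi}<0$, chosen to lie in the kernels of the rank-deficient pair marginals $\rho_{xz}$ and $\rho_{yz}$.

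For the distribution half, I would parametrize an arbitrary local projective measurement by Bloch directions $\hat n_A,\hat n_B,\hat n_C$ and outcomes $a,b,c\in\{\pm1\}$. With maximally mixed marginals the measured singletons are uniform and the pair statistics depend only on the correlation tensors $T_{xy}$ through $\langle\sigma_{\hat n_x}\otimes\sigma_{\hat n_y}\rangle=\hat n_x^{\top}T_{xy}\hat n_z$; substituting into Eq.~\eqref{xyCutBellWignerInequality} cancels the singleton and $\rho_x\otimes\rho_y$ contributions and leaves the clean pointwise value $\tfrac14+\tfrac14\big(ac\,\hat n_x^{\top}T_{xz}\hat n_z+bc\,\hat n_y^{\top}T_{yz}\hat n_z\big)$. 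Minimizing over the outcomes and then the directions shows that the cut-$xy$ inequality is respected by every projective measurement precisely when $\max_{\|\hat n_z\|=1}\big(\|T_{xz}\hat n_z\|+\|T_{yz}\hat n_z\|\big)\le 1$. I would then extend this to general POVMs by noting that POVM effects only damp these correlations — their effective Bloch vectors have norm at most one — so no POVM outcome can produce a larger violation than a projective one. The remaining task is purely to choose pair marginals whose correlation-tensor norms obey this bound for all three cuts while simultaneously driving $\lambda_{\min}$ below $\tfrac14$ for one cut.

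The hard part is that these two requirements pull in opposite directions: weak correlations push each pair marginal toward $\mathbb 1/4$, whose least eigenvalue is exactly $\tfrac14$, so for the most symmetric choices both tests sit at their common boundary and neither strict inequality is attained. Threading the needle therefore requires a deliberately asymmetric or rank-engineered choice of the three pair marginals, and — this is the genuine obstacle — verifying that the chosen triple of marginals is actually realized by a legitimate global density operator $\rho^{\star}_{ABC}$, i.e. solving the associated quantum marginal problem. I would dispatch this last point by writing $\rho^{\star}_{ABC}$ down explicitly (for instance as a small convex mixture of structured states) and computing its marginals directly, after which the minimum-eigenvalue condition and the three correlation-tensor bounds above can each be checked by a finite, elementary computation.
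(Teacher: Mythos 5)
Your proposal is a search strategy, not a proof: the decisive step for an existential claim --- actually exhibiting a state $\rho^{\star}_{ABC}$ that satisfies both of your conditions simultaneously --- is never carried out, and you yourself flag it as ``the genuine obstacle'' before deferring it. The two reductions you derive are correct: with $\rho_A=\rho_B=\rho_C=\mathbb{1}/2$ the witness of Eq.~\eqref{eq:q_I_xy} collapses to $I_{xy}=-\tfrac14\mathbb{1}+\rho_{xz}\otimes\mathbb{1}_y+\mathbb{1}_x\otimes\rho_{yz}$, and the classical inequality \eqref{xyCutBellWignerInequality} under local projective measurements is respected for all bases and outcomes iff $\max_{\hat n_z}\bigl(\|T_{xz}\hat n_z\|+\|T_{yz}\hat n_z\|\bigr)\le 1$. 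But you give no argument that these two conditions are jointly satisfiable within your ansatz, and the tension is real: among pure three-qubit states, maximally mixed marginals force (up to local unitaries) the GHZ state, whose incompatibility \emph{is} distribution-witnessable, and for the isotropic mixtures $p\ket{GHZ}\!\!\bra{GHZ}+(1-p)\mathbb{1}/8$ the two tests coincide exactly (both trigger precisely for $p>\tfrac12$; cf.~\Cref{prop:mixing_value}), so the most natural candidates exhibit no gap. Separately, your POVM step does not go through as stated: the clean cancellation producing your pointwise formula $\tfrac14+\tfrac14(ac\,\hat n_x^{\top}T_{xz}\hat n_z+bc\,\hat n_y^{\top}T_{yz}\hat n_z)$ relies on all measured singleton probabilities equalling $\tfrac12$, which fails for non-projective effects $E_a=\alpha_a(\mathbb{1}+\vec r_a\cdot\vec\sigma)$ since then $Q_x(a)=\alpha_a\neq\tfrac12$ in general; the nonlinear term $Q_x(a)Q_y(b)$ then no longer cancels against the constant.

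For contrast, the paper proves the proposition by explicit example: the tri-Bell family $\ket{\Psi(t)}=\sqrt{(t-2)/t}\,\ket{100}+t^{-1/2}\ket{001}+t^{-1/2}\ket{010}$, whose triangle-incompatibility is shown analytically (via Vieta's formulas, the product of the relevant eigenvalue-roots of $I_{AB}(\rho_{ABC}(t))$ is strictly negative), and whose non-distribution-witnessability is shown by relaxing the minimization over product states to the convex set of PPT states and solving the resulting SDP, which is strictly positive for, e.g., $t\geq 7$. Your route is genuinely different and could in fact be completed, yielding a fully analytic proof: take $\rho^{\star}=\tfrac14\Pi_{\mathrm{sym}}$, the maximally mixed state on the three-qubit symmetric subspace. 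Its single-qubit marginals are $\mathbb{1}/2$ and all pair marginals equal $\tfrac13\Pi^{(2)}_{\mathrm{sym}}=\tfrac14\bigl(\mathbb{1}+\tfrac13\sum_i\sigma_i\otimes\sigma_i\bigr)$, so $T_{xz}=T_{yz}=\tfrac13\,\mathbb{1}_{3\times 3}$. Then $\rho_{xz}\otimes\mathbb{1}_y+\mathbb{1}_x\otimes\rho_{yz}=\tfrac12\mathbb{1}+\tfrac13\,\vec S_z\cdot(\vec S_x+\vec S_y)$ has spectrum $\{\tfrac16,\tfrac12,\tfrac23\}$, so $I_{xy}(\rho^{\star})$ has the negative eigenvalue $-\tfrac{1}{12}$ and the state is triangle-incompatible, while $\max_{\hat n_z}\bigl(\|T_{xz}\hat n_z\|+\|T_{yz}\hat n_z\|\bigr)=\tfrac23\le 1$, so no local projective measurement yields a distribution violating the same Cut inequality. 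Had you supplied this (or any) concrete state together with these finite checks, your argument would stand --- and would arguably be cleaner than the paper's numerical SDP step. As submitted, however, the existence step is missing, and that is a genuine gap.
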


Certain details of the proof are given in~\Cref{app:prop_state_vs_distr}, but the main thrust is given here. The proof considers the family of pure states $\rho_{ABC}(t) = \ket{\Psi(t)}\!\!\bra{\Psi(t)}$ where
\begin{align}
\ket{\Psi(t)} = \frac{\sqrt{t-2}}{\sqrt{t}} \ket{100}  + \frac{1}{\sqrt{t}}\ket{001} + \frac{1}{\sqrt{t}} \ket{010},
\end{align}
with $t$ is a real parameter in the interval $[3,\infty)$. We demonstrate that the triangle-incompatibility witness $I_{AB}(\rho_{ABC}(t))$ is not positive for all $t$ under consideration, meaning that every $\rho_{ABC}(t)$ is triangle-incompatible. To give a  concrete example: taking $t_{*}$ such that 
$\frac{\sqrt{t_{*}-2}}{\sqrt{t_{*}}}= 0.9$ (so that the amplitude of each of the other two terms is $\frac{1}{\sqrt{t_{*}}}= \sqrt{0.095}$), we have that 
\begin{widetext}
\begin{align}
I_{AB}(\rho_{ABC}(t_{*})) = \begin{bmatrix}
0.73305 & 0 & 0 & 0 & 0 & 0.277399 & 0 & 0 \\
 0 & 0.01805 & 0.095 & 0 & 0 & 0 & 0 & 0 \\
 0 & 0.095 & 0.17195 & 0 & 0 & 0 & 0 & 0.277399 \\
 0 & 0 & 0 & 0.07695 & 0 & 0 & 0 & 0 \\
 0 & 0 & 0 & 0 & 0.07695 & 0 & 0 & 0 \\
 0.277399 & 0 & 0 & 0 & 0 & 0.17195 & 0.095 & 0 \\
 0 & 0 & 0 & 0 & 0 & 0.095 & 0.01805 & 0 \\
 0 & 0 & 0.277399 & 0 & 0 & 0 & 0 & 0.73305 
\end{bmatrix}
\end{align}
\end{widetext}
which has a negative eigenvalue $-0.0529889$ of multiplicity $2$.

To demonstrate that there exist values for $t$ such that the triangle-incompatibility of $\rho_{ABC}(t)$ is not distribution-witnessable, let us introduce some notation for probability distributions arising from single-qubit measurements on $\rho_{ABC}(t)$. Let $P_{\phi, \chi, \psi}^{(t)}$ denote the distribution over three binary variables produced by measuring subsystem $A$ in the basis $\{\ket{\phi_{A}}\!\!\bra{\phi_{A}}, \ket{\phi_{A}^{\perp}}\!\!\bra{\phi_{A}^{\perp}} \}$, subsystem $B$ in the basis $\{\ket{\chi_{B}}\!\!\bra{\chi_{B}}, \ket{\chi_{B}^{\perp}}\!\!\bra{\chi_{B}^{\perp}} \}$ and subsystem $C$ in the basis $\{\ket{\psi_{C}}\!\!\bra{\psi_{C}}, \ket{\psi_{C}^{\perp}}\!\!\bra{\psi_{C}^{\perp}} \}$. We associate the outcome `$0$' to the first projection and `$1$' to the second projection in each case, so that
\begin{align}
P_{\phi, \chi, \psi}^{(t)}(000) := \bra{\phi_{A}\chi_{B}\psi_{C}}\rho_{ABC}(t)\ket{\phi_{A}\chi_{B}\psi_{C}}, \\
P_{\phi, \chi, \psi}^{(t)}(001) := \bra{\phi_{A}\chi_{B}\psi_{C}^{\perp}}\rho_{ABC}(t)\ket{\phi_{A}\chi_{B}\psi_{C}^{\perp}},
\end{align}
and so on. With $P_{\phi, \chi, \psi}^{(t)}$ defined in this way, we can consider whether or not $I_{AB}(P_{\phi, \chi, \psi}^{(t)}) \ge \bm{0}$. It is worth noting that 
\begin{align}
I_{AB}(P_{\phi, \chi, \psi}^{(t)}) = \bra{\phi_{A}\chi_{B}\psi_{C}}I_{AB}(\rho_{ABC}(t))\ket{\phi_{A}\chi_{B}\psi_{C}}. \label{eq:distr_state_witness_correspondence}
\end{align}

Ultimately, we aim to show that, for some $t$, $I_{AB}(P_{\phi, \chi, \psi}^{(t)}) \ge \bm{0}$ for {\em every} choice of $\phi$, $\chi$ and $\psi$. Note, however, that there is redundancy in considering all such choices since e.g., the probability vectors $P_{\phi, \chi, \psi}^{(t)}$ and $P_{\phi^{\perp}, \chi, \psi}^{(t)}$ are permutations of one another, so 
\begin{align}
I_{AB}(P_{\phi, \chi, \psi}^{(t)}) \ge \bm{0} \iff I_{AB}(P_{\phi^{\perp}, \chi, \psi}^{(t)}) \ge \bm{0}. 
\end{align}
Accordingly, it suffices to demonstrate that $I_{AB}(P_{\phi, \chi, \psi}^{(t)}(000)) \ge 0$, for every $\phi$, $\chi$ and $\psi$. To this end, let us define
\begin{align}
\iota_{AB}(t) &:= \min_{\ket{\phi_{A}},\ket{\chi_{B}}, \ket{\psi_{C}}} I_{AB}(P_{\phi,\chi,\psi}^{(t)}(000)). \label{eq:min_product_meas}
\end{align}
If $\iota_{AB}(t) \ge 0$, then $I_{AB}(P_{\phi, \chi, \psi}^{(t)}) \ge \bm{0}$ for all $\phi, \chi, \psi$, and the triangle-incompatibility of $\rho_{ABC}(t)$ is not distribution-witnessable (using the $AB$-Cut and the witnesses derived from the inequalities from Ref.~\cite{william_hall_05}).

The problem with trying to determine $\iota_{AB}(t)$ directly is that the optimization is over a non-convex set, namely the set of three-qubit pure product states. It is, however, possible to  consider a relaxation of this optimization. Let $S$ denote the set of all three-qubit states with positive partial transpose over each subsystem, that is:
\begin{align}
S := \{ \varrho_{ABC} : \varrho_{ABC}^{\top_{A}} \geq 0, \varrho_{ABC}^{\top_{B}} \geq 0, \varrho_{ABC}^{\top_{C}} \geq 0 \}.
\end{align}
In particular, $S$ is convex and contains the set of pure product states. Then, we define
\begin{align}
\tilde{\iota}_{AB}(t) &:= \min_{\varrho_{ABC} \in S} \Tr[\varrho_{ABC} I_{AB}(\rho_{ABC}(t))]. 
\end{align}
In light of Eq.~(\ref{eq:distr_state_witness_correspondence}), we are guaranteed that 
\begin{align}
\iota_{AB}(t) \geq \tilde{\iota}_{AB}(t), \label{eq:tilde_leq_non_tilde}
\end{align}
so non-negativity of the latter ensures non-negativity of the former.

The primary convenience of considering this relaxed optimization is that it can be formulated as a semidefinite program (SDP)\footnote{In its complex form, a semidefinite program can be formulated as 
\begin{gather*}
\min_{X} \Tr[Y_{0}X] \text{ subject to} \\
X \ge 0; \quad \Tr[Y_{i}X] \le b_{i}, i = 1, ..., m; 
\end{gather*}
with $X$ and the $Y_{i}$ Hermitian for $i = 0, ..., m$. In the case considered here, $I_{AB}(\rho_{ABC}(t))$ is Hermitian as it is a linear combination of Hermitian operators and the set $S$ is convex (the partial transpose is a linear operation), which induces linear constraints on the variable $\varrho_{ABC}$.}. The results of this optimization are plotted in~\Cref{fig:SDP_relax}, where rather than use the unbounded parameter $t$, we plot the result as a function of the amplitude of the $|011\rangle$ term, that is, as a function of $\frac{\sqrt{t-2}}{\sqrt{t}}$, which is a real parameter in the interval $[1/\sqrt{3}, 1)$.

\begin{figure}
\centering
\includegraphics[width=\columnwidth]{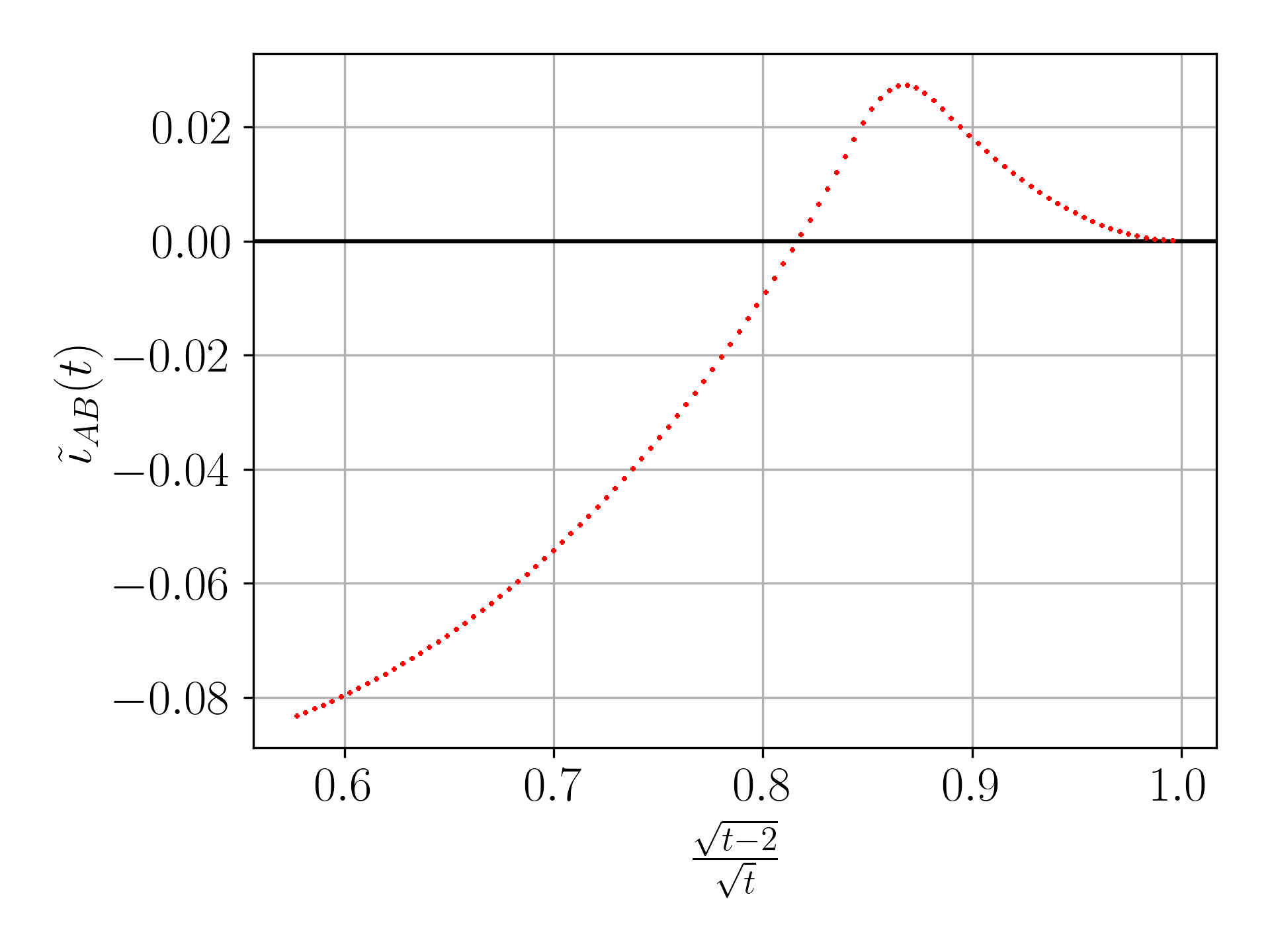}
\caption{The values of the SDP relaxation of the problem of finding local projective measurements witnessing the non-positivity of the $AB$-Cut-Bell-Wigner operators for the parameterized family of states $\rho_{ABC}(t)$. The code generating the depicted data is given in Ref.~\cite{smith_2025_15039119}.} 
\label{fig:SDP_relax}
\end{figure}

As can be seen in the figure, the value of $\tilde{\iota}_{AB}(t)$ is strictly positive when $\frac{\sqrt{t-2}}{\sqrt{t}} \gtrapprox 0.82$. By Eq.~(\ref{eq:tilde_leq_non_tilde}), this implies that the value of $\iota_{AB}(t)$ is also strictly positive in this region, and consequently that the incompatibility of states in this region are not distribution-witnessable by the $AB$-Cut inflation. The specific choice of $t_{*}$ considered above lies in this region, so $\rho_{ABC}(t_{*})$ is an example of a state that is triangle-incompatible, but for which this incompatibility is not distribution-witnessable (using the same Cut inflation and the witnesses $I_{xy}$).

\subsection{The compatibility problem for mixed states} \label{subsec:qubit_mixed}

For the case of pure three-qubit states considered above, all product and biseparable states were shown to be triangle-compatible in~\Cref{subsec:qubit_pure}. However, we have already seen that the analogous statement is not true for mixed states: recall from above that the separable states $\rho^{({\rm GHZdistn})}$ and $\rho^{({\rm Wdistn})}$ that encode the classical distributions $P^{({\rm GHZ})}$ and $P^{({\rm W})}$ respectively, are triangle-incompatible. In general, any state $\rho^{({\rm distn})}$ that encodes a classical distribution in this way is triangle-incompatible if and only if the corresponding distribution is triangle-incompatible.

Since characterizing the precise set of  probability distributions on three binary variables that are compatible with the triangle scenario in a quantum causal model is computationally challenging (for every non-fanout inflation it involves an infinite hierarchy of semi-definite programs~\cite{navascues2020inflation}),  a full classification of the  three-qubit mixed states that are  compatible with the triangle scenario is at least as difficult.  

Nonetheless, one can use the technique described here to easily derive witnesses of incompatibility for mixed quantum states.   

Some of the results of Ref.~\cite{Navascues_20} also provide a means for witnessing triangle-incompatibility, using ideas from the inflation technique.   In fact, Ref.~\cite{Navascues_20} sought to witness incompatibility for a slightly different network: the triangle supplemented by three-way shared randomness. But if one cannot realize a state with the shared randomness, then one clearly cannot realize that state without it, so any incompatibility in the former case implies triangle-incompatibility. 
Specifically, the witnesses considered were based on the fidelities between $\rho_{ABC}$ 
  and the GHZ and W states (defined, respectively, in Eqs.~\eqref{eq:GHZstate} and\eqref{eq:Wstate}). One witnesses the incompatibility of $\rho_{ABC}$ with the shared-randomness-supplemented triangle (and hence with the triangle) whenever these fidelities satisfy 
\begin{gather}\label{eq:fidelitybounds}
\braket{GHZ|\rho_{ABC}|GHZ} \ge \frac{1+\sqrt{3}}{4} \simeq 0.6830, \\
\braket{W|\rho_{ABC}|W} \ge 0.7602,
\end{gather}
Our technique can witness triangle-incompatibilities that are not detected by the fidelity witnesses of Ref.~\cite{Navascues_20}.  This is the case even if we limit ourselves to the lowest level of the hierarchy of nonfanout inflations (i.e., we just consider the Cut inflation), while the bounds on the fidelities just mentioned are derived by considering higher levels of this hierarchy.

Consider the following mixed state:
\begin{align}
\omega_{ABC} = 0.3\ket{\psi_{1}}\!\!\bra{\psi_{1}} + 0.7\ket{0\!+\!+}\!\!\bra{0\!+\!+}    
\end{align}
where
\begin{align}
\ket{\psi_{1}} := 0.9\ket{000} + \sqrt{0.19/2}\ket{101} + \sqrt{0.19/2}\ket{110}.
\end{align}
By a straightforward calculation, one can compute the operator  $I_{AB}(\omega_{ABC}) $ and determine that its eigenvalues are
\begin{align}
\{0.78839,  0.195995, 0.0351218, -0.0195072 \},
\end{align}
each with multiplicity $2$. The occurrence of the negative eigenvalue clearly indicates the non-positivity of $I_{AB}(\omega_{ABC}) $ and hence the triangle-incompatibility of $\omega_{ABC}$. 

On the other hand, the fidelities between $\omega_{ABC}$ and the GHZ and W states  are
\begin{gather}
\braket{GHZ|\omega_{ABC}|GHZ} = 0.209, \\
\braket{W|\omega_{ABC}|W} = 0.233333,
\end{gather} 
and hence these fidelities do not satisfy the condition in Eq.~\eqref{eq:fidelitybounds} for witnessing incompatibility. 

\subsubsection{Probing triangle-incompatibility of one-parameter families of mixed states}

We saw for the case of pure three-qubit states that the boundary between compatible and incompatible states coincided exactly with the boundary between biseparable states and not biseparable states. We also saw above that in the case of mixed states, the compatibility-incompatibility boundary is far less clear as there are separable states that are triangle-incompatible.

One method to probe this boundary further in the case of mixed states is by adding noise to a pure state and seeing how the positivity of a triangle-incompatibility witness behaves. Here, we consider a simple one parameter noise model that adds white noise to a given pure state. Explicitly, for a given pure three-qubit state $\ket{\Psi}$, we consider the family indexed by $p \in [0,1]$ defined by
\begin{align}
\hat{\rho}(\ket{\Psi},p) := p\ket{\Psi}\!\!\bra{\Psi} + \frac{1-p}{8}\mathbb{1}_{ABC} \label{eq:depolarised}
\end{align}
where $\mathbb{1}_{ABC}$ is the identity on $\H_{A} \otimes \H_{B} \otimes \H_{C}$. For any $p < 1$, $\hat{\rho}(\ket{\Psi},p)$ is a full rank state. We have the following result:
 
\begin{Proposition} \label{prop:mixing_value} For any $\ket{\Psi}$, there exists a $q \in [3 - 2\sqrt{2},1)$ such that for all $0 \leq p \leq q$, $\hat{\rho}(\ket{\Psi},p)$ is not demonstrably triangle-incompatible using the triangle-incompatibility witnesses of Eq.~(\ref{eq:q_I_xy}), i.e., $I_{xy}(\hat{\rho}(\ket{\Psi},p)) \ge 0$ for each $xy \in \{AB,AC,BC\}$.
\end{Proposition}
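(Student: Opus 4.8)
The plan is to lean on the decomposition recorded just above the statement, namely $I_{xy}(\rho_{ABC}) = \Delta(\{\rho_{\bf X}: {\bf X}\in 2^{\bf V}\!\setminus\!{\bf V}\}) + \rho_{x}\otimes\rho_{y} - \rho_{xy}$, and to bound each of the two summands below by a scalar multiple of the identity, uniformly in $\ket{\Psi}$. Throughout I would write $\psi_{\bf X} := \Tr_{{\bf V}\setminus{\bf X}}\ket{\Psi}\!\!\bra{\Psi}$ for the marginals of the pure part. Tracing out the white-noise term $\frac{1-p}{8}\mathbb{1}_{ABC}$ over one or two subsystems then gives the marginals of $\hat{\rho}(\ket{\Psi},p)$ as $\hat{\rho}_{X} = p\,\psi_{X} + \frac{1-p}{2}\mathbb{1}_{X}$ and $\hat{\rho}_{XY} = p\,\psi_{XY} + \frac{1-p}{4}\mathbb{1}_{XY}$, which is all the input the two bounds require.

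First I would bound the Hall operator $\Delta(\hat{\rho})$ (meaning $\Delta$ evaluated on the marginals of $\hat\rho$). The key observation is that the map $\sigma \mapsto \Delta(\{\sigma_{\bf X}\})$ is \emph{affine} in $\sigma$: the marginals are linear in $\sigma$ and the only constant term is $\mathbb{1}$. Since $\hat{\rho}(\ket{\Psi},p) = p\,\ket{\Psi}\!\!\bra{\Psi} + (1-p)\frac{\mathbb{1}}{8}$ is a convex combination, affineness yields $\Delta(\hat{\rho}) = p\,\Delta(\ket{\Psi}\!\!\bra{\Psi}) + (1-p)\,\Delta(\tfrac{\mathbb{1}}{8})$. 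A one-line evaluation gives $\Delta(\tfrac{\mathbb{1}}{8}) = \tfrac14\mathbb{1}$, while Hall's inequality (applicable since $|{\bf V}|=3$ is odd) gives $\Delta(\ket{\Psi}\!\!\bra{\Psi})\ge 0$. Hence $\Delta(\hat{\rho}) \ge \frac{1-p}{4}\mathbb{1}$, a bound that holds for every $\ket{\Psi}$ and is identical for all three cuts.

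Next I would bound the term $\hat{\rho}_{x}\otimes\hat{\rho}_{y} - \hat{\rho}_{xy}$, which is not affine and so must be expanded directly. Substituting the marginals and suppressing tensored identities gives $\hat{\rho}_{x}\otimes\hat{\rho}_{y} - \hat{\rho}_{xy} = -\frac{p(1-p)}{4}\mathbb{1} + p^{2}\,\psi_{x}\otimes\psi_{y} + \frac{p(1-p)}{2}(\psi_{x}+\psi_{y}) - p\,\psi_{xy}$. The two middle terms are positive semidefinite and may be discarded, and bounding $\psi_{xy}\le\mathbb{1}$ produces $\hat{\rho}_{x}\otimes\hat{\rho}_{y} - \hat{\rho}_{xy} \ge -\frac{p(5-p)}{4}\mathbb{1}$, again uniformly in $\ket{\Psi}$ and symmetric across the three cuts.

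Combining the two estimates gives $I_{xy}(\hat{\rho}(\ket{\Psi},p)) \ge \frac{1-6p+p^{2}}{4}\,\mathbb{1}$ for each $xy\in\{AB,AC,BC\}$. The scalar $1-6p+p^{2}$ has roots $3\pm 2\sqrt{2}$ and is nonnegative precisely on $[0,\,3-2\sqrt{2}]$, so taking $q = 3-2\sqrt{2}$ makes $I_{xy}(\hat{\rho}(\ket{\Psi},p))\ge 0$ for all $0\le p\le q$, all three cuts, and every $\ket{\Psi}$; since $3-2\sqrt{2}\approx 0.172 < 1$ this $q$ lies in the stated interval, proving the claim. The main obstacle is obtaining the \emph{sharp} constant: bounding $I_{xy}$ as a single matrix-valued quadratic in $p$ without first peeling off the manifestly positive $\Delta$ yields a strictly weaker threshold (roughly $0.155$), so the decisive move is to exploit the affineness of $\Delta$ together with Hall positivity to extract the clean $\frac{1-p}{4}\mathbb{1}$ lower bound before estimating the residual product-minus-joint piece.
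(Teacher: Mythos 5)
Your proof is correct and follows essentially the same route as the paper's: both exploit the affineness of $\Delta$ together with Hall positivity to extract the $\tfrac{1-p}{4}\mathbb{1}$ contribution, discard the same manifestly positive terms, and bound the $-p\,\rho_{xy}$ piece by $-p\,\mathbb{1}$, arriving at the identical quadratic $p^{2}-6p+1$ and threshold $q=3-2\sqrt{2}$. The only difference is bookkeeping — you bound the two summands of $I_{xy}$ separately by multiples of the identity, whereas the paper pairs $\tfrac{(1-p)^{2}}{4}\mathbb{1}$ directly against $p\,\rho_{AB}\otimes\mathbb{1}_{C}$ via an eigenvalue comparison — which yields exactly the same estimate.
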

The proof is given in~\Cref{app:proof_prop_mixing_value}. The result holds for any choice of the pure state $\ket{\Psi}$ appearing in Eq.~(\ref{eq:depolarised}), which naturally includes all $\ket{\Psi}$ that are not biseparable (which we know to be demonstrably triangle-incompatible using one of the witnesses $I_{xy}$ for $xy \in \{AB,AC,BC\}$). 

In general, the value of $q$ will be higher than $3 - 2\sqrt{2}$. For instance, if we take $\ket{\Psi} = \ket{GHZ} = \frac{1}{\sqrt{2}}(\ket{000} + \ket{111})$, making $\hat{\rho}$ a generalized three-qubit Werner state~\cite{Siewert_12,Werner_89}, we get that $q = \frac{1}{2}$. In this case, all the witnesses $I_{xy}(\hat{\rho}(\ket{GHZ},p))$ are non-positive for $p > \frac{1}{2}$, thereby witnessing the triangle-incompatibility of $\hat{\rho}(\ket{GHZ},p)$ for these values (see~\Cref{app:proof_prop_mixing_value}). 

Does this tell us anything regarding the relation (or lack thereof) between the entangled-separable boundary and the boundary between triangle-compatible and triangle-incompatible states? As demonstrated in Ref.~\cite{Siewert_12}, the state $\hat{\rho}(\ket{GHZ},p)$ is entangled for any $p \gtrapprox 0.69554$ (as witnessed by the positivity of the three-tangle for those values). It follows that the entanglement boundary and triangle-incompatibility boundary are indeed distinct for this simple class of states: for $\frac{1}{2} < p < 0.6955$, the states $\hat{\rho}(\ket{GHZ},p)$ are triangle-incompatible but not entangled.~\Cref{app:proof_prop_mixing_value} includes some details of the calculations of these $q$-values, also for the case where $\ket{\Psi} = \ket{W} = \frac{1}{\sqrt{3}}(\ket{001} + \ket{010} + \ket{100})$ which produces $q \approx 0.627$.

An alternative family of Werner-like three-qubit mixed states was studied in Ref.~\cite{Toth_06}. This single-parameter family is defined by
\begin{align}
\rho^{(3,c)} &:= \frac{1}{8} \mathbb{1}_{ABC} + \sum_{k=X,Y,Z} \frac{1}{24}  \mathbb{1}_{A} \otimes \sigma_{B}^{(k)} \otimes \sigma_{C}^{(k)} \nonumber \\
& \quad - \frac{c}{16}\left( \sigma_{A}^{(k)} \otimes \mathbb{1}_{B} \otimes \sigma_{C}^{(k)} + \sigma_{A}^{(k)} \otimes \sigma_{B}^{(k)} \otimes \mathbb{1}_{C} \right)
\end{align}
where $c$ is a real parameter, the $\sigma^{(k)}$ denote the Pauli matrices, and the subscripts indicate the subsystems they act upon. This family is interesting since it permits a local hidden variable model for projective measurements whenever $c \leq 1$ (c.f.~\cite[Thm 2,][]{Toth_06}) and is not biseparable for $c > \frac{1}{3}(\sqrt{13} - 1)  \approx 0.8685$. In particular, this means that, for $0.8685 < c < 1$, the states $\rho^{(3,c)}$ are multipartite entangled, but nevertheless, the probability distributions associated to projective measurements on $\rho^{(3,c)}$ admit local hidden variable models. If $c = 0$, one observes that the $\rho^{(3,c)}$ factorizes across the $A|BC$ partition (and moreover is maximally mixed on the $A$ subsystem). By analogous arguments to the provided for biseparable pure states, this state can be seen to be triangle-compatible. It turns out that this state is the only member of the above family that is; all the rest are triangle-incompatible. 
\begin{Proposition} \label{prop:Toth_Acin_states} The state $\rho^{(3,c)}$ is triangle-incompatible if and only if $c \neq 0$.
\end{Proposition}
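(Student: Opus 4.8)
The plan is to prove the `if' direction ($c=0\Rightarrow$ triangle-compatible) directly from the factorization already noted in the text, and to prove the contrapositive of the `only if' direction by exhibiting, for every $c\neq 0$, a negative eigenvalue of the triangle-incompatibility witness $I_{AB}$ of Eq.~\eqref{eq:q_I_xy}.

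First I would compute the one- and two-body marginals of $\rho^{(3,c)}$. Because $\Tr[\sigma^{(k)}]=0$, every correlation term is killed by a single-system partial trace, so $\rho_A=\rho_B=\rho_C=\tfrac12\mathbb{1}$ are all maximally mixed. Tracing out one system from the two-body terms leaves $\rho_{AB}=\tfrac14\mathbb{1}-\tfrac{c}{8}\sum_k\sigma_A^{(k)}\otimes\sigma_B^{(k)}$, $\rho_{AC}=\tfrac14\mathbb{1}-\tfrac{c}{8}\sum_k\sigma_A^{(k)}\otimes\sigma_C^{(k)}$, and $\rho_{BC}=\tfrac14\mathbb{1}+\tfrac1{12}\sum_k\sigma_B^{(k)}\otimes\sigma_C^{(k)}$. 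Setting $c=0$ then gives $\rho^{(3,0)}=\tfrac12\mathbb{1}_A\otimes\rho_{BC}$, a product across the $A|BC$ cut, so triangle-compatibility follows by the same construction used for biseparable states in \Cref{subsec:qubit_pure}.

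For the converse I would substitute these marginals into $I_{AB}=\mathbb{1}-\rho_A-\rho_B-\rho_C+\rho_A\otimes\rho_B+\rho_{AC}+\rho_{BC}$. Since all singles are maximally mixed and $\rho_A\otimes\rho_B=\tfrac14\mathbb{1}$, the scalar contributions collapse, and rewriting each correlation block via $\sum_k\sigma^{(k)}\otimes\sigma^{(k)}=2\operatorname{SWAP}-\mathbb{1}$ on the relevant pair I expect to obtain
\[
I_{AB}(\rho^{(3,c)})=\bigl(\tfrac16+\tfrac{c}{8}\bigr)\mathbb{1}-\tfrac{c}{4}\operatorname{SWAP}_{AC}+\tfrac16\operatorname{SWAP}_{BC}.
\]
The main obstacle is diagonalizing this operator, since $\operatorname{SWAP}_{AC}$ and $\operatorname{SWAP}_{BC}$ do not commute. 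I would resolve this using the $S_3$ structure of three qubits (Schur--Weyl duality): $(\mathbb{C}^2)^{\otimes 3}$ decomposes into the totally symmetric sector (dimension $4$), on which every transposition acts as $+\mathbb{1}$, and the standard-representation sector (dimension $4$). On the symmetric sector the witness reduces to the scalar $\tfrac13-\tfrac{c}{8}$. On the standard sector, writing $M=-\tfrac{c}{4}\operatorname{SWAP}_{AC}+\tfrac16\operatorname{SWAP}_{BC}$, I would use $\Tr M=0$ together with the relation $\operatorname{SWAP}_{AC}\operatorname{SWAP}_{BC}+\operatorname{SWAP}_{BC}\operatorname{SWAP}_{AC}=-\mathbb{1}$ (two transpositions compose to a rotation by $120^\circ$, for which $R+R^{-1}=2\cos120^\circ\,\mathbb{1}=-\mathbb{1}$), which yields $M^2=(b^2+d^2-bd)\mathbb{1}$ with $b=-\tfrac{c}{4}$, $d=\tfrac16$. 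Hence $M$ has eigenvalues $\pm\sqrt{b^2+d^2-bd}$, and the only eigenvalue of the witness that can be negative is
\[
\lambda_-(c)=\tfrac16+\tfrac{c}{8}-\sqrt{\tfrac{c^2}{16}+\tfrac{c}{24}+\tfrac1{36}}.
\]

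Finally I would show $\lambda_-(c)<0$ exactly when $c\neq0$. The radicand is a quadratic in $c$ with negative discriminant, so the square root is always real and strictly positive. If $\tfrac16+\tfrac{c}{8}<0$ then $\lambda_-(c)<0$ trivially; if $\tfrac16+\tfrac{c}{8}\geq0$ then squaring reduces the inequality $\lambda_-(c)\geq0$ to $\tfrac{c^2}{64}\geq\tfrac{c^2}{16}$, which holds only for $c=0$. Thus $\lambda_-(0)=0$ while $\lambda_-(c)<0$ for all $c\neq0$, so $I_{AB}(\rho^{(3,c)})$ is non-positive and $\rho^{(3,c)}$ is triangle-incompatible whenever $c\neq0$, completing the proof.
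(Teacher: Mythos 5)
Your proof is correct, and it reaches exactly the paper's spectrum by a more structural route. Both you and the paper evaluate the same witness, $I_{AB}(\rho^{(3,c)})=\tfrac{1}{4}\mathbb{1}+\tfrac{1}{12}\sum_{k}\sigma_{B}^{(k)}\otimes\sigma_{C}^{(k)}-\tfrac{c}{8}\sum_{k}\sigma_{A}^{(k)}\otimes\sigma_{C}^{(k)}$, and both arrive at the same eigenvalues with the same multiplicities: your $\tfrac{1}{3}-\tfrac{c}{8}$ on the symmetric sector is the paper's $\tfrac{1}{24}(8-3c)$ (multiplicity $4$), and your $\tfrac{1}{6}+\tfrac{c}{8}\pm\sqrt{\tfrac{c^{2}}{16}+\tfrac{c}{24}+\tfrac{1}{36}}$ is the paper's $\tfrac{1}{24}\bigl(4+3c\pm2\sqrt{4+6c+9c^{2}}\bigr)$ (multiplicity $2$ each). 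The difference is in how the spectrum is obtained: the paper starts from the general Pauli-basis parameterization of a three-qubit state, writes out the resulting $8\times8$ matrix explicitly, and reads off its eigenvalues, whereas you rewrite the correlation terms as $\operatorname{SWAP}$ operators and exploit the $S_3$ isotypic decomposition of $(\mathbb{C}^{2})^{\otimes 3}$, using the relation $\operatorname{SWAP}_{AC}\operatorname{SWAP}_{BC}+\operatorname{SWAP}_{BC}\operatorname{SWAP}_{AC}=-\mathbb{1}$ on the standard sector to get $M^{2}\propto\mathbb{1}$ there. Your route avoids any brute-force matrix computation, explains \emph{why} the multiplicities are $4,2,2$, and — a genuine improvement in completeness — replaces the paper's ``it is readily verified that $4+3c-2\sqrt{4+6c+9c^{2}}$ is strictly negative for $c\neq0$'' with an explicit squaring argument showing the inequality collapses to $\tfrac{c^{2}}{64}\geq\tfrac{c^{2}}{16}$. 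The paper's route, in exchange, produces general formulas for $I_{AB}$, $I_{AC}$, $I_{BC}$ in terms of arbitrary Pauli coefficients, which are reusable beyond this one family. You also spell out the $c=0$ compatibility direction (factorization across $A|BC$ plus the biseparable-state construction), which the paper's appendix omits because it is handled in the main text; your treatment of it matches the paper's.
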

The proof is given in~\Cref{app:proof_prop_Toth_Acin_states}. For this class of states then, it appears that the incompatibility-compatibility boundary has less to do with the separability than the {\em factorizability} of the state in question.

In sum, apart from providing further evidence that the triangle-incompatibility witnesses introduced above are broadly applicable to mixed states, the results presented in this subsection point to the greater nuance involved in attempting to categorize the triangle-compatibility of mixed as opposed to pure states.

\section{Visible nodes beyond qubits} 
\label{sec:higher_dim_nodes}

The operator inequality in \eqref{eq:Hall_op_ineq} was established by Hall for any finite dimensional Hilbert space. Thus, the methodology of analyzing the positivity of the associated triangle-incompatibility witnesses remains valid in the cases where the visible nodes are not qubits. In this section, we provide a couple of examples of states on higher-dimensional systems that are triangle-incompatible.

\subsection{Two qubits and one ququart, pure states}

Our first higher-dimensional example is to consider pure states where $\H_{A} \cong \H_{B} \cong \mathbb{C}^{2}$ and $\H_{C} \cong \mathbb{C}^{4}$. Unlike the three-qubit case, it is readily seen that for such a choice of systems, there are triangle-compatible states beyond those that factorize across a bipartition.  It suffices to note that the ququart can be conceptualized as a pair of qubits, say $C^{(1)}$ and $C^{(2)}$, and it is clear that in the triangle scenario one can prepare an entangled state on $AC^{(1)}$ and an entangled state on $BC^{(2)}$. 

This of course does not imply that {\em any} pure qubit-qubit-ququart state can be prepared. For example, since a qubit can be isometrically embedded in a ququart, there are qubit-qubit-ququart states that are local isometry-equivalent to the three-qubit pure states that are triangle-incompatible, and hence are themselves   triangle-incompatible. However, there exist qubit-qubit-ququart states that do not reduce to three-qubit states and we now demonstrate that we can use the triangle-incompatibility witnesses presented above to prove the triangle-incompatibility of some such states, including those exhibiting GHZ-like entanglement.

To consider pure states on this qubit-qubit-ququart system, it is possible to use the higher-dimensional generalised Schmidt decomposition given in Ref.~\cite{carteret2000multipartite}. A pure state $\ket{\Psi} \in \H_{A} \otimes \H_{B} \otimes \H_{C}$ can be written as
\begin{align}
\ket{\Psi} &= \alpha_{0}e^{i\phi_{0}}\ket{000} + \alpha_{1}\ket{011} + \alpha_{2}\ket{101} + \alpha_{3}e^{i\phi_{1}}\ket{102} \nonumber \\
&\quad + \alpha_{4}\ket{110} + \alpha_{5}\ket{111} + \alpha_{6}\ket{112} +\alpha_{7}\ket{113} \label{eq:hd_gen_schmidt_224}
\end{align}
where $\alpha_{l} \in \mathbb{R}_{\geq 0}$, $\sum_{l}\alpha_{l}^{2} = 1$ and $\phi_{r} \in [0, 2\pi)$ (see~\Cref{app:hdim_gen_schmidt} for a summary of how this form is obtained from the general form in~\cite{carteret2000multipartite}). We then have the following:
\begin{Proposition} \label{prop:GHZ_224} If $\rho_{ABC} = \ket{\Psi}\!\!\bra{\Psi}$ where
\begin{align}
\ket{\Psi} = \alpha_{0}e^{i\phi_{0}}\ket{000} + \alpha_{4}\ket{110} + \alpha_{5}\ket{111} + \alpha_{6}\ket{112} +\alpha_{7}\ket{113}
\end{align}
with $0 < \alpha_{0}^{2} + \alpha_{4}^{2} < 1$ (i.e. at least one of $\alpha_{5}$, $\alpha_{6}$ and $\alpha_{7}$ is non-zero), then $\rho_{ABC}$ is triangle-incompatible and this incompatibility is distribution-witnessable.
\end{Proposition}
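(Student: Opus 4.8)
The plan is to establish both clauses of the proposition at once by exhibiting an explicit choice of local projective measurements whose induced outcome distribution already violates a classical Cut-type triangle-compatibility inequality of the form~\eqref{xyCutBellWignerInequality}. This is the most economical route, because a violation at the level of the distribution simultaneously (i) certifies that the distribution is triangle-incompatible and hence that $\rho_{ABC}$ is triangle-incompatible (local measurements are operations consistent with the triangle, so a compatible state could yield only a compatible distribution, and the Cut inflation is nonfanout, so its verdict persists for quantum latents), and (ii) establishes by definition that the incompatibility is distribution-witnessable. In particular I would not need the operator-level machinery of~\Cref{prop:supp_ker_intersect}, although it can serve as an independent cross-check.

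First I would record the structure of $\ket{\Psi}$: the registers $A$ and $B$ are perfectly correlated, the branch $AB=00$ being attached to $\ket{0}_{C}$ and the branch $AB=11$ to $\ket{\chi}_{C}:=\alpha_{4}\ket{0}+\alpha_{5}\ket{1}+\alpha_{6}\ket{2}+\alpha_{7}\ket{3}$. Measuring all three systems in the computational basis then produces the distribution $P$ supported on $(0,0,0),(1,1,0),(1,1,1),(1,1,2),(1,1,3)$ with weights $\alpha_{0}^{2},\alpha_{4}^{2},\alpha_{5}^{2},\alpha_{6}^{2},\alpha_{7}^{2}$. The key step is to test $P$ against the $AC$-Cut inequality rather than the $AB$-Cut one, i.e.\ against $I_{AC}(P)\ge 0$, evaluated at the single outcome $(A,B,C)=(0,1,0)$. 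Using $P_{A}(0)=\alpha_{0}^{2}$, $P_{C}(0)=\alpha_{0}^{2}+\alpha_{4}^{2}$, $P_{B}(1)=1-\alpha_{0}^{2}$, $P_{AB}(0,1)=0$ and $P_{BC}(1,0)=\alpha_{4}^{2}$, and substituting the normalization $1=\alpha_{0}^{2}+\alpha_{4}^{2}+\alpha_{5}^{2}+\alpha_{6}^{2}+\alpha_{7}^{2}$, the expression collapses to
\begin{align}
I_{AC}(P)(0,1,0) = -\,\alpha_{0}^{2}\,(\alpha_{5}^{2}+\alpha_{6}^{2}+\alpha_{7}^{2}).
\end{align}
This is strictly negative: one has $\alpha_{5}^{2}+\alpha_{6}^{2}+\alpha_{7}^{2}=1-\alpha_{0}^{2}-\alpha_{4}^{2}>0$ by the hypothesis $\alpha_{0}^{2}+\alpha_{4}^{2}<1$, and $\alpha_{0}>0$ because $\alpha_{0}$ is the leading coefficient of the generalized Schmidt decomposition. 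I would state this positivity of $\alpha_{0}$ explicitly, since it is precisely what fails for the degenerate product state $\ket{11}\ket{\chi}$ obtained when $\alpha_{0}=0$, for which the proposition's conclusion does not hold.

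The step I expect to be the genuine crux—requiring insight rather than calculation—is the choice of cut and outcome. The "obvious" $AB$-Cut is useless here: on computational-basis data its value reduces to $(1-\alpha_{0}^{2})^{2}$ minus a single squared amplitude, which is positive whenever the weight $\alpha_{4}$ on $\ket{110}$ is large (geometrically, $\ket{\Psi}$ then lies close to the triangle-compatible biseparable family with tensor factor $\ket{0}_{C}$), and rotating the $A,B$ measurements into the $X$ basis makes the $AB$-witness vanish identically. The $AC$-Cut succeeds because it exploits the asymmetry created by the perfect correlation of $A$ with the branch index: $A=0$ occurs only in the $\ket{000}$ branch, forcing $P_{AB}(0,1)=0$, while the surviving $B$--$C$ correlation prevents the remaining marginals from compensating; the same verdict follows from the $BC$-Cut at $(1,0,0)$ by the $A\!\leftrightarrow\! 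B$ symmetry of $\ket{\Psi}$. Once this choice is made, the remainder is the one-line evaluation above, and the displayed distribution answers the distribution-witnessability clause directly. As an optional cross-check I would reconfirm triangle-incompatibility at the operator level by verifying, via~\Cref{lem:xy_minus_x_y} and~\Cref{prop:supp_ker_intersect}, that $\nu_{AC}^{-}$ is nonzero and that $\supp(\nu_{AC}^{-}\otimes\mathbb{1}_{B})$ meets $\ker(\Delta(\{\rho_{\bf X}:{\bf X}\in 2^{\bf V}\!\setminus\!{\bf V}\}))$.
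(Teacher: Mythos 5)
Your proposal is correct and is essentially the paper's own proof in different packaging: the paper computes the operator $I_{AC}(\rho_{ABC})$ explicitly and observes that $\braket{010|\,I_{AC}(\rho_{ABC})\,|010} = -\alpha_0^2\left(1-\alpha_0^2-\alpha_4^2\right) < 0$, and since $\ket{010}$ is a computational-basis product vector this is numerically identical to your classical $AC$-Cut evaluation $I_{AC}(P)(0,1,0)$ on the computational-basis measurement distribution, so both proofs rest on the same cut, the same evaluation point, and the same scalar, with distribution-witnessability following for the same reason. Your one substantive addition is correct and worth keeping: the argument needs $\alpha_0 > 0$, since for $\alpha_0 = 0$ the state degenerates to the triangle-compatible product state $\ket{11}_{AB}\otimes\ket{\chi}_C$ and the witness value vanishes, a case that the stated hypothesis $0 < \alpha_0^2 + \alpha_4^2 < 1$ does not literally exclude and that the paper's proof (which asserts negativity ``whenever $0 < \alpha_0^2 + \alpha_4^2 < 1$'') silently glosses over.
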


The proof is given in~\Cref{app:proof_prop_GHZ_224} and proceeds by demonstrating that $I_{AC}(\ket{\Psi}\!\!\bra{\Psi})$ is not positive for the relevant values of the $\alpha$ parameters. It should be noted that the above result is consistent with known results establishing the impossibility of generating qudit graph states, such as the GHZ state, in networks with bipartite sources (see e.g.,~\cite{wang2024quantum}), but extends also to other local-unitary inequivalent states that exhibit GHZ-like entanglement.

\subsection{Three qutrits, mixed states}

For the case of three qubits, we saw a number of distinctions between pure and mixed states vis-a-vis triangle-compatibility. Here, we give one final example in a similar vein for the case of three qutrits. In this example, we consider a pure three-qutrit state and a mixed three-qutrit state which have the same level of multipartite entanglement as quantified by the generalized geometric measure of entanglement (GGM)~\cite{Sen(De)_10,shimony1995degree,barnum2001monotones}, where 
for the former, triangle-incompatibility can be demonstrated using the triangle-incompatibility witness while for the latter it cannot.

The GGM is a measure of entanglement for multipartite quantum states. For a given $n$-partite pure state $\ket{\psi}$, the GGM of $\ket{\psi}$ is essentially a measure of how close $\ket{\psi}$ is to a non-genuinely multipartite entangled state. Formally, the GGM is defined to be \cite{Sen(De)_10,shimony1995degree,barnum2001monotones,Das_16}
\begin{align}
\textrm{GGM}(\ket{\psi}) := 1 - \max_{\ket{\phi}}|\braket{\psi|\phi}|^{2},
\end{align}
where the maximization is over all states $\ket{\phi}$ that are not genuinely multipartite entangled. For mixed states, the GGM is given by a convex roof construction, i.e., for a mixed state $\rho$, the GGM is given by
\begin{align}
\textrm{GGM}(\rho) := \min_{\{p_{i}, \ket{\psi_{i}} \}} \sum_{i} p_{i}\textrm{GGM}(\ket{\psi_{i}}),
\end{align}
where the minimization is over all decompositions of $\rho$ into convex mixtures of pure states $\ket{\psi_{i}}$. In Ref.~\cite{Das_16}, the GGM was calculated for a number of mixed states exhibiting certain symmetries. In particular, these mixed states were obtained from specific pure states via a certain choice of twirling operation, allowing the GGM of the former to be related to that of the latter. One example of this pure state - mixed state pairing is given by
\begin{gather}
\ket{\Psi_{3,3}^{(3)}} := \sqrt{p_{0}} \ket{\Psi_{0}} + \sqrt{p_{1}}\ket{\Psi_{1}} + \sqrt{1-p_{0}-p_{1}}\ket{\Psi_{2}}, \\
\rho_{3,3}^{(3)} := p_{0}\ket{\Psi_{0}}\!\!\bra{\Psi_{0}} + p_{1}\ket{\Psi_{1}}\!\!\bra{\Psi_{1}} + (1-p_{0}-p_{1})\ket{\Psi_{2}}\!\!\bra{\Psi_{2}},
\end{gather}
where $p_{0},p_{1}, p_{0}+p_{1} \in [0,1]$ and where $\ket{\Psi_{i}}$ are the three-qutrit states defined to be an equal superposition of computational basis states $\ket{xyz}$ such that $x+y+z = i \bmod 3$, that is,
\begin{align}
\ket{\Psi_{0}} := \frac{1}{3}&\left(\ket{000} + \ket{111} + \ket{222} \right. \nonumber\\
&\left. \ket{012} + \ket{021} + \ket{102} \right. \nonumber\\
&\left. \ket{120} + \ket{201} + \ket{210} \right),\\
\ket{\Psi_{1}} := \frac{1}{3}&\left( \ket{001} + \ket{010} + \ket{100} \right. \nonumber\\
&\left. \ket{022} + \ket{202} + \ket{220} \right. \nonumber\\
&\left. \ket{112} + \ket{121} + \ket{211} \right),\\
\ket{\Psi_{2}} := \frac{1}{3}&\left( \ket{011} + \ket{101} + \ket{110} \right. \nonumber\\
&\left. \ket{002} + \ket{020} + \ket{200} \right. \nonumber\\
&\left. \ket{122} + \ket{212} + \ket{221} \right).
\end{align}
Let $Z_{3}$ denote the qutrit analogue of the qubit Pauli-$Z$ operator, i.e., $Z_{3} = \sum_{j=0}^{2} e^{\frac{2\pi i j}{3}}\ket{j}\!\!\bra{j}$. For the twirling operator $\boldsymbol{P}: \mathcal{L}(\H_{A} \otimes \H_{B} \otimes \H_{C}) \rightarrow \mathcal{L}(\H_{A} \otimes \H_{B} \otimes \H_{C})$ defined via
\begin{align}
\rho \mapsto \frac{1}{3}\sum_{k=0}^{2}\left( Z_{3}^{k} \otimes Z_{3}^{k} \otimes Z_{3}^{k}\right)\rho (Z_{3}^{k} \otimes Z_{3}^{k} \otimes Z_{3}^{k})^{\dagger} 
\end{align}
where $Z_{3}^{k}$ denotes the $k$th power of $Z_{3}$, we have that $\boldsymbol{P}(\ket{\Psi_{3,3,}^{(3)}}\!\!\bra{\Psi_{3,3}^{(3)}}) = \rho_{3,3}^{(3)}$. 

As a consequence of the above relation via $\boldsymbol{P}$, the GGM of $\ket{\Psi_{3,3}^{(3)}}$ and that of $\rho_{3,3}^{(3)}$ coincide. However, when it comes to proving the triangle-compatibility of these states by the witness $I_{AB}$, there is a difference between them. It turns out that $I_{AB}(\rho_{3,3}^{(3)})$, $I_{AC}(\rho_{3,3}^{(3)})$, and $I_{BC}(\rho_{3,3}^{(3)})$ have no dependence on $p_{0}, p_{1}$ and moreover have eigenvalues $\frac{7}{9}$ with multiplicity $3$, $\frac{4}{9}$ with multiplicity $12$ and $\frac{1}{9}$ also with multiplicity $12$. We can thus not draw any conclusions about the triangle-compatibility of $\rho_{3,3}^{(3)}$ using this method alone. 

Conversely, taking e.g., $p_{0} = 2p_{1} = 0.5$, one finds that $I_{AB}(\ket{\Psi_{3,3}^{(3)}}\!\!\bra{\Psi_{3,3}^{(3)}})$ has an eigenvalue equal to $-0.01348$ and so $\ket{\Psi_{3,3}^{(3)}}$ is triangle-incompatible for these values of $p_{0}, p_{1}$. Note that other values of $p_{0},p_{1}$ also exist for which incompatibility can be demonstrated.

\section{Networks beyond the triangle scenario} \label{sec:beyond_triangle}

To this point, we have considered causal compatibility constraints for the triangle scenario. It is natural to ask: how readily can such constraints be derived for other causal structures? In this section, we demonstrate that the method presented in \Cref{subsec:q_marginal_ineqs} can be applied to Cut inflations of a family of networks which includes the triangle scenario as a special case. 

We begin by summarizing the method for deriving the triangle-incompatibility witnesses. From a Cut inflation $G'$ of the triangle scenario $G$, a set of expressible subsets of the visible nodes of $G'$ 
\begin{align}
S = \{\{A_{1},B_{1}\}, \{A_{1},C_{1}\}, \{B_{1},C_{1}\},\{A_{1}\},\{B_{1}\},\{C_{1}\}\}
\end{align}
is obtained. For a given state $\rho_{ABC}$ on the visible nodes of $G$, by taking $S$ to be a set of marginal contexts, we can consider a set of states $\{\sigma_{{\bf X}} : {\bf X} \in S\}$, with the nature of the expressible sets of $G'$ allowing each of these states to be written in terms of marginals of $\rho_{ABC}$. The failure of marginal compatibility of $\{\sigma_{{\bf X}} : {\bf X} \in S\}$ demonstrates the triangle-incompatibility of $\rho_{ABC}$, and can be established by a violation of the operator inequality $\Delta(\{\sigma_{\bf X}: {\bf X}\in S\}) \ge 0$ proven by Hall \cite{william_hall_05} (recall \Cref{eq:Hall_ineqs} for the definition of $\Delta$). The triangle-incompatibility witness is then given by the operator $\Delta(\{\sigma_{\bf X}: {\bf X}\in S\})$ written in terms of the marginals of $\rho_{ABC}$.

There are two features of the Cut inflation of the triangle that allow the above method to work. First, the set of 
 expressible sets $S$ derived from the Cut inflation is such that $S = 2^{{\bf V}}\setminus {\bf V}$ for ${\bf V}$ an odd cardinality set of quantum systems, namely ${\bf V} = \{A_{1}, B_{1}, C_{1}\}$ in this case. The operator inequalities proven by Hall are of the form $\Delta(\{\sigma_{\bf X}: {\bf X}\in 2^{\bf V}\!\setminus\!{\bf V}  \}) \ge 0$ for odd cardinality ${\bf V}$, so the correspondence between $S$ and $2^{{\bf V}}\setminus {\bf V}$ is a requirement. The second feature is that {\em every} element of $S$ is an expressible set in $G'$. This allows every term $\Delta(\{\sigma_{\bf X}: {\bf X}\in 2^{\bf V}\!\setminus\!{\bf V}\})$ to be written in terms of marginals of the state $\rho_{{\bf V}}$, whose compatibility with the original causal model $G$ is being investigated, producing the desired $G$-incompatibility witness.

In the above case, the set ${\bf V}$ corresponds precisely to the set of visible nodes of the inflation $G'$, i.e. every subsystem specified in ${\bf V}$ is a single subsystem in $\textsf{Vnodes}(G')$. However, this is not a requirement: at least \textit{a priori} the set ${\bf V}$ could contain elements that are composite systems comprising multiple visible nodes from $G'$. Such a scenario is not relevant when considering Cut inflations of the triangle scenario due to the small cardinality of $\textsf{Vnodes}$ in this case ($|{\bf V}| = 3$ is the minimal case for which $\Delta(\{\sigma_{\bf X}: {\bf X}\in 2^{\bf V}\!\setminus\!{\bf V}  \}) \ge 0$ is non-trivial), but for networks with larger numbers of nodes such scenarios can arise, as we demonstrate in the examples below.

Before turning to the examples, let us comment further on how the method for deriving candidate $G$-incompatibility witnesses works in the case where ${\bf V}$ contains composite systems.\footnote{Note that we are not claiming that these witnesses are always guaranteed to be non-trivial; \textit{a priori} some of the expressions obtained may be positive semidefinite for all states. One example of a trivial witness derived using the nonfanout inflation of the triangle scenario known as the Ring inflation is presented in \Cref{app:ring_inflation}.} Suppose that an inflation $G'$ of some network $G$ has visible nodes $\textsf{Vnodes}(G') = \{ A_{1}^{[1]}, ..., A_{1}^{[n]}\}$\footnote{Regarding notation: the square-bracketed superscript is used here to have a way of labelling visible nodes with a numerical index rather than by using different letters. It is distinct to how round-bracketed superscripts have been used to label the factor spaces in a factorization of the Hilbert space of a latent node.} for some $n \ge 3$. Let $\{\mathcal{S}_{1},...,\mathcal{S}_{k}\}$ denote a set partition of $\textsf{Vnodes}(G')$ where $k \ge 3$ is odd.\footnote{Recall that a set partition $\{\mathcal{S}_{1},...,\mathcal{S}_{k}\}$ of a set $\mathcal{X}$ is a set of subsets of $\mathcal{X}$ such that $\cup_{i=1}^{k} \mathcal{S}_{i} = \mathcal{X}$ and $\mathcal{S}_{i} \cap \mathcal{S}_{j} = \emptyset$ for all $i\neq j \in \{1,...,k\}$.} By taking ${\bf V} := \{\mathcal{S}_{1},...,\mathcal{S}_{k}\}$ and by considering, for each ${\bf X} \subseteq {\bf V}$, the state $\sigma_{{\bf X}}$ to be a state on the composite system comprised of the visible nodes $\cup_{\mathcal{S}_{j} \in {\bf X}} S_{j} \subseteq \textsf{Vnodes}(G')$, we can again consider the operator $\Delta(\{\sigma_{\bf X}: {\bf X}\in 2^{\bf V}\!\setminus\!{\bf V}  \})$ as acting on the Hilbert space associated to $\textsf{Vnodes}(G')$. Moreover, if each element of $2^{{\bf V}}\setminus {\bf V}$ is injectable or expressible, since $k$ is odd, it is again possible to derive an incompatibility witness from the inequality $\Delta(\{\sigma_{\bf X}: {\bf X}\in 2^{\bf V}\!\setminus\!{\bf V}  \}) \ge 0$.

In the following two subsections, we consider two examples that illustrate the above method for larger numbers of visible nodes. In the first subsection, we consider the pentagon scenario, shown in \Cref{fig:pent_hex}(a), which has a similar network structure to the triangle scenario but has $5$ visible nodes instead of $3$. By considering a Cut inflation of the pentagon scenario, as in \Cref{fig:pent_hex}(b), we can obtain pentagon-incompatibility witnesses both by taking ${\bf V}$ to be the full set of visible nodes, meaning that $|{\bf V}| = 5$, as well as by taking ${\bf V}$ to consist of composite systems, in which case $|{\bf V}| = 3$. In the second subsection, we consider the woven hexagon scenario, shown in \Cref{fig:pent_hex}(c), consisting of $6$ visible nodes and exhibiting a higher degree of connectivity between latent and visible nodes than that of the triangle and pentagon scenarios. Using a Cut inflation for this network (cf \Cref{fig:pent_hex}(d)), we are only able to produce an incompatibility witness by taking ${\bf V}$ to consist of composite systems of visible nodes (with the result that $|{\bf V}| = 3$).

\begin{figure*}[htbp] 
\centering
\includegraphics[width=0.4\textwidth]{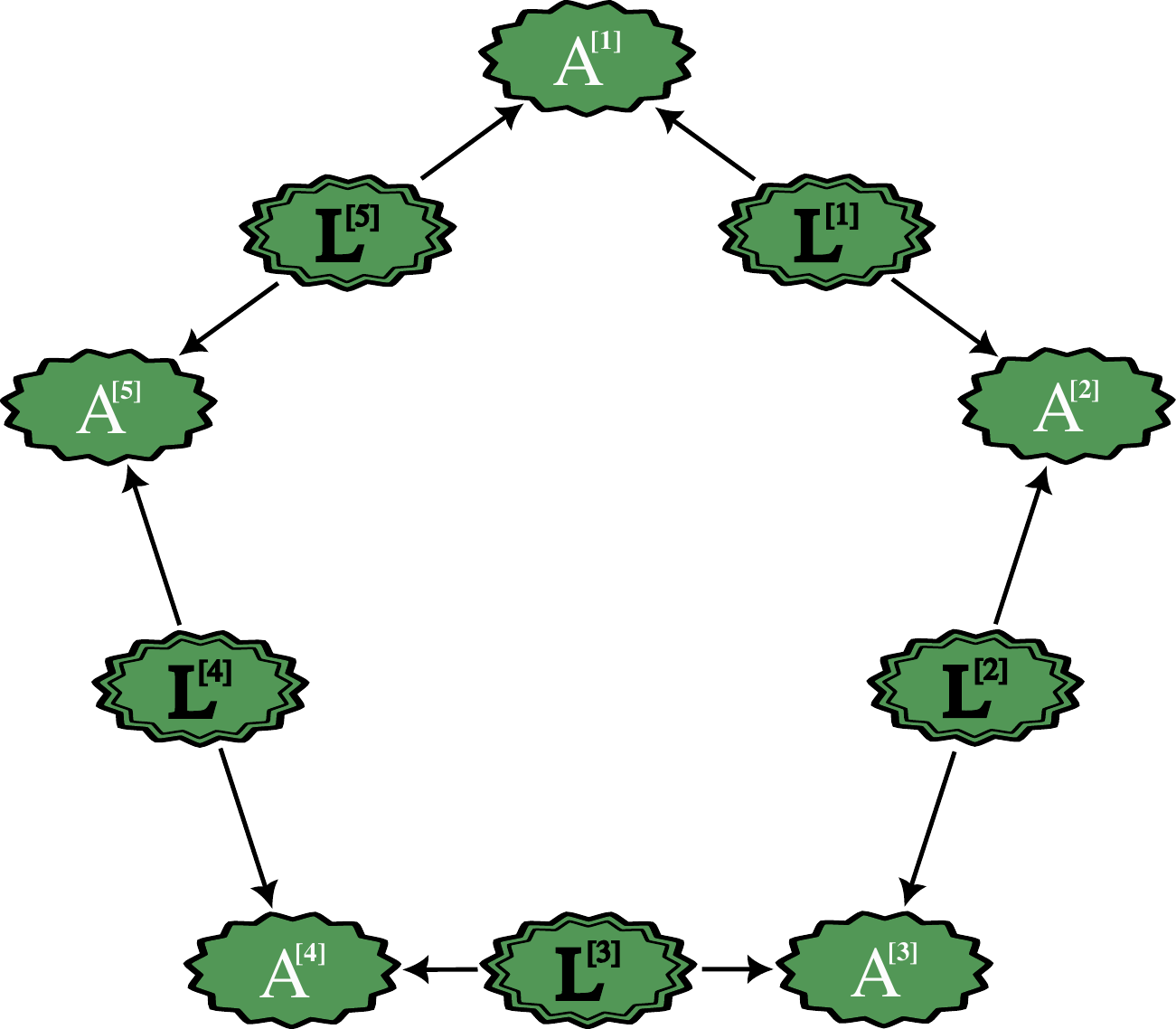} 
\hfill
\includegraphics[width=0.4\textwidth]{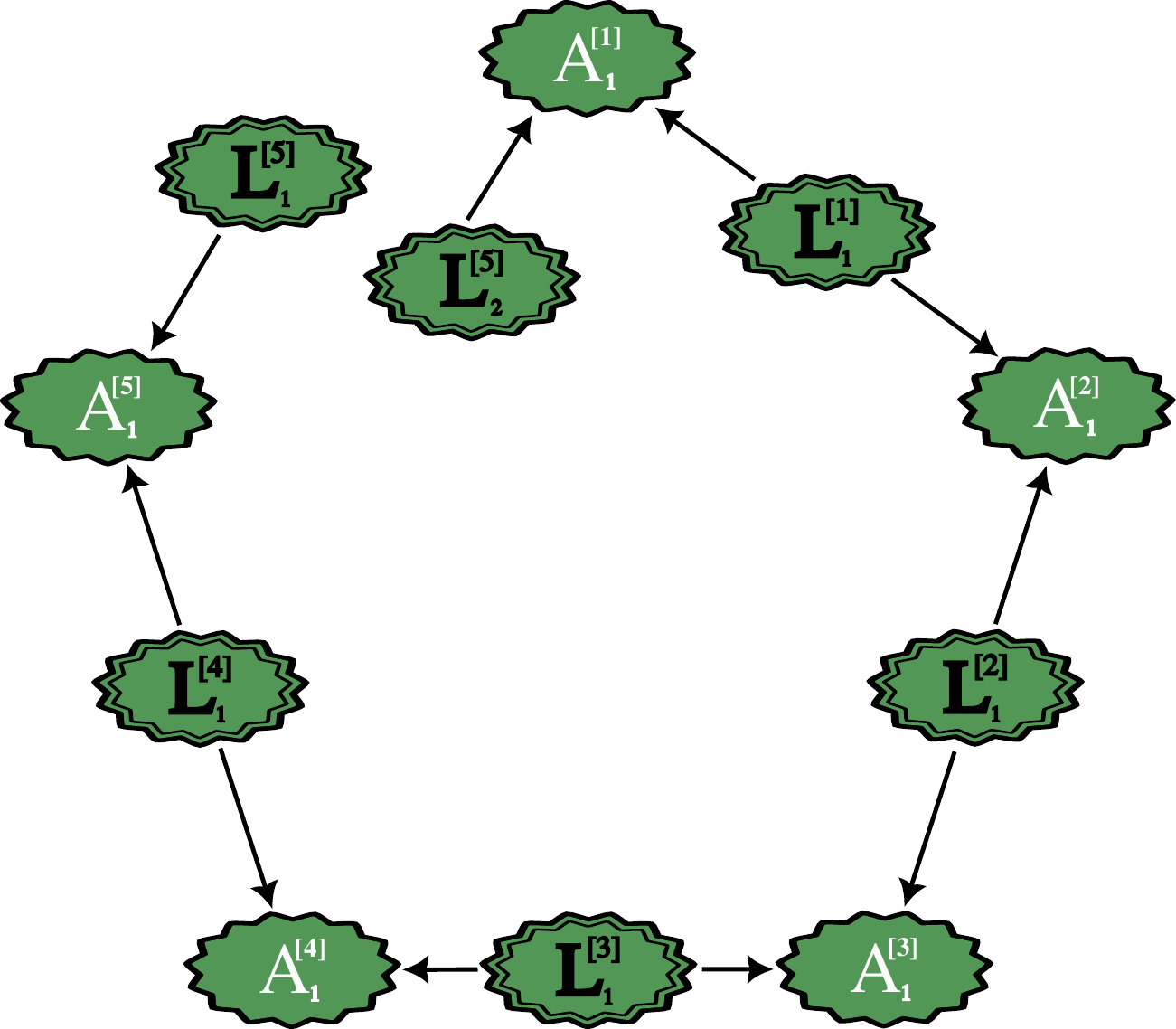}
\vspace{0.5cm}
\parbox{0.4\textwidth}{\centering (a)} \hfill
\parbox{0.4\textwidth}{\centering (b)}
\vspace{0.1cm}
\includegraphics[width=0.4\textwidth]{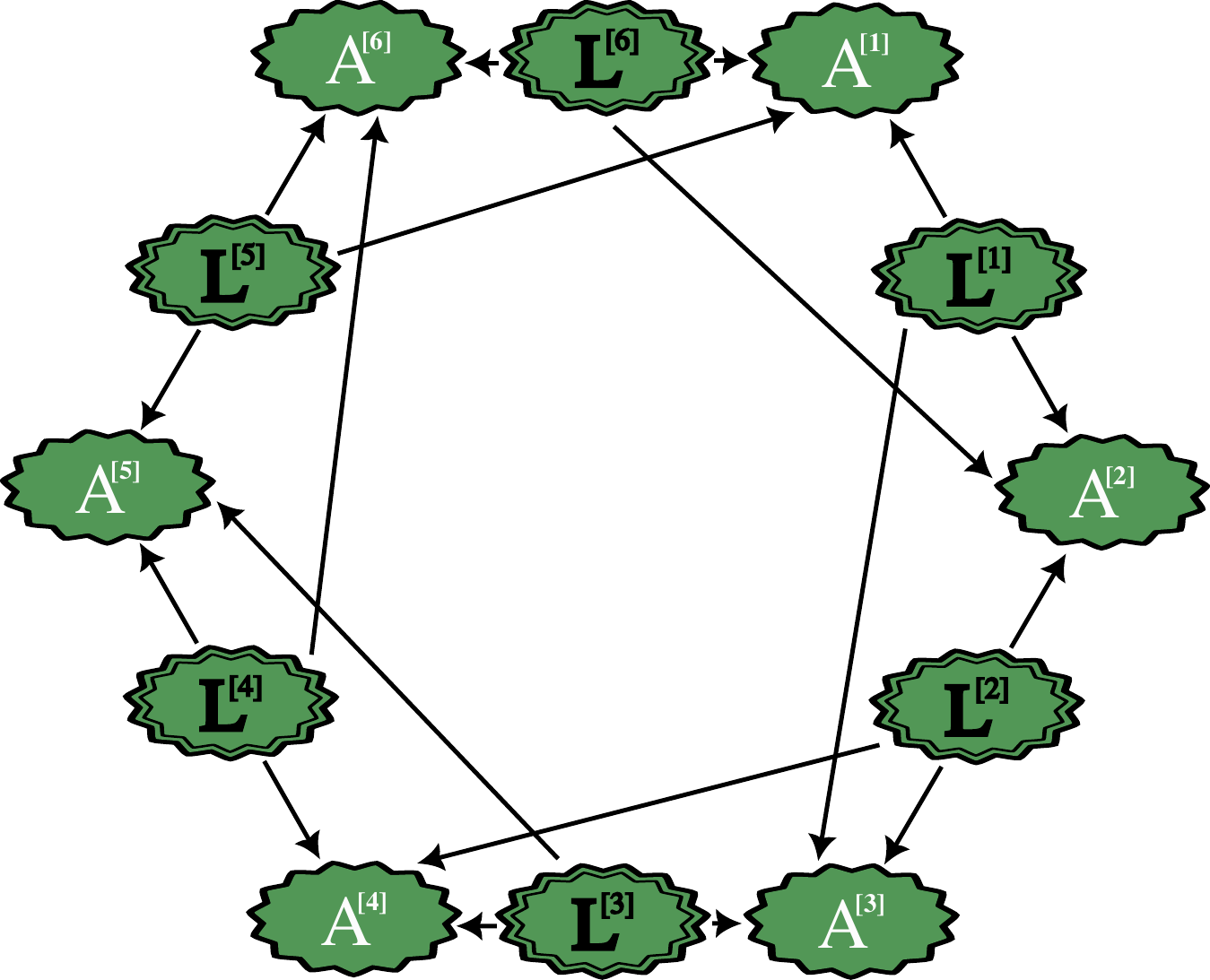} 
\hfill
\includegraphics[width=0.4\textwidth]{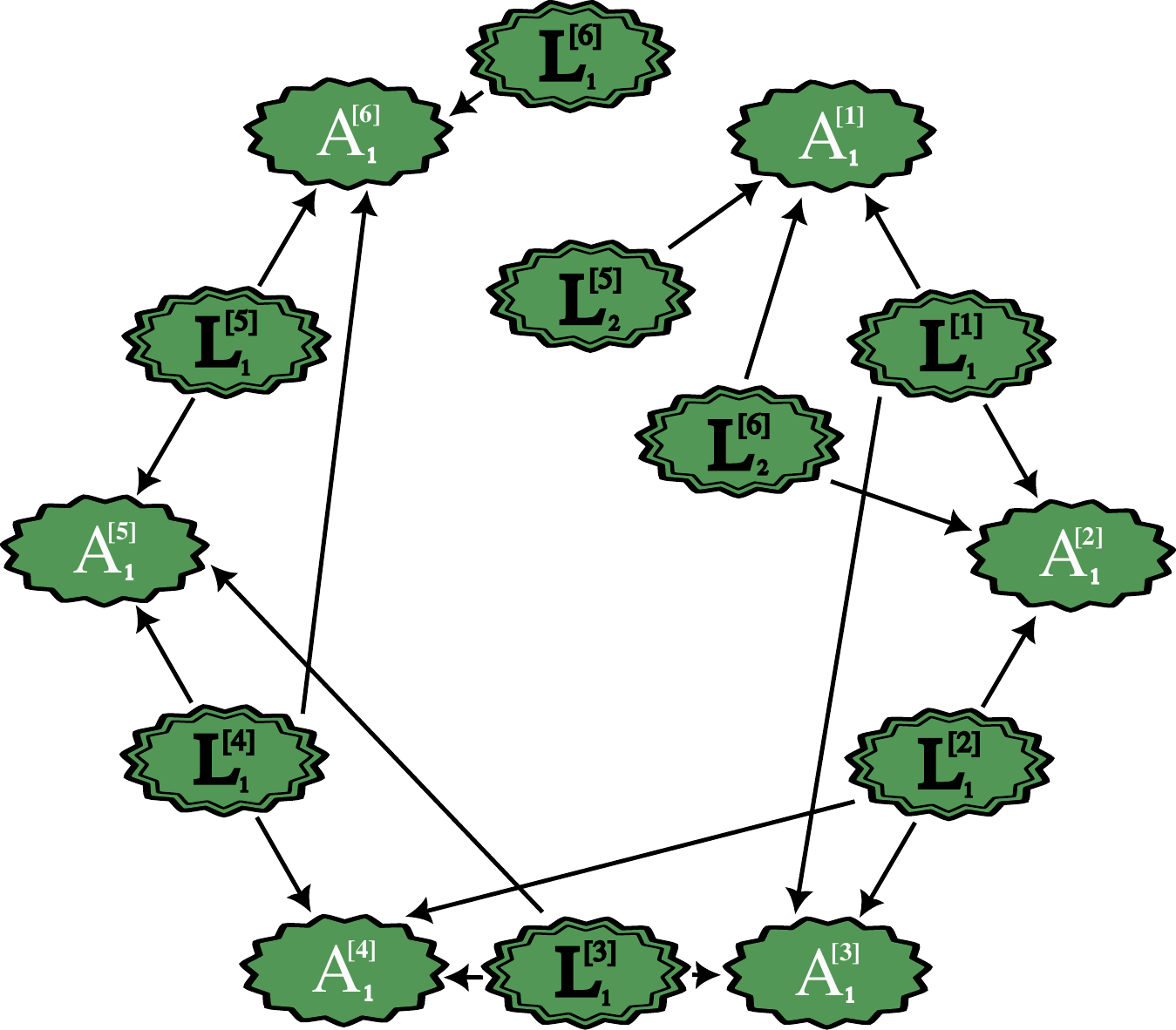} 
\parbox{0.4\textwidth}{\centering (c)} \hfill
\parbox{0.4\textwidth}{\centering (d)}
\caption{(a) The pentagon scenario, corresponding to $G(5,2)$ in the notation of the family of networks in the main text. (b) A Cut inflation of the pentagon scenario for the cut between $A_{1}^{[5]}$ and $A_{1}^{[1]}$, corresponding to $G'(5,2,5)$ in the main text. (c) The woven hexagon scenario, corresponding to $G(6,3)$ in the main text. (d) A Cut inflation of the woven hexagon scenario, corresponding to $G'(6,3,6)$ in the main text.}
\label{fig:pent_hex}
\end{figure*}

\subsection{Example: pentagon scenario}

Let $G$ denote the pentagon scenario, shown in \Cref{fig:pent_hex}(a), and let $G'$ denote the Cut inflation of $G$, shown in \Cref{fig:pent_hex}(b). We will start to use notation here consistent with that used below in the treatment of the general family of networks for which incompatibility witnesses can be derived. In particular, we have that the visible and latent nodes for $G$ are 
\begin{equation}
\begin{gathered}
\textsf{Vnodes}(G) = \{A^{[1]}, A^{[2]},A^{[3]},A^{[4]},A^{[5]}\}, \\
\textsf{Lnodes}(G) = \{L^{[1]}, L^{[2]},L^{[3]},L^{[4]},L^{[5]} \}, 
\end{gathered}
\end{equation}
while the visible and latent nodes for the Cut inflation $G'$ are
\begin{equation}
\begin{gathered}
\textsf{Vnodes}(G') = \{A_{1}^{[1]}, A_{1}^{[2]},A_{1}^{[3]},A_{1}^{[4]},A_{1}^{[5]}\}, \\
\textsf{Lnodes}(G') = \{L_{1}^{[1]}, L_{1}^{[2]},L_{1}^{[3]},L_{1}^{[4]},L_{1}^{[5]},L_{2}^{[5]} \}.
\end{gathered}
\end{equation}
Note that we are considering a cut between the visible nodes $A_{1}^{[5]}$ and $A_{1}^{[1]}$ of $G'$, as depicted in \Cref{fig:pent_hex}(b), however Cut inflations of $G$ can be defined for cuts introduced between any other pair of visible nodes, just as with the triangle scenario.

For the choice of labelling of the visible and latent nodes made here, the connectivity of $G$ is given as: $L^{[j]}$ is the parent of $A^{[j]}$ and $A^{[j+1]}$ for each $j \in \{1,...,5\}$, where any index value greater than $5$ is taken modulo $5$ (i.e., $L^{[5]}$ is the parent of both $A^{[5]}$ and $A^{[1]}$). The connectivity of $G'$ is given as: $L_{1}^{[j]}$ is the parent of $A_{1}^{[j]}$ and $A_{1}^{[j+1]}$ for $j \in \{1, ..., 4\}$, $L_{1}^{[5]}$ is the parent of $A_{1}^{[5]}$ alone and $L_{2}^{[5]}$ is the parent of $A_{1}^{[1]}$ alone. As is demonstrated in \Cref{app:fam_networks}, the sets 
\begin{gather}
\{A_{1}^{[1]},A_{1}^{[2]},A_{1}^{[3]},A_{1}^{[4]}  \}, \quad  \{A_{1}^{[2]},A_{1}^{[3]},A_{1}^{[4]},A_{1}^{[5]}   \}
\end{gather}
and all subsets thereof are injectable in $G'$, while the sets 
\begin{equation}
\begin{gathered}
\{A_{1}^{[1]},A_{1}^{[2]},A_{1}^{[3]},A_{1}^{[5]}  \}, \quad \{A_{1}^{[1]},A_{1}^{[2]},A_{1}^{[4]},A_{1}^{[5]}  \}, \\ \{A_{1}^{[1]},A_{1}^{[3]},A_{1}^{[4]},A_{1}^{[5]}  \}.
\end{gathered}
\end{equation}
and all subsets thereof are expressible. 

With these sets, we can derive a number of pentagon-incompatibility witnesses. For example, taking ${\bf V_{1}} = \{A_{1}^{[1]},...,A_{1}^{[5]}\}$, we get that $|{\bf V_{1}}| = 5$ and every element of $2^{{\bf V_{1}}}\setminus {\bf V_{1}}$ is injectable or expressible. We are thus able to derive a pentagon-incompatibility witness from $\Delta(\{\sigma_{\bf X}: {\bf X}\in 2^{\bf V_{1}}\!\setminus\!{\bf V_{1}}  \})$. Explicitly, if $\rho := \rho_{A^{[1]}A^{[2]}A^{[3]}A^{[4]}A^{[5]}}$ is a state on the composite Hilbert space associated to the visible nodes of $G$, then the witness, denoted $I_{{\bf V_{1}}}^{\textrm{pent}}$, is
\begin{align}
I_{{\bf V_{1}}}^{\textrm{pent}}(\rho)&:= \mathbb{1} - \rho_{A^{[1]}} - \rho_{A^{[2]}} - \rho_{A^{[3]}} - \rho_{A^{[4]}} - \rho_{A^{[5]}} \nonumber \\
&+ \rho_{A^{[1]}A^{[2]}} + \rho_{A^{[1]}}\otimes \rho_{A^{[3]}} + \rho_{A^{[1]}}\otimes \rho_{A^{[4]}}  \nonumber \\
& + \rho_{A^{[1]}}\otimes \rho_{A^{[5]}}+ \rho_{A^{[2]}A^{[3]}}  + \rho_{A^{[2]}}\otimes \rho_{A^{[4]}} + \rho_{A^{[2]}}\otimes \rho_{A^{[5]}} \nonumber \\
&+ \rho_{A^{[3]}A^{[4]}}  + \rho_{A^{[3]}}\otimes \rho_{A^{[5]}} + \rho_{A^{[4]}A^{[5]}} \nonumber \\
&- \rho_{A^{[1]}A^{[2]}A^{[3]}} - \rho_{A^{[1]}A^{[2]}}\otimes \rho_{A^{[4]}} - \rho_{A^{[1]}A^{[2]}}\otimes \rho_{A^{[5]}}  \nonumber \\
& - \rho_{A^{[1]}} \otimes \rho_{A^{[3]}A^{[4]}} - \rho_{A^{[1]}}\otimes \rho_{A^{[3]}} \otimes \rho_{A^{[5]}}  \nonumber \\
& - \rho_{A^{[1]}} \otimes \rho_{A^{[4]}A^{[5]}} - \rho_{A^{[2]}A^{[3]}A^{[4]}} - \rho_{A^{[2]}A^{[3]}}\otimes \rho_{A^{[5]}} \nonumber \\ & - \rho_{A^{[2]}} \otimes \rho_{A^{[4]}A^{[5]}}  - \rho_{A^{[3]}A^{[4]}A^{[5]}} + \rho_{A^{[1]}A^{[2]}A^{[3]}A^{[4]}}  \nonumber \\
&+ \rho_{A^{[1]}A^{[2]}A^{[3]}} \otimes \rho_{A^{[5]}} + \rho_{A^{[1]}A^{[2]}} \otimes \rho_{A^{[4]}A^{[5]}} \nonumber \\
&+ \rho_{A^{[1]}}\otimes \rho_{A^{[3]}A^{[4]}A^{[5]}} + \rho_{A^{[2]}A^{[3]}A^{[4]}A^{[5]}}.
\end{align}
Note that the above expression makes use of the causal structure of $G'$: every $\sigma_{{\bf X}}$ for which ${\bf X} \in 2^{\bf V_{1}}\!\setminus\!{\bf V_{1}}$ contains nodes that have independent ancestry in $G'$, is written as a tensor product of marginals of $\rho$. For example, the independence arising from the cut means that $\sigma_{A_{1}^{[1]}A_{1}^{[5]}} = \rho_{A^{[1]}} \otimes \rho_{A^{[5]}}$ just as with the triangle scenario, but also the independence of, e.g., the visible nodes $A_{1}^{[2]}$ and $A_{1}^{[4]}$ ensure that $\sigma_{A_{1}^{[2]}A_{1}^{[4]}} = \rho_{A^{[2]}} \otimes \rho_{A^{[4]}}$.

If, instead, we define ${\bf V_{2}} := \{A_{1}^{[1]}A_{1}^{[3]}, A_{1}^{[2]}A_{1}^{[5]}, A_{1}^{[4]} \}$, that is, by considering the partition $\{\mathcal{S}_{1}, \mathcal{S}_{2}, \mathcal{S}_{3}\}$ of $\textsf{Vnodes}(G')$ where $\mathcal{S}_{1} := \{ A_{1}^{[1]}, A_{1}^{[3]}\}$, $\mathcal{S}_{2} := \{A_{1}^{[2]}, A_{1}^{[5]}\}$ and $\mathcal{S}_{3} := \{ A_{1}^{[4]}\}$, we obtain an \textit{a priori} distinct pentagon-incompatibility witness:
\begin{align}
I_{{\bf V_{2}}}^{\textrm{pent}}(\rho) := &\mathbb{1} - \rho_{A^{[1]}} \otimes \rho_{A^{[3]}} - \rho_{A^{[2]}}\otimes \rho_{A^{[5]}} - \rho_{A^{[4]}} \nonumber \\
& + \rho_{A^{[1]}A^{[2]}A^{[3]}}\otimes \rho_{A^{[5]}} + \rho_{A^{[1]}} \otimes \rho_{A^{[3]}A^{[4]}} \nonumber \\
&+ \rho_{A^{[2]}}\otimes \rho_{A^{[4]}A^{[5]}}.
\end{align}
There are indeed states whose incompatibility with the pentagon scenario is witnessed by $I_{{\bf V_{1}}}^{\textrm{pent}}$ and $I_{{\bf V_{2}}}^{\textrm{pent}}$. Consider the $5$-qubit analogues to the GHZ and W states:
\begin{equation}
\begin{gathered}
\ket{\textrm{GHZ}_{5}} := \frac{1}{\sqrt{2}}(\ket{00000}+\ket{11111}),\\
\ket{\textrm{W}_{5}} := \frac{1}{\sqrt{5}}\sum_{\substack{i,j,k,l,m \in \{0,1\},\\ i+j+k+l+m=1}} \ket{ijklm}.
\end{gathered}
\end{equation}
Both $I_{{\bf V_{1}}}^{\textrm{pent}}$ and $I_{{\bf V_{2}}}^{\textrm{pent}}$ witness the pentagon-incompatibility of $\ket{\textrm{W}_{5}}$: $I_{{\bf V_{1}}}^{\textrm{pent}}(\ket{\textrm{W}_{5}}\!\!\bra{\textrm{W}_{5}})$ has an eigenvalue of $\approx -0.3942$ and $I_{{\bf V_{2}}}^{\textrm{pent}}(\ket{\textrm{W}_{5}}\!\!\bra{\textrm{W}_{5}})$ has an eigenvalue of $\approx -0.1000$. However, only $I_{{\bf V_{1}}}^{\textrm{pent}}$ witnesses the pentagon-incompatibility of $\ket{\textrm{GHZ}_{5}}$: $I_{{\bf V_{1}}}^{\textrm{pent}}(\ket{\textrm{GHZ}_{5}}\!\!\bra{\textrm{GHZ}_{5}})$ has an eigenvalue of $-0.125$ while the smallest eigenvalue of $I_{{\bf V_{2}}}^{\textrm{pent}}(\ket{\textrm{GHZ}_{5}}\!\!\bra{\textrm{GHZ}_{5}})$ is $0$. This demonstrates that both witnesses are non-trivial and different, however the question of whether $I_{{\bf V_{2}}}^{\textrm{pent}}(\rho) \ngeq 0$ implies that $I_{{\bf V_{1}}}^{\textrm{pent}}(\rho) \ngeq 0$ remains open.

There are still further ways to define ${\bf V}$ to be different choices of set partitions of $\textsf{Vnodes}(G')$, which may lead to other distinct pentagon-incompatibility witnesses. Furthermore, just as with the triangle scenario, it is possible to consider different Cut inflations, of which there are now $5$ in total. Between these two degrees of freedom, in choosing the Cut inflation and in choosing how to define ${\bf V}$, there are potentially many pentagon-incompatibility witnesses that can be obtained. Not all will be distinct as different choices of $G'$ and ${\bf V}$ can lead to the same witness, and moreover, not every witness is guaranteed to be non-redundant with respect to the other witnesses. A full characterization of the set of non-trivial and non-redundant pentagon-witnesses obtained via the method presented here is left for future work.

\subsection{Example: woven hexagon scenario}

In the pentagon scenario, each of the latent nodes has the same number of children as do the latent nodes in the triangle scenario, namely $2$. It is in fact possible to derive incompatibility witnesses for network scenarios exhibiting greater connectivity, as we now demonstrate by considering the woven hexagon scenario (\Cref{fig:pent_hex}(c)). In this scenario, every latent node has three children which impacts both the definition of what can be considered a Cut inflation as well as the properties of the corresponding injectable and expressible sets. 

The defining feature of the Cut inflations of both the triangle and pentagon scenarios is that the visible nodes between which the cut occurs, share no parents. This will be the guiding principle for defining the Cut inflations for the woven hexagon scenario here as well as the Cut inflations of the general family of networks considered below. 

Let $G$ denote the woven hexagon scenario, which has visible and latent nodes given by
\begin{equation}
\begin{gathered}
\textsf{Vnodes}(G) = \{A^{[1]}, A^{[2]},A^{[3]},A^{[4]},A^{[5]},A^{[6]}\}, \\
\textsf{Lnodes}(G) = \{L^{[1]}, L^{[2]},L^{[3]},L^{[4]},L^{[5]},L^{[6]} \}.
\end{gathered}
\end{equation}
Let $G'$ denote the Cut inflation of $G$ for the cut between $A_{1}^{[6]}$ and $A_{1}^{[1]}$, which has visible and latent nodes
\begin{equation}
\begin{gathered}
\textsf{Vnodes}(G') = \{A_{1}^{[1]}, A_{1}^{[2]},A_{1}^{[3]},A_{1}^{[4]},A_{1}^{[5]},A_{1}^{[6]}\}, \\
\textsf{Lnodes}(G') = \{L_{1}^{[1]}, L_{1}^{[2]},L_{1}^{[3]},L_{1}^{[4]},L_{1}^{[5]},L_{1}^{[6]},L_{2}^{[5]},L_{2}^{[6]} \}. 
\end{gathered}
\end{equation}
Similarly to the pentagon scenario, the connectivity of $G$ is given by: $L^{[j]}$ is a parent of $A^{[j]}, A^{[j+1]},A^{[j+2]}$ for each $j \in \{1,...,6\}$ with every index value greater than $6$ being taken modulo $6$. The connectivity of $G'$ is given by: $L_{1}^{[j]}$ is a parent of $A_{1}^{[j]}, A_{1}^{[j+1]}, A_{1}^{[j+2]}$ for all $j \in \{1,...,4\}$, $L_{1}^{[5]}$ is a parent of $A_{1}^{[5]}$ and $A_{1}^{[6]}$, $L_{1}^{[6]}$ is a parent only of $A_{1}^{[6]}$, $L_{2}^{[5]}$ is a parent only of $A_{1}^{[1]}$ and $L_{2}^{[6]}$ is a parent of $A_{1}^{[1]}$ and $A_{1}^{[2]}$. The sets
\begin{gather}
\{A_{1}^{[1]}, A_{1}^{[2]}, A_{1}^{[3]}, A_{1}^{[4]} \}, \quad \{A_{1}^{[2]}, A_{1}^{[3]}, A_{1}^{[4]}, A_{1}^{[5]} \} \\ 
\{A_{1}^{[3]}, A_{1}^{[4]}, A_{1}^{[5]}, A_{1}^{[6]} \},
\end{gather}
and all subsets thereof, are injectable in $G'$ (see \Cref{app:fam_networks} for details), while the sets
\begin{gather}
\{A_{1}^{[1]}, A_{1}^{[2]}, A_{1}^{[3]}, A_{1}^{[6]} \}, \quad \{A_{1}^{[1]}, A_{1}^{[2]}, A_{1}^{[5]}, A_{1}^{[6]} \} \\ 
\{A_{1}^{[1]}, A_{1}^{[4]}, A_{1}^{[5]}, A_{1}^{[6]} \},
\end{gather}
and all subsets thereof, are expressible.

Defining ${\bf V} := \{A_{1}^{[1]}A_{1}^{[6]}, A_{1}^{[2]}A_{1}^{[3]}, A_{1}^{[4]}A_{1}^{[5]}\}$, that is, with respect to the partition $\{\mathcal{S}_{1}, \mathcal{S}_{2}, \mathcal{S}_{3}\}$ of $\textsf{Vnodes}(G')$ where $\mathcal{S}_{1} := \{A_{1}^{[1]}, A_{1}^{[6]} \}$, $\mathcal{S}_{2} := \{A_{1}^{[2]},A_{1}^{[3]}\}$ and $\mathcal{S}_{3} := \{A_{1}^{[4]}, A_{1}^{[5]}\}$, we find that the set of contexts $2^{{\bf V}}\setminus {\bf V}$ is 
\begin{gather}
\left\{A_{1}^{[1]}A_{1}^{[2]}A_{1}^{[3]}A_{1}^{[6]},A_{1}^{[2]}A_{1}^{[3]}A_{1}^{[4]}A_{1}^{[5]}, A_{1}^{[1]}A_{1}^{[4]}A_{1}^{[5]}A_{1}^{[6]}, \right. \\
\left. A_{1}^{[1]}A_{1}^{[6]}, A_{1}^{[2]}A_{1}^{[3]}, A_{1}^{[4]}A_{1}^{[5]} \right\},
\end{gather}
all of which are injectable or expressible. Thus, we are in a position to derive a woven hexagon-incompatibility witness, namely, 
\begin{align}
I_{{\bf V}}^{\textrm{w-hex}}(\rho) &:= \mathbb{1} - \rho_{A^{[1]}} \otimes \rho_{A^{[6]}} - \rho_{A^{[2]}A^{[3]}} - \rho_{A^{[4]}A^{[5]}} \nonumber \\
&+ \rho_{A^{[1]}} \otimes \rho_{A^{[2]}A^{[3]}} \otimes \rho_{A^{[6]}} + \rho_{A^{[1]}} \otimes \rho_{A^{[4]}A^{[5]}A^{[6]}} \nonumber \\
&+ \rho_{A^{[2]}A^{[3]}A^{[4]}A^{[5]}}.
\end{align}
This is non-trivial as we now show. As with the pentagon scenario, let us consider generalizations of the GHZ and W states:
\begin{equation}
\begin{gathered}
\ket{\textrm{GHZ}_{6}} := \frac{1}{\sqrt{2}}(\ket{000000}+\ket{111111}),\\
\ket{\textrm{W}_{6}} := \frac{1}{\sqrt{6}}\sum_{\substack{i,j,k,l,m,o \in \{0,1\},\\ i+j+k+l+m+o=1}} \ket{ijklmo}.
\end{gathered}
\end{equation}
We then have that $I_{{\bf V}}^{\textrm{w-hex}}(\ket{\textrm{GHZ}_{6}}\!\!\bra{\textrm{GHZ}_{6}})$ has an eigenvalue of $-0.125$ and $I_{{\bf V}}^{\textrm{w-hex}}(\ket{\textrm{W}_{6}}\!\!\bra{\textrm{W}_{6}})$ has an eigenvalue of $\approx -0.1028$, thereby witnessing the incompatibility of these two states with the woven hexagon scenario.

Again, there are a couple of possible choices for ${\bf V}$ such as by taking ${\bf V} = \{A_{1}^{[1]}A_{1}^{[2]}, A_{1}^{[3]}A_{1}^{[4]},A_{1}^{[5]}A_{1}^{[6]}\}$ instead. 

Unlike the pentagon scenario, there is no possibility of defining ${\bf V}$ such that $|{\bf V}| = 5$: this would mean that some term in the expression $\Delta(\{\sigma_{\bf X}: {\bf X}\in 2^{\bf V}\!\setminus\!{\bf V}  \})$ would correspond to a state on $5$ visible nodes of $G'$, but $G'$ has no injectable or expressible sets of size greater than $4$.

\subsection{Incompatibility witnesses from Cut inflations of a family of networks} 

The above examples demonstrate that it is indeed possible to derive incompatibility witnesses for more networks than just the triangle scenario, many of which require a choice of ${\bf V}$ where the elements are composites of the visible nodes of the inflation DAG. Furthermore, the witnesses derived for all the examples considered so far --- the triangle, pentagon and woven hexagon scenarios --- have made use of Cut inflations. In this section, we demonstrate that this extends to a family of networks that include the triangle, pentagon and woven hexagon scenarios as special cases.

The members of the family of networks that we consider will be denoted by $G(n,l)$, where $n > l \geq 2$. The network $G(n,l)$ has visible and latent nodes given by
\begin{equation}
\begin{gathered}
\textsf{Vnodes}(G(n,l)) = \{A^{[1]},...,A^{[n]} \}, \\
\textsf{Lnodes}(G(n,l)) = \{L^{[1]},...,L^{[n]}\}
\end{gathered}
\end{equation}
and has connectivity given by: $L^{[j]}$ is a parent of $A^{[j]}, A^{[j+1]},...,A^{[j+l-1]}$ for all $j \in \{1,...,n\}$, where every index with value $i$ such that $i > n$ or $i \leq 0$ is considered modulo $n$.\footnote{The convention of considering index values greater than $n$ or less than or equal to $0$ to be taken modulo $n$ is a convention that will be applied throughout this section and in the corresponding appendix, \Cref{app:fam_networks}. This also applies to set notation such as $\{j-l+2,...,j\}$ if it corresponds to the possible values an index can take.} So defined, $G(n,l)$ has $n$ visible nodes and $n$ latent nodes, with each latent node having precisely $l$ children and each visible node having precisely $l$ parents. With this notation, $G(3,2)$ is the triangle scenario, $G(5,2)$ is the pentagon scenario, and $G(6,3)$ is the woven hexagon scenario. 

Cut inflations of members of this family will be denoted $G'(n,l,j)$, where $j \in \{1,...,n\}$ indicates the location of the cut. As stated during the treatment of the woven hexagon scenario above, the defining feature of a Cut inflation is that the visible nodes that appear on either side of the cut do not share any parents. Accordingly, if $G'(n,l,j)$ denotes a Cut inflation of $G$ where the cut is between $A_{1}^{[j]}$ and $A_{1}^{[j+1]}$, the visible and latent nodes of $G'(n,l,j)$ are
\begin{equation}
\begin{gathered}
\textsf{Vnodes}(G'(n,l,j)) = \{A_{1}^{[1]},...,A_{1}^{[n]} \}, \\
\textsf{Lnodes}(G'(n,l,j)) = \{L_{1}^{[1]},...,L_{1}^{[n]},L_{2}^{[n-j+2]},...,L_{2}^{[j]} \}.
\end{gathered}
\end{equation}
The connectivity of $G'(n,l,j)$ is given by: $L_{1}^{[i]}$ is a parent of $A_{1}^{[i]}, A_{1}^{[i+1]}, ..., A_{1}^{[i+l-1]}$ if $i \notin \{j-l+2, ..., j \}$, $L_{1}^{[i]}$ is a parent of $A_{1}^{[i]}, A_{1}^{[i+1]},...,A_{1}^{[j]}$ if $i \in \{j-l+2, ..., j\}$, and $L_{2}^{[i]}$ is a parent of $A_{1}^{[j+1]}, A_{1}^{[j+2]}, ..., A_{1}^{[i + l - 1]}$ for $i \in \{j-l+2, ..., j\}$. With this notation, the Cut inflations considered above for the pentagon and woven hexagon scenarios are $G(5,2,5)$ and $G(6,3,6)$ respectively.

We then have the following result:
\begin{Theorem} \label{thm:family_of_networks} Let $k \in \mathbb{N}_{> 1}$ be odd. If $n \geq (l-1)k$, then there exists a partition $\{\mathcal{S}_{1},...,\mathcal{S}_{k}\}$ of $\textsf{Vnodes}(G'(n,l,j))$ such that, taking ${\bf V} := \{\mathcal{S}_{1},...,\mathcal{S}_{k}\}$, all elements of $2^{{\bf V}}\setminus {\bf V}$ are injectable or expressible in $G'(n,l,j)$.
\end{Theorem}
The proof is given in \Cref{app:fam_networks} and consists in generalizing some of the common features of the proofs for the injectable and expressible sets for the Cut inflations of the triangle, pentagon and woven hexagon scenarios for the general case. This theorem demonstrates that incompatibility witnesses can be found using the method presented in this work for every network in the set $\{G(n,l)| \frac{n}{l-1} \geq 3 \}$. A thorough analysis of which of these witnesses prove fruitful for testing $G(n,l)$-incompatibility, as well as an investigation into which other causal structures admit similar witnesses, is left for future work.

\section{Discussion} \label{sec:dicussion}

Demonstrating the compatibility of a probability distribution or quantum state with a causal structure is in general a difficult task. In this article we have presented a method for witnessing the incompatibility of quantum states based on a family of quantum marginal constraints (in the form of the operator inequalities of Eq.~(\ref{eq:Hall_ineqs})) in conjunction with the inflation technique. We have furthermore presented an array of results that demonstrate that this method can indeed be used to make progress on problems of causal compatibility.

One of the key benefits of this approach pertains to its scalability. As demonstrated in, e.g., Ref~\cite{Hall_SDP_07}, quantum marginal problems admit a semidefinite programming formulation. Accordingly, when using the inflation technique for investigating questions of causal compatibility, one can always consider the corresponding SDP. However, in the cases where a witness can be obtained by a marginal inequality, such as those in Eq.~(\ref{eq:Hall_ineqs}), there is a slight advantage in terms of computational complexity as we now describe. Denoting the product of the dimensions of the subsystems involved by $D$, the complexity of the SDP associated to the quantum marginal problem is $\mathcal{O}(D^{\frac{13}{2}})$~\cite{Hall_SDP_07}. For example, for the marginal problems derived from a choice of inflation of a causal structure with qubit visible nodes, the SDP complexity would be $\mathcal{O}(2^{\frac{13n}{2}})$, where $n$ is the number of visible nodes in the inflation DAG. Alternatively, the complexity of finding the eigendecomposition of a $D \times D$ matrix in practice is typically stated as being $\mathcal{O}(D^{3})$ (improvements to the exponent exist in certain cases). Thus, producing an incompatibility witness corresponding to an inflation  DAG with $n$ qubit visible nodes allows us to perform an incompatibility test with complexity $\mathcal{O}(2^{3n})$ or better. This is clearly not an exponential speed-up over the SDP case, but it is a speed-up nonetheless.

As a consequence of this favourable scaling, we believe the method presented here has pragmatic value. As experimental realizations of quantum networks improve (see, e.g.,~\cite{Krutyanskiy_23,pompili2021realization,Liao_18}), it will be increasingly important to have practical methods for verifying the structure of such networks. For example, using the methods presented above it may be possible to verify that a tomographically-specified network state could not have been produced in a causal structure representing the desired network,  thereby  revealing a deficiency in the experimentalist's account of the set-up. 
  Furthermore, an understanding of whether it is possible to witness incompatibility via a probability distribution arising from local measurements allows for a better appreciation of the requirements for verifying a network structure.

There are also potential benefits of the approach presented here for classical causal inference. Since the quantum compatibility problem subsumes the classical compatibility problem (as every distribution can be expressed as a diagonal density operator), a test for compatibility of quantum states is also a test for compatibility of distributions. Sometimes the more abstract problem leads to progress on the more concrete problem, so it is conceivable that investigations of the sort described here and comparisons of the classical and fully quantum causal inference problems more generally, might ultimately yield dividends for classical causal inference.

The notion of a causal model can be generalized not just from classical theories to quantum, but to any generalized probabilistic theory (GPT) (see e.g.,~\cite{plavala2023general}). Boxworld~\cite{Barrett_07}, the toy theory of Ref.~\cite{Spekkens2007Evidence}, and real quantum theory\cite{caves2002unknown,wootters1990local} are prominent examples.  These theories are generally studies as foils to quantum theory, clarifying the meaning of the latter. As such, an interesting problem for future research is to study the causal compatibility problem within generalized probabilistic theories of interest and compare to the quantum case.   As emphasized in this article, if one wishes to derive such constraints using the inflation technique, then one must have some constraints coming from the marginal problem to leverage.  This, therefore, motivates a study of the marginal problem in various GPTs of interest.

\begin{acknowledgments}
We would like to thank Pedro Lauand, Phil LeMaitre, Alejandro Pozas-Kerstjens and Antonio Ac\'{i}n for helpful discussions, and Simon Milz and anonymous reviewers for pointing out several typos. We would also like to acknowledge early discussions on the problem of determining which quantum states are compatible with the triangle scenario with the participants of a 2019 Perimeter Scholars International winter school project led by RWS and EW, namely, R\'{e}mi Faure, Maria Julia Maristany, Tom\'{a}\v{s} Gonda, and T. C. Fraser.  Finally, special thanks to T. C. Fraser for pointing us to the work by Butterly {\em et al.} and Hall on the quantum marginal problem.
This research was funded in whole or in part by the Austrian Science Fund (FWF) through DK-ALM: W$1259$-N$27$. For open access purposes, the authors have applied a CC BY public copyright license to any author-accepted manuscript version arising from this submission. This research was supported in part by Perimeter Institute for Theoretical Physics. Research at Perimeter Institute is supported in part by the Government of Canada through the Department of Innovation, Science and Economic Development and by the Province of Ontario through the Ministry of Colleges and Universities.
\end{acknowledgments}

\bibliography{triangle_compat}

\newpage

\onecolumngrid
\appendix

\section{Proofs} \label{app:proofs} 

This appendix contains the proofs of the results presented in the main text.

\subsection{Proof of~\Cref{lem:xy_minus_x_y}} \label{app:proof_lem_xy_minus_x_y}

The proof of the following lemma uses properties of the max-divergence, some of which are provided below. The reader is referred to the relevant sections of, e.g.,~\cite{khatri2020principles,tomamichel2015quantum} for further details.

Let $\widetilde{D}_{\alpha}(\cdot \| \cdot)$ denote the sandwiched Rényi relative entropy where $\alpha \in (0, 1) \cup (1, \infty)$ \cite{wilde2014strong,muller2013quantum}. The relevance of this relative entropy for the purposes of~\Cref{lem:xy_minus_x_y} is via its connection to the max relative entropy \cite{datta2009min} defined via
\begin{align}
D_{\max}(\rho\|\sigma) = \log_{2} \inf\{\lambda \in \mathbb{R}: \rho \leq \lambda \sigma\}. \label{eq:d_max_def}
\end{align}
The connection between the sandwiched Rényi relative entropy and the max relative entropy is established through the following proposition:
\begin{Proposition}[\cite{khatri2020principles}] \label{prop:7.61} The sandwiched Rényi relative entropy converges to the max relative entropy in the limit $\alpha \rightarrow \infty$
\begin{align}
\lim_{\alpha \rightarrow \infty} \widetilde{D}_{\alpha}(\rho \| \sigma) = D_{\max}(\rho \| \sigma).
\end{align}
\end{Proposition}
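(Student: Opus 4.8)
The plan is to work directly from the definition of the sandwiched R\'enyi relative entropy,
\begin{align}
\widetilde{D}_{\alpha}(\rho\|\sigma) = \frac{1}{\alpha-1}\log_{2}\Tr\!\left[\left(\sigma^{\frac{1-\alpha}{2\alpha}}\rho\,\sigma^{\frac{1-\alpha}{2\alpha}}\right)^{\!\alpha}\right],
\end{align}
and to recognise the trace as an $\alpha$-th power of a Schatten norm. Writing $G_{\alpha} := \sigma^{\frac{1-\alpha}{2\alpha}}\rho\,\sigma^{\frac{1-\alpha}{2\alpha}}$, which is positive semidefinite, one has $\Tr[G_{\alpha}^{\alpha}] = \|G_{\alpha}\|_{\alpha}^{\alpha}$, so that
\begin{align}
\widetilde{D}_{\alpha}(\rho\|\sigma) = \frac{\alpha}{\alpha-1}\log_{2}\|G_{\alpha}\|_{\alpha}.
\end{align}
Since $\tfrac{\alpha}{\alpha-1}\to 1$, the claim reduces to showing $\|G_{\alpha}\|_{\alpha}\to\|G_{\infty}\|_{\infty}$ as $\alpha\to\infty$, where $G_{\infty}:=\sigma^{-1/2}\rho\,\sigma^{-1/2}$ with the inverse taken on $\supp(\sigma)$. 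The degenerate case $\supp(\rho)\not\subseteq\supp(\sigma)$ is disposed of first: there both $D_{\max}$ and (for $\alpha>1$) the sandwiched R\'enyi entropy are $+\infty$, so I would assume support containment henceforth.

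The key steps are then two limits handled separately. First, because $\tfrac{1-\alpha}{2\alpha}\to-\tfrac12$ and the operator-valued map $x\mapsto\sigma^{x}$ (defined on $\supp(\sigma)$) is continuous in $x$ in finite dimension by spectral calculus, one gets $G_{\alpha}\to G_{\infty}$ in operator norm. Second, I would split, via the triangle inequality for the Schatten $\alpha$-norm,
\begin{align}
\bigl|\,\|G_{\alpha}\|_{\alpha}-\|G_{\infty}\|_{\infty}\,\bigr| \le \|G_{\alpha}-G_{\infty}\|_{\alpha} + \bigl|\,\|G_{\infty}\|_{\alpha}-\|G_{\infty}\|_{\infty}\,\bigr|.
\end{align}
The first term obeys $\|G_{\alpha}-G_{\infty}\|_{\alpha}\le d^{1/\alpha}\|G_{\alpha}-G_{\infty}\|_{\infty}$, with $d$ the Hilbert space dimension, which vanishes as $\alpha\to\infty$ since $d^{1/\alpha}\to1$ and $\|G_{\alpha}-G_{\infty}\|_{\infty}\to0$ by the first step. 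The second term vanishes because, for the fixed positive operator $G_{\infty}$, the Schatten $p$-norms decrease monotonically to the operator norm as $p\to\infty$. Hence $\|G_{\alpha}\|_{\alpha}\to\|G_{\infty}\|_{\infty}$.

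It remains to identify this limit with $D_{\max}$. Using the equivalence $\rho\le\lambda\sigma \iff \sigma^{-1/2}\rho\,\sigma^{-1/2}\le\lambda\,\Pi_{\sigma}$ on $\supp(\sigma)$, the infimal admissible $\lambda$ in the definition \eqref{eq:d_max_def} equals the largest eigenvalue of $G_{\infty}$, i.e. $\|G_{\infty}\|_{\infty}$, so $D_{\max}(\rho\|\sigma)=\log_{2}\|G_{\infty}\|_{\infty}$, matching the limit computed above.

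I expect the genuine obstacle to be the double-limit character of the argument: the operator $G_{\alpha}$ whose Schatten $\alpha$-norm is taken itself varies with $\alpha$, so one cannot simply freeze the operator while sending the norm index to infinity. The uniform dimension factor $d^{1/\alpha}$ together with the operator-norm continuity of $\alpha\mapsto G_{\alpha}$ is precisely what legitimises the interchange of the two limits; everything else is routine spectral calculus.
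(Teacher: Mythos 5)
The paper does not actually prove this proposition: it is imported verbatim, with a citation to Ref.~\cite{khatri2020principles}, as a known property of the sandwiched R\'enyi relative entropy, so there is no in-paper argument to compare yours against. Your blind proof is, however, correct and self-contained in the finite-dimensional setting relevant here. The reduction $\widetilde{D}_{\alpha}(\rho\|\sigma)=\frac{\alpha}{\alpha-1}\log_{2}\|G_{\alpha}\|_{\alpha}$ is exact because $G_{\alpha}\ge 0$; the case split on support containment is the right way to dispose of the $+\infty$ branch; the two-term triangle-inequality split correctly isolates the double-limit issue, and the bound $\|G_{\alpha}-G_{\infty}\|_{\alpha}\le d^{1/\alpha}\|G_{\alpha}-G_{\infty}\|_{\infty}$ together with operator-norm continuity of $x\mapsto\sigma^{x}$ on $\supp(\sigma)$ legitimately interchanges the limits; and the identification $D_{\max}(\rho\|\sigma)=\log_{2}\|\sigma^{-1/2}\rho\,\sigma^{-1/2}\|_{\infty}$ via conjugation by $\sigma^{\pm 1/2}$ (using $\Pi_{\sigma}\rho\,\Pi_{\sigma}=\rho$ under support containment) is exactly the standard characterization of the infimal $\lambda$. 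The only step left tacit is that $\|G_{\infty}\|_{\infty}>0$, needed for continuity of the logarithm at the limit, but this is immediate since $\sigma^{-1/2}\rho\,\sigma^{-1/2}=0$ would force $\rho=0$. Note also that your dimension-factor argument gives a transparent alternative to the monotonicity-in-$\alpha$ plus supremum characterization one often sees in textbook treatments of this limit, at the (harmless, for this paper) cost of being restricted to finite dimension.
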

As a consequence of the above proposition, the max relative entropy shares a number of properties with the sandwiched Rényi relative entropies, including those stated below:
\begin{Proposition}[\cite{khatri2020principles}] \label{prop:7.35} The sandwiched Rényi relative entropy $\widetilde{D}_{\alpha}$ satisfies the following properties for every states $\rho$ and positive semi-definite operator $\sigma$ for $\alpha \in [1/2, 1) \cup (1, \infty)$.
\begin{enumerate}
    \item \label{item:positivity} If $\Tr[\sigma] \leq \Tr[\rho]=1$, then $\widetilde{D}_{\alpha}(\rho \| \sigma) \geq 0$.
    \item \label{item:faithfulness} \textit{Faithfulness:} If $\Tr[\sigma] \leq 1$, we have that $\widetilde{D}_{\alpha}(\rho \| \sigma) = 0$ if and only if $\rho = \sigma$.
    \item \label{item:negativity} If $\rho \leq  \sigma$, then $\widetilde{D}_{\alpha}(\rho \| \sigma) \leq 0$.
    \item For every PSD operator $\sigma'$ such that $\sigma' \geq \sigma$, we have $\widetilde{D}_{\alpha}(\rho \| \sigma) \geq \widetilde{D}_{\alpha}(\rho \| \sigma')$.
\end{enumerate}
\end{Proposition}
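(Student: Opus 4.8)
The plan is to reduce all four claims to a single technical engine — monotonicity of $\widetilde{D}_{\alpha}(\rho\|\sigma)$ under enlarging $\sigma$, i.e. item~\ref{item:negativity}'s companion (the fourth listed property) — together with two elementary identities and one external fact about the fidelity. Throughout I write $\widetilde{Q}_{\alpha}(\rho\|\sigma) := \Tr[(\sigma^{\frac{1-\alpha}{2\alpha}}\rho\,\sigma^{\frac{1-\alpha}{2\alpha}})^{\alpha}]$, so that $\widetilde{D}_{\alpha} = \frac{1}{\alpha-1}\log_{2}\widetilde{Q}_{\alpha}$. The first move is the rewrite $\widetilde{Q}_{\alpha}(\rho\|\sigma) = \Tr[(\rho^{1/2}\sigma^{\beta}\rho^{1/2})^{\alpha}]$ with $\beta := \frac{1-\alpha}{\alpha}$, obtained from the Schatten identity $\Tr[(A^{\dagger}A)^{\alpha}]=\Tr[(AA^{\dagger})^{\alpha}]$ applied to $A=\rho^{1/2}\sigma^{\frac{1-\alpha}{2\alpha}}$. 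I also record two scalar computations: $\widetilde{D}_{\alpha}(\rho\|\rho)=0$ (since $\rho^{1/2}\rho^{\beta}\rho^{1/2}=\rho^{1/\alpha}$ and the trace of its $\alpha$-th power is $1$), and the rescaling identity $\widetilde{D}_{\alpha}(\rho\|c\sigma)=\widetilde{D}_{\alpha}(\rho\|\sigma)-\log_{2}c$ for $c>0$, which follows because $\widetilde{Q}_{\alpha}$ is homogeneous of degree $1-\alpha$ in $\sigma$.

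For the fourth property I track the sign of $\beta$ in the two regimes. For $\alpha>1$ one has $\beta\in(-1,0)$, so $t\mapsto t^{\beta}$ is operator anti-monotone and $\sigma'\ge\sigma$ gives $(\sigma')^{\beta}\le\sigma^{\beta}$; for $\alpha\in[1/2,1)$ one has $\beta\in(0,1]$, so $t\mapsto t^{\beta}$ is operator monotone and the inequality reverses. In both cases I push the resulting operator inequality through the congruence $X\mapsto\rho^{1/2}X\rho^{1/2}$, which preserves the positive-semidefinite order, and then through the trace functional $X\mapsto\Tr[X^{\alpha}]$, which is monotone on PSD operators by Weyl's monotonicity theorem (eigenvalues are ordered under $\le$ and $t\mapsto t^{\alpha}$ is increasing). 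This yields $\widetilde{Q}_{\alpha}(\rho\|\sigma')\le\widetilde{Q}_{\alpha}(\rho\|\sigma)$ for $\alpha>1$ and the reverse for $\alpha<1$; in each regime the sign of the prefactor $\frac{1}{\alpha-1}$ is exactly the one needed to conclude $\widetilde{D}_{\alpha}(\rho\|\sigma)\ge\widetilde{D}_{\alpha}(\rho\|\sigma')$. Item~\ref{item:negativity} is then immediate: if $\rho\le\sigma$, applying the fourth property with the second arguments ordered as $\rho\le\sigma$ gives $\widetilde{D}_{\alpha}(\rho\|\sigma)\le\widetilde{D}_{\alpha}(\rho\|\rho)=0$.

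For item~\ref{item:positivity} the rescaling identity reduces the claim to two normalized states: writing $\sigma=s\bar{\sigma}$ with $s=\Tr[\sigma]\le1$ and $\bar{\sigma}$ a state, one gets $\widetilde{D}_{\alpha}(\rho\|\sigma)=\widetilde{D}_{\alpha}(\rho\|\bar{\sigma})-\log_{2}s$ where $-\log_{2}s\ge0$, so it suffices to show $\widetilde{D}_{\alpha}(\rho\|\bar{\sigma})\ge0$ for states. Here I would use the anchor value $\widetilde{D}_{1/2}(\rho\|\bar{\sigma})=-2\log_{2}F(\rho,\bar{\sigma})$, nonnegative because the Uhlmann fidelity satisfies $F\le1$ for states, together with the standard monotonicity of $\widetilde{D}_{\alpha}$ in $\alpha$, giving $\widetilde{D}_{\alpha}\ge\widetilde{D}_{1/2}\ge0$ for all $\alpha\ge1/2$ (equivalently, positivity can be invoked directly from data processing applied to the trace channel). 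Item~\ref{item:faithfulness} then follows by combining positivity with the equality analysis: the ``if'' direction is the computation $\widetilde{D}_{\alpha}(\rho\|\rho)=0$, while for ``only if'' the hypothesis $\widetilde{D}_{\alpha}(\rho\|\sigma)=0$ splits as $\widetilde{D}_{\alpha}(\rho\|\bar{\sigma})+\log_{2}(1/s)=0$ with both summands nonnegative, forcing $s=1$ and $\widetilde{D}_{\alpha}(\rho\|\bar{\sigma})=0$; the latter forces $F(\rho,\bar{\sigma})=1$, hence $\rho=\bar{\sigma}=\sigma$ by the equality condition for fidelity.

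The main obstacle I anticipate is the fourth property, and specifically the bookkeeping of keeping every inequality pointing the right way across the split $\alpha>1$ versus $\alpha\in[1/2,1)$: the exponent $\beta$ changes sign, the operator power switches between monotone and anti-monotone, and the prefactor $\frac{1}{\alpha-1}$ changes sign, so all three flips must compose to give a regime-independent conclusion. The positivity input for genuine states is the other delicate point, since it is the one step that does not follow from the algebra of $\widetilde{Q}_{\alpha}$ alone but genuinely requires an external convexity-type fact — the fidelity value at $\alpha=1/2$ together with $\alpha$-monotonicity, or the data-processing inequality.
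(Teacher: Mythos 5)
Your proposal is correct in substance, but there is nothing in the paper to match it against: the paper does not prove this proposition at all. It is imported verbatim, with citation, from the book of Khatri and Wilde, and is used purely as an input (together with the $\alpha\to\infty$ limit of \Cref{prop:7.61}) to derive the max-relative-entropy properties needed in the proof of \Cref{lem:xy_minus_x_y}. So what you have done is reconstruct the textbook proof rather than diverge from the paper. Your reconstruction is sound: the rewrite $\widetilde{Q}_{\alpha}(\rho\|\sigma)=\Tr[(\rho^{1/2}\sigma^{\beta}\rho^{1/2})^{\alpha}]$ with $\beta=\tfrac{1-\alpha}{\alpha}$ is valid ($A^{\dagger}A$ and $AA^{\dagger}$ share nonzero spectrum), the sign bookkeeping in the fourth property is right ($\beta\in(0,1]$ operator monotone by L\"owner--Heinz for $\alpha\in[1/2,1)$; $\beta\in(-1,0)$ operator anti-monotone for $\alpha>1$; congruence and Weyl monotonicity of $X\mapsto\Tr[X^{\alpha}]$ both preserve the order; the sign of $\tfrac{1}{\alpha-1}$ flips in exactly the compensating way), and item~\ref{item:negativity} follows from it together with $\widetilde{D}_{\alpha}(\rho\|\rho)=0$ exactly as you say. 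Two points deserve tightening. First, for $\alpha>1$ and singular $\sigma$ the negative power $\sigma^{\beta}$ is only a generalized inverse, and anti-monotonicity of $t\mapsto t^{\beta}$ under $\sigma\le\sigma'$ is not automatic for pseudo-inverses with mismatched supports; the standard repair is to prove the inequality for $\sigma+\epsilon\mathbb{1}\le\sigma'+\epsilon\mathbb{1}$ and take $\epsilon\to0$, with the convention $\widetilde{D}_{\alpha}=+\infty$ when $\supp(\rho)\not\subseteq\supp(\sigma)$. Second, in the ``only if'' direction of faithfulness you conclude $F(\rho,\bar{\sigma})=1$ from $\widetilde{D}_{\alpha}(\rho\|\bar{\sigma})=0$ at the \emph{given} $\alpha$, which silently requires a second application of $\alpha$-monotonicity (the sandwich $0\le\widetilde{D}_{1/2}\le\widetilde{D}_{\alpha}=0$); make that explicit. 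Finally, be aware that your argument is not self-contained: the anchor $\widetilde{D}_{1/2}=-2\log_{2}F$ plus $\alpha$-monotonicity (or, alternatively, the data-processing inequality applied to the trace channel, which handles $\Tr[\sigma]\le1$ directly) are themselves nontrivial theorems. Importing them is perfectly acceptable here, since the proposition itself is quoted rather than proved in the paper, but the trade-off is worth stating: the paper buys the whole package by citation, while your route makes visible exactly which external facts carry the load.
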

The first three items above are of the most relevance for the proof the lemma:

\begin{proof}[Proof of~\Cref{lem:xy_minus_x_y}] Let $\rho_{xy}$ be such that $\rho_{xy} \neq \rho_{x} \otimes \rho_{y}$. Suppose for a contradiction that $\rho_{xy}  \leq \rho_{x} \otimes \rho_{y}$. From the consequences of~\Cref{prop:7.61}, we know that the max relative entropy satisfies~\Cref{item:positivity,item:faithfulness,item:negativity} from~\Cref{prop:7.35}. Since $\rho_{xy}$ and $\rho_{x} \otimes \rho_{y}$ are both quantum states and hence of unit trace, from~\Cref{item:positivity} we can conclude that $D_{\max}(\rho_{xy} \| \rho_{x} \otimes \rho_{y}) \geq 0$. From the assumption that $\rho_{x,y} \neq \rho_{x} \otimes \rho_{y}$, we can conclude from~\Cref{item:faithfulness} that $D_{\max}(\rho_{xy} \| \rho_{x} \otimes \rho_{y}) \neq 0$ and hence must be strictly positive. However, the assumption that $\rho_{xy} \leq \rho_{x} \otimes \rho_{y}$ and~\Cref{item:negativity} imply that $D_{\max}(\rho_{xy} \| \rho_{x} \otimes \rho_{y}) \leq 0$, forming the desired contradiction. The conclusion that $\rho_{xy} \leq \rho_{x} \otimes \rho_{y}$ implies that $D_{\max}(\rho_{xy} \| \rho_{x} \otimes \rho_{y}) \leq 0$ can be seen directly from~\Cref{eq:d_max_def} since $\rho_{xy} \leq \rho_{x} \otimes \rho_{y}$ is equivalent to $\rho_{xy} \leq \lambda \rho_{x} \otimes \rho_{y}$ for $\lambda = 1$. It follows that
\begin{align}
\inf\{\lambda \in \mathbb{R}: \rho_{xy} \leq \lambda \rho_{x} \otimes \rho_{y}\} \leq 1
\end{align}
and so taking the logarithm produces a non-positive value for $D_{\max}(\rho_{xy} \| \rho_{x} \otimes \rho_{y})$.
\end{proof}

\subsection{Distribution-witnessability of W-state} \label{app:distn_witn_W_state}

Here we demonstrate that, unlike the state $\rho^{({\rm Wdistn})}$ that encodes the $W$-distribution, the triangle-incompatibility of the state $\rho^{({\rm W})}:=\ket{{\rm W}}\!\!\bra{{\rm W}}$ with $\ket{{\rm W}} = \frac{1}{\sqrt{3}}(\ket{001}+\ket{010} +\ket{100})$ is distribution-witnessable using the cut-inflation. 

Recall that we can write the triangle-incompatibility witness $I_{AB}$ as 
\begin{align}
I_{AB}(\rho^{({\rm W})}) = \Delta(\{\rho_{{\bm X}}^{({\rm W})} : {\bm X} \in 2^{{\bm V}}\setminus {\bm V} \}) + (\rho_{A}^{({\rm W})} \otimes \rho_{B}^{({\rm W})} - \rho_{AB}^{({\rm W})})\otimes I_{C}
\end{align}
where ${\bm V} = \{A,B,C\}$. We have that
\begin{align}
\Delta(\{\rho_{{\bm X}}^{({\rm W})} : {\bm X} \in 2^{{\bm V}}\setminus {\bm V} \}) = \begin{bmatrix}
0 & 0 & 0 & 0 & 0 & 0 & 0 & 0 \\
 0 & \frac{1}{3} & \frac{1}{3} & 0 & \frac{1}{3} & 0 & 0 & 0 \\
 0 & \frac{1}{3} & \frac{1}{3} & 0 & \frac{1}{3} & 0 & 0 & 0 \\
 0 & 0 & 0 & \frac{1}{3} & 0 & \frac{1}{3} & \frac{1}{3} & 0 \\
 0 & \frac{1}{3} & \frac{1}{3} & 0 & \frac{1}{3} & 0 & 0 & 0 \\
 0 & 0 & 0 & \frac{1}{3} & 0 & \frac{1}{3} & \frac{1}{3} & 0 \\
 0 & 0 & 0 & \frac{1}{3} & 0 & \frac{1}{3} & \frac{1}{3} & 0 \\
 0 & 0 & 0 & 0 & 0 & 0 & 0 & 0 
\end{bmatrix}
\end{align}
and 
\begin{align}
\rho_{A}^{({\rm W})} \otimes \rho_{B}^{({\rm W})} - \rho_{AB}^{({\rm W})} = \begin{bmatrix}
\frac{1}{9} & 0 & 0 & 0 \\
 0 & -\frac{1}{9} & -\frac{1}{3} & 0 \\
 0 & -\frac{1}{3} & -\frac{1}{9} & 0 \\
 0 & 0 & 0 & \frac{1}{9}
\end{bmatrix}.
\end{align}
Consequently, we have that
\begin{align}
\bra{++-}I_{AB}(\rho^{({\rm W})})\ket{++-} &= \bra{++-}\Delta(\{\rho_{{\bm X}}^{({\rm W})} : {\bm X} \in 2^{{\bm V}}\setminus {\bm V} \})\ket{++-} + \bra{++}\rho_{A}^{({\rm W})} \otimes \rho_{B}^{({\rm W})} - \rho_{AB}^{({\rm W})}\ket{++} \\
&= \frac{1}{12} - \frac{1}{6}
\end{align}
and so the probability distribution that results from measuring each subsystem of $\rho^{(W)}$ in the $X$-basis is triangle-incompatible.

\subsection{Proof of~\Cref{thm:main}} \label{app:proof_cor_pure_qubit_incompat}

Prior to proving the theorem, it is convenient to show the following lemma:

\begin{Lemma} \label{lem:lin_ind} If $\ket{\omega_{AB}} \in \H_{A} \otimes \H_{B}$, $\ket{\omega_{AC}} \in \H_{A} \otimes \H_{C}$ and $\ket{\omega_{BC}} \in \H_{B} \otimes \H_{C}$ are non-zero, then the sets
\begin{align}
\{ \ket{\omega_{AB}}\otimes \ket{0}_{C}, \ket{\omega_{AB}} \otimes \ket{1}_{C}, \ket{\omega_{AC}} \otimes \ket{0}_{B}, \ket{\omega_{AC}} \otimes \ket{1}_{B} \}, \label{eq:AB_AC_lin_ind} \\
\{ \ket{\omega_{AB}} \otimes \ket{0}_{C}, \ket{\omega_{AB}} \otimes \ket{1}_{C}, \ket{\omega_{BC}} \otimes \ket{0}_{A}, \ket{\omega_{BC}} \otimes \ket{1}_{A} \}, \\
\{ \ket{\omega_{AC}} \otimes \ket{0}_{B}, \ket{\omega_{AC}} \otimes \ket{1}_{B}, \ket{\omega_{BC}} \otimes \ket{0}_{A}, \ket{\omega_{BC}} \otimes \ket{1}_{A} \},
\end{align}
each contain at least three linearly independent vectors.
\end{Lemma}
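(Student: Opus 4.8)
The plan is to prove the statement by reducing it, for each of the three sets, to showing that two naturally associated two-dimensional subspaces are distinct.

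First I would observe that in each set the four vectors come in two pairs, and each pair spans a two-dimensional subspace of $\H_A \otimes \H_B \otimes \H_C$. For example, in the first set the pair $\{\ket{\omega_{AB}}\otimes\ket{0}_C, \ket{\omega_{AB}}\otimes\ket{1}_C\}$ spans $V_1 = \ket{\omega_{AB}}_{AB}\otimes\H_C$, while the pair $\{\ket{\omega_{AC}}\otimes\ket{0}_B, \ket{\omega_{AC}}\otimes\ket{1}_B\}$ spans $V_2 = \{\ket{\omega_{AC}}\otimes\ket{\chi}_B : \ket{\chi}_B\in\H_B\}$. These are genuinely two-dimensional because $\ket{0}$ and $\ket{1}$ are linearly independent and the vectors $\ket{\omega_{AB}},\ket{\omega_{AC}}$ are non-zero. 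Since two two-dimensional subspaces with $V_1\neq V_2$ necessarily satisfy $\dim(V_1\cap V_2)\le 1$ and hence $\dim(V_1+V_2)\ge 3$, it suffices to prove $V_1\neq V_2$; this immediately yields at least three linearly independent vectors among the four.

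The key step is to separate $V_1$ from $V_2$ using a single \emph{spectator} subsystem $S$, chosen as the subsystem that appears as a free tensor factor in one pair but lies inside the bipartite vector $\omega$ of the other pair. In the first set I would take $S=C$: every vector of $V_2$ has reduced state on $C$ proportional to the single fixed operator $\Tr_{A}[\ket{\omega_{AC}}\bra{\omega_{AC}}]$ (the $B$-factor is a free product factor, and in $\omega_{AC}$ the system $C$ is entangled only with $A$), whereas the two spanning vectors of $V_1$ have reduced states on $C$ proportional to $\ket{0}\bra{0}_C$ and $\ket{1}\bra{1}_C$, which are linearly independent. If $V_1=V_2$ held, both of these rank-one operators would have to be proportional to one fixed operator, which is impossible; hence $V_1\neq V_2$. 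The same argument handles the second set with $S=C$ and the third set with $S=B$: in each case one pair has two spanning vectors whose reduced states on $S$ are proportional to $\ket{0}\bra{0}_S$ and $\ket{1}\bra{1}_S$, while every vector of the other subspace has reduced state on $S$ proportional to a single fixed marginal of its $\omega$.

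The only real content lies in this last step, namely the bookkeeping of identifying, for each of the three sets, which subsystem plays the role of $S$ and confirming that tracing out the complementary systems yields a fixed operator for one pair and two linearly independent operators for the other. This is routine once one records the general fact that a vector of the form $\ket{\omega_{PQ}}\otimes\ket{\chi}_R$ has reduced state on $R$ proportional to $\ket{\chi}\bra{\chi}$ but reduced state on $Q$ proportional to $\Tr_{P}[\ket{\omega_{PQ}}\bra{\omega_{PQ}}]$ regardless of $\ket{\chi}$. I expect no genuine obstacle beyond carefully tracking these marginals across the three symmetric cases.
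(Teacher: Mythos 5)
Your proof is correct, but it takes a genuinely different route from the paper's. The paper argues in coordinates: writing $\ket{\omega_{AB}}=[a,b,c,d]^{\top}$ and $\ket{\omega_{AC}}=[a',b',c',d']^{\top}$, it supposes that \emph{both} $\ket{\omega_{AC}}\otimes\ket{0}_{B}$ and $\ket{\omega_{AC}}\otimes\ket{1}_{B}$ lie in the span of $\ket{\omega_{AB}}\otimes\ket{0}_{C}$ and $\ket{\omega_{AB}}\otimes\ket{1}_{C}$, and reads off from the components that the first containment forces $b=d=0$ while the second forces $a=c=0$, contradicting $\ket{\omega_{AB}}\neq 0$. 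Logically this establishes the same intermediate fact as your argument---that the two two-dimensional subspaces $V_{1}=\ket{\omega_{AB}}\otimes\H_{C}$ and $V_{2}=\ket{\omega_{AC}}\otimes\H_{B}$ cannot coincide---but by componentwise bookkeeping rather than by your partial-trace criterion. Your separation of the subspaces via the spectator marginal (every element of $V_{2}$ has $C$-marginal proportional to the fixed operator $\Tr_{A}\left[\ket{\omega_{AC}}\!\bra{\omega_{AC}}\right]$, whereas the two spanning vectors of $V_{1}$ have $C$-marginals proportional to the linearly independent operators $\ket{0}\!\bra{0}_{C}$ and $\ket{1}\!\bra{1}_{C}$) is basis-independent and works verbatim for visible systems of arbitrary finite dimension, and it explains structurally \emph{why} the subspaces differ; the dimension count $\dim(V_{1}+V_{2})\geq 3$ then finishes the job cleanly. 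What the paper's approach buys in exchange is elementarity and brevity: it needs nothing beyond matching components of $4$-component vectors, which is perfectly adapted to the three-qubit setting in which the lemma is actually deployed (the proof of the paper's Theorem~1 on pure three-qubit states). Both proofs handle the remaining two sets by the same symmetry observation you make.
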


\begin{proof} We demonstrate the proof for~\Cref{eq:AB_AC_lin_ind}; the proof for the other sets are analogous. Let $\ket{\omega_{AB}} = [a,b,c,d]^{\top}$ and $\ket{\omega_{AC}} = [a',b',c',d']^{\top}$ where at least one of $a,b,c,d$ and at least one of $a',b',c',d'$ are non-zero by assumption. Suppose that
\begin{align}
\ket{\omega_{AC}} \otimes \ket{0}_{B} &= \alpha \ket{\omega_{AB}} \otimes \ket{0}_{C} + \beta \ket{\omega_{AB}} \otimes \ket{1}_{C}, \\
\ket{\omega_{AC}} \otimes \ket{1}_{B} &= \gamma \ket{\omega_{AB}} \otimes \ket{0}_{C} + \delta \ket{\omega_{AB}} \otimes \ket{1}_{C}, 
\end{align}
where neither $\alpha$ and $\beta$ nor $\gamma$ and $\delta$ are both-zero. The first equation requires that $b$ and $d$ both be $0$, while the second equation requires that both $a$ and $c$ be $0$, forming a contradiction. It follows that at least one of $\ket{\omega_{AC}} \otimes \ket{0}_{B}$ and $\ket{\omega_{AC}} \otimes \ket{1}_{B}$ is linearly independent of $\ket{\omega_{AB}} \otimes \ket{0}_{C}$ and $\ket{\omega_{AB}} \otimes \ket{1}_{C}$. 
\end{proof}

Let ${\bf V} := \{A,B,C\}$. A further useful fact for the proof of~\Cref{thm:main} arises from the proof of Theorem $3.2$ in~\cite{butterley2006compatibility}: for $\H_{A} \cong \H_{B} \cong \H_{C} \cong \mathbb{C}^{2}$, we have that
\begin{align}
\Delta(\{\rho_{{\bf X}}: {\bf X} \in 2^{{\bf V}}\!\setminus\!{\bf V} \}) = \rho + \tau^{-1}\rho \tau \label{eq:Delta_rho_taurho}
\end{align}
where $\tau$ is the three qubit antiunitary defined by its action on pure states via
\begin{align}
\tau \sum_{i,j,k}\alpha_{i,j,k}\ket{ijk} = \sum_{i,j,k}(-1)^{i + j + k}\alpha_{ijk}^{*}\ket{\overline{i}\overline{j}\overline{k}}
\end{align}
with $i,j,k \in \{0,1\}$ and $\overline{x} := 1 - x$.

\begin{proof}[Proof of~\Cref{thm:main}] Suppose $\rho_{ABC} = \ket{\Psi}\!\!\bra{\Psi}$ where $\ket{\Psi}$ is genuinely tripartite entangled, meaning that each of the two-party marginals satisfies $\rho_{xy} \neq \rho_{x} \otimes \rho_{y}$. As a consequence of~\Cref{lem:xy_minus_x_y}, $\rho_{A} \otimes \rho_{B} - \rho_{AB}$, $\rho_{A} \otimes \rho_{C} - \rho_{AC}$ and  $\rho_{B} \otimes \rho_{C} - \rho_{BC}$ are not positive semi-definite and hence have a eigenvector with a negative eigenvalue. Let $\omega_{xy} \in \H_{x} \otimes \H_{y}$ denote an eigenvector of $\rho_{x}\otimes \rho_{y} - \rho_{xy}$ with negative eigenvalue, for $x\neq y \in \{A,B,C\}$. As a consequence of~\Cref{lem:lin_ind}, we have that
\begin{align}
\dim(\supp(\nu_{AB}^{-}\otimes  \mathbb{1}_{C}) \cup \supp(\nu_{AC}^{-}\otimes \mathbb{1}_{B} ) \cup \supp(\nu_{BC}^{-}\otimes  \mathbb{1}_{A})) \geq 3.
\end{align}
Noting that $\tau^{-1}\ket{\Psi}$ is orthogonal to $\ket{\Psi}$, it follows from~\Cref{eq:Delta_rho_taurho} that $\dim(\supp(\Delta(\{\rho_{{\bf X}}: {\bf X} \in 2^{{\bf V}}\!\setminus\!{\bf V} \}) )) = 2$ (i.e., for any orthonormal basis of $\mathcal{H}_{A} \otimes \mathcal{H}_{B} \otimes \mathcal{H}_{C}$ containing $\ket{\Psi}$ and $\tau^{-1}\ket{\Psi}$, the other $6$ basis vectors are in the kernel of $\Delta(\{\rho_{{\bf X}}: {\bf X} \in 2^{{\bf V}}\!\setminus\!{\bf V} \})$). It follows that there exists a $\ket{\Phi} \in \supp(\nu_{xy}^{-}\otimes \mathbb{1}_{z}) \cap \ker( \Delta(\{\rho_{{\bf X}}: {\bf X} \in 2^{{\bf V}}\!\setminus\!{\bf V} \}))$ for some $x,y,z \in \{A,B,C\}$ with $x\neq y \neq z \neq x$, and so the result follows from~\Cref{prop:supp_ker_intersect}. 
\end{proof}

\subsection{Proof of~\Cref{prop:state_vs_distr}} \label{app:prop_state_vs_distr}

\Cref{prop:state_vs_distr} presented in the main text establishes that there exist states whose triangle-incompatibility is not distribution-witnessable. The proof of this proposition makes use of a one-parameter family of pure states:
\begin{align}
\ket{\Psi(t)} = \frac{\sqrt{t-2}}{\sqrt{t}} \ket{100}  + \frac{1}{\sqrt{t}}\ket{001} + \frac{1}{\sqrt{t}} \ket{010},
\end{align}
where $t$ is a real parameter in the interval $[3,\infty)$. These are categorized as `tri-Bell' states according to the classification of Ref.~\cite{acin_et_al_00}.  Note that the boundary of this family, at $t=3$,  is the W state of Eq.~\eqref{eq:Wstate}. We will use the notation $\rho_{ABC}(t)$ for the member of this family for a given $t$. 

Since every element of this family of states (for finite $t$) fails to factorize across any bipartition, by Theorem~\ref{thm:main}, they are all incompatible with the triangle scenario.  Furthermore, this incompatibility can be demonstrated using the triangle-incompatibility witness $I_{AB}$, as we will now see.

For $\rho_{ABC}(t)$, the eigenvalues of $I_{AB}(\rho_{ABC}(t))$ are
\begin{align}
\left\{\frac{1}{t^{2}}(t-2), \frac{r_{1}}{t^{2}}, \frac{r_{2}}{t^{2}}, \frac{r_{3}}{t^{2}} \right\}
\end{align}
each with multiplicity $2$, where $r_{1}, r_{2}, r_{3}$ are the roots of the cubic polynomial
\begin{align}
f(x) = x^{3} &+ (-t^{2} + t - 2)x^{2} + (-4 + 8t - 5t^{2} + t^{3})x + (8 - 20t + 14t^{2} - 5t^{3} + t^{4}). 
\end{align}
For a given $t \geq 3$, the demonstration of a negative value for $r_{1}$, $r_{2}$ or $r_{3}$ suffices to prove that $\rho_{ABC}(t)$ is triangle-incompatible. Since $I_{AB}(\rho_{ABC}(t))$ is Hermitian for all $t$, it must be that $r_{1}, r_{2}, r_{3} \in \mathbb{R}$. Via Vieta's formula~\cite{girard1884invention} (see also~\cite{Vieta}), we get that
\begin{align}
r_{1}r_{2}r_{3} = -(8 - 20t + 14t^{2} - 5t^{3} + t^{4}).
\end{align}
As the quartic polynomial inside the brackets on the right-hand side is strictly positive for $t > 2$, it follows that $r_{1}r_{2}r_{3}$ is strictly negative, indicating at least one negative root. Thus $\rho_{ABC}(t)$ is triangle-incompatible for the range of values of $t$ that we consider (i.e. $t \geq 3$). Any normalized eigenvector corresponding to a negative eigenvalue is thus a state that witnesses this incompatibility.

The fact that there exist values of $t$ such that incompatibility of $\rho_{ABC}(t)$ is not distribution-witnessable was presented in the main text. To show that $\rho_{ABC}(t)$ for certain $t$ is not distribution-witnessable using the AB-Cut inflation is equivalent to demonstrating that the quantity
\begin{align}
\iota_{AB}(\rho_{ABC}(t)) &:= \min_{\ket{\phi_{A}},\ket{\phi_{B}}, \ket{\phi_{C}}} \bra{\phi_{A}\phi_{B}\phi_{C}}I_{AB}(\rho_{ABC}(t))\ket{\phi_{A}\phi_{B}\phi_{C}} 
\end{align}
is non-negative. To do so, we consider a relaxation of the above non-convex optimization to a convex one, namely 
\begin{align}
\tilde{\iota}_{AB}(\rho_{ABC}(t)) &:= \min_{\varrho_{ABC} \in S} \Tr[\varrho_{ABC} I_{AB}(\rho_{ABC}(t))], \label{eq:relaxed_AB} 
\end{align}
where $S$ denotes the set of all three-qubit states with positive partial transposes over each subsystem, that is:
\begin{align}
S := \{ \varrho_{ABC} : \varrho_{ABC}^{\top_{A}} \geq 0, \varrho_{ABC}^{\top_{B}} \geq 0, \varrho_{ABC}^{\top_{C}} \geq 0 \}.
\end{align}
This set $S$ contains the set of pure product states, meaning that $\iota_{ABC}(t)$ is guaranteed to be lower bounded by $\tilde{\iota}_{ABC}(t)$. As discussed in the main text, the latter quantity is indeed non-negative for e.g., $t \geq 7$.

\subsection{Proof of~\Cref{prop:mixing_value}} \label{app:proof_prop_mixing_value}

\begin{proof}[Proof of~\Cref{prop:mixing_value}]
Let us write $\rho_{ABC} = \ket{\Psi}\!\!\bra{\Psi}$ so $\hat{\rho}(\ket{\Psi}, p) = p \rho_{ABC} + \frac{1-p}{8}\mathbb{1}_{ABC}$. For convenience, we will at times write $\hat{\rho}$ for $\hat{\rho}(\ket{\Psi}, p)$. By linearity, we see that 
\begin{align}
 \Delta(\{\hat{\rho}_{{\bf X}}: {\bf X} \in 2^{{\bf V}}\!\setminus\!{\bf V} \}) = p\Delta(\{\rho_{{\bf X}}: {\bf X} \in 2^{{\bf V}}\!\setminus\!{\bf V} \}) + (1-p)\Delta\left(\left\{\frac{\mathbb{1}_{{\bf X}}}{2^{|{\bf X}|}}: {\bf X} \in 2^{{\bf V}}\!\setminus\!{\bf V}\right\}\right)
\end{align}
Furthermore, recalling~\Cref{eq:Hall_tripartite}, we have that  $\Delta\left(\left\{\frac{\mathbb{1}_{{\bf X}}}{2^{|{\bf X}|}}: {\bf X} \in 2^{{\bf V}}\!\setminus\!{\bf V}\right\}\right) = \frac{1}{4}\mathbb{1}_{ABC}$. We demonstrate that there exists a $q \in [3 - 2\sqrt{2}, 1)$ such that $I_{AB}(\hat{\rho}(\ket{\Psi}, p)) \geq 0$ for all $0 \leq p \leq q$; the proofs for the other triangle-incompatibility witnesses proceed analogously, and the result is obtained by taking the minimal value of $q$ for the three cases.

Since 
\begin{align}
\hat{\rho}_{A} \otimes \hat{\rho}_{B} &= (p \rho_{A} + \frac{1-p}{2} \mathbb{1}_{A} ) \otimes (p \rho_{B} + \frac{1-p}{2}  \mathbb{1}_{B}) \\
&= p^{2} \rho_{A} \otimes \rho_{B} + \frac{p(1-p)}{2} \rho_{A} \otimes  \mathbb{1}_{B} + \frac{p(1-p)}{2}  \mathbb{1}_{A}  \otimes \rho_{B} + \frac{(1-p)^{2}}{4}  \mathbb{1}_{AB} , \\
\hat{\rho}_{AB} &= p \rho_{AB} + \frac{(1-p)}{4} \mathbb{1}_{AB} ,
\end{align}
and $ I_{AB}(\hat{\rho}) = \Delta(\{\hat{\rho}_{{\bf X}}: {\bf X} \in 2^{{\bf V}}\!\setminus\!{\bf V} \}) + \hat{\rho}_{A} \otimes \hat{\rho}_{B} - \hat{\rho}_{AB}$, we get that
\begin{align}
 I_{AB}(\hat{\rho}) &= p \Delta(\{\rho_{{\bf X}}: {\bf X} \in 2^{{\bf V}}\!\setminus\!{\bf V} \}) + \frac{1-p}{4} \mathbb{1}_{ABC} + p^{2} \rho_{A} \otimes \rho_{B} \otimes  \mathbb{1}_{C} + \frac{p(1-p)}{2} \rho_{A} \otimes \mathbb{1}_{BC}   \nonumber \\
& \quad + \frac{p(1-p)}{2}\mathbb{1}_{AC}\otimes \rho_{B} + \frac{(1-p)^{2}}{4} \mathbb{1}_{ABC} - p \rho_{AB} \otimes \mathbb{1}_{C} -  \frac{(1-p)}{4}  \mathbb{1}_{ABC} \\
&= p  \Delta(\{\rho_{{\bf X}}: {\bf X} \in 2^{{\bf V}}\!\setminus\!{\bf V} \}) + p^{2} \rho_{A} \otimes \rho_{B} \otimes  \mathbb{1}_{C} + \frac{p(1-p)}{2} \rho_{A} \otimes  \mathbb{1}_{BC} + \frac{p(1-p)}{2}  \mathbb{1}_{AC} \otimes \rho_{B} \nonumber \\
& \quad + \frac{(1-p)^{2}}{4} \mathbb{1}_{ABC} - p \rho_{AB} \otimes  \mathbb{1}_{C}.
\end{align}
Noting that a sufficient condition for $A - B  \geq  0$ for $A$ and $B$ positive semidefinite matrices is that the maximal eigenvalue of $B$ is less than or equal to the minimal eigenvalue of $A$, we see that $\frac{(1-p)^{2}}{4} \mathbb{1}_{ABC} - p \rho_{AB} \otimes \mathbb{1}_{C}  \geq  0$ is guaranteed at least when $\frac{1}{4}(1-p)^{2} - p \geq 0$, i.e. for $0 \leq p \leq 3 - 2\sqrt{2}$. It follows that $I_{AB}(\hat{\rho})$ is guaranteed to be positive semidefinite at least for $0 \leq p \leq q := 3 - 2\sqrt{2}$.  The same reasoning holds for $I_{AC}$ and $I_{BC}$ with the same value of $q$.
\end{proof}

The above proof establishes a lower bound for $q$ that is generally not tight for a given pure state $\ket{\Psi}$. For $\hat{\rho}(\ket{GHZ},p)$, the eigenvalues of $I_{xy}(\hat{\rho})$ are the same for all $xy \in \{AB, AC, BC\}$ and are given by $\frac{1}{4}\{1, 1 \pm 2p \}$, where the first has multiplicity $4$ and the others multiplicity $2$. Thus, for $p \leq \frac{1}{2}$, the $I_{xy}(\hat{\rho})$ are all positive.

For $\hat{\rho}(\ket{W},p)$, the eigenvalues of $I_{xy}(\hat{\rho})$ are the same for all $xy \in \{AB, AC, BC\}$ and are given by 
\begin{align}
\frac{1}{36}\left\{9-p^{2}, 9 -6p + p^{2}, 9 + 3p \pm \sqrt{297p^{2} + 6p^{3} + p^{4}} \right\},
\end{align}
each with multiplicity $2$. These eigenvalues are all positive whenever $9 + 3p \geq \sqrt{297p^{2} + 6p^{3} + p^{4}}$, which occurs for $p \lessapprox 0.627$.

\subsection{Proof of~\Cref{prop:Toth_Acin_states}} \label{app:proof_prop_Toth_Acin_states}

Any three-qubit mixed state can be written as \cite{Sudbery_2001}
\begin{align}
\rho_{ABC} = \frac{1}{8}\mathbb{1}_{ABC} + \sum_{i,j,k = X,Y,Z} a_{i}\sigma_{A}^{i} + b_{j} \sigma_{B}^{j} + c_{k}\sigma_{C}^{k} + d_{ij}\sigma_{A}^{i}\otimes \sigma_{B}^{j} + e_{ik}\sigma_{A}^{i}\otimes \sigma_{C}^{k} + f_{jk}\sigma_{B}^{j}\otimes \sigma_{C}^{k} + g_{ijk}\sigma_{A}^{i} \otimes \sigma_{B}^{j} \otimes \sigma_{C}^{k},
\end{align}
where the $\sigma^{l}$ for $l = X,Y,Z$ denote the Pauli operators and where all the coefficients are real (recall that the Pauli matrices form an operator basis for the real vector space of Hermitian operators). Each term in the above expression is to be understood as an operator $\H_{A} \otimes \H_{B} \otimes \H_{C}$ - identity operators have been omitted for brevity. 

Substituting the expression for $\rho_{ABC}$ into~\Cref{quantumABCutBellWignerInequality} and the corresponding equations for the $AC$ and $BC$ triangle-incompatibility witnesses, we get that
\begin{align}
 I_{AB}(\rho) &= \frac{1}{4} \mathbb{1}_{ABC} + \sum_{i,j,k=X,Y,Z} 16a_{i}b_{j} \sigma_{A}^{i} \otimes \sigma_{B}^{j} + 2e_{ik} \sigma_{A}^{i} \otimes \sigma_{C}^{k} + 2f_{jk}\sigma_{B}^{j} \otimes \sigma_{C}^{k} \\
 I_{AC}(\rho) &= \frac{1}{4} \mathbb{1}_{ABC} + \sum_{i,j,k=X,Y,Z} 2d_{ij} \sigma_{A}^{i} \otimes \sigma_{B}^{j} + 16a_{i}c_{k} \sigma_{A}^{i} \otimes \sigma_{C}^{k} + 2f_{jk}\sigma_{B}^{j} \otimes \sigma_{C}^{k}\\
 I_{BC}(\rho)  &= \frac{1}{4} \mathbb{1}_{ABC} + \sum_{i,j,k=X,Y,Z} 2d_{ij} \sigma_{A}^{i} \otimes \sigma_{B}^{j} + 2e_{ik} \sigma_{A}^{i} \otimes \sigma_{C}^{k} + 16b_{j}c_{k}\sigma_{B}^{j} \otimes \sigma_{C}^{k}.
\end{align}
For the family of states $\rho^{(3,c)}$, we see that $a_{i} = b_{j} = c_{k} = 0$ for all $i,j,k$ and that $d_{ij} = -\frac{c}{16} \delta_{ij}$, $e_{ik} = -\frac{c}{16}\delta_{ik}$ and $f_{jk} = \frac{1}{24}\delta_{jk}$. Thus,
\begin{align}
I_{AB}(\rho^{(3,c)}) &= \frac{1}{4} \mathbb{1}_{ABC}  + \sum_{k=X,Y,Z} \frac{1}{12}\sigma_{B}^{k}\otimes \sigma_{C}^{k} - \frac{c}{8}\sigma_{A}^{i}\otimes \sigma_{C}^{k} \\
&= \begin{bmatrix}
\frac{1}{3}-\frac{c}{8} & 0 & 0 & 0 & 0 & 0 & 0 & 0 \\
 0 & \frac{c}{8}+\frac{1}{6} & \frac{1}{6} & 0 & -\frac{c}{4}
   & 0 & 0 & 0 \\
 0 & \frac{1}{6} & \frac{1}{6}-\frac{c}{8} & 0 & 0 & 0 & 0 &
   0 \\
 0 & 0 & 0 & \frac{c}{8}+\frac{1}{3} & 0 & 0 & -\frac{c}{4} &
   0 \\
 0 & -\frac{c}{4} & 0 & 0 & \frac{c}{8}+\frac{1}{3} & 0 & 0 &
   0 \\
 0 & 0 & 0 & 0 & 0 & \frac{1}{6}-\frac{c}{8} & \frac{1}{6} &
   0 \\
 0 & 0 & 0 & -\frac{c}{4} & 0 & \frac{1}{6} &
   \frac{c}{8}+\frac{1}{6} & 0 \\
 0 & 0 & 0 & 0 & 0 & 0 & 0 & \frac{1}{3}-\frac{c}{8}
\end{bmatrix}.
\end{align}
This operator has eigenvalues $\frac{1}{24}(8-3c)$ and $\frac{1}{24}(4 + 3c \pm 2\sqrt{4 + 6c + 9c^{2}})$, the first with multiplicity $4$ and the others each with multiplicity $2$. It is readily verified that $4 + 3c - 2\sqrt{4 + 6c + 9c^{2}}$ is strictly negative for $c \neq 0$, establishing the result.

\subsection{Proof of~\Cref{prop:GHZ_224}} \label{app:proof_prop_GHZ_224}

For $\rho_{ABC} = \ket{\Psi}\!\!\bra{\Psi}$ where
\begin{align}
\ket{\Psi} = \alpha_{0}e^{i\phi_{0}}\ket{000} + \alpha_{4}\ket{110} + \alpha_{5}\ket{111} + \alpha_{6}\ket{112} +\alpha_{7}\ket{113}
\end{align}
we have that
\begin{align}
I_{AC}(\rho_{ABC}) = \begin{bmatrix} 
\Gamma_{1}& \boldsymbol{0} & \boldsymbol{0} & e^{i\phi_{0}}\alpha_{0}\alpha_{4} \mathbb{1} \\
\boldsymbol{0}& \Gamma_{2} & \boldsymbol{0} & \boldsymbol{0} \\
\boldsymbol{0}& \boldsymbol{0} & -\Gamma_{2} & \boldsymbol{0}\\
e^{-i\phi_{0}}\alpha_{0}\alpha_{4}  \mathbb{1}  & \boldsymbol{0} & \boldsymbol{0} & \mathbb{1} -\Gamma_{1}
\end{bmatrix}
\end{align}
where here  $\mathbb{1}$ is the $4\times 4$ identity matrix and  
\begin{gather}
\Gamma_{1} := \begin{bmatrix}
\alpha _0^4+\left(\alpha _4^2-1\right) \alpha _0^2-\alpha _4^2+1 & \left(\alpha _0^2-1\right) \alpha _4 \alpha _5 & \left(\alpha _0^2-1\right) \alpha _4 \alpha _6 & \left(\alpha _0^2-1\right) \alpha _4 \alpha_{7} \\
\left(\alpha _0^2-1\right) \alpha _4 \alpha _5 & \left(\alpha _0^2-1\right) \left(\alpha_5^2-1\right) & \left(\alpha _0^2-1\right) \alpha _5 \alpha _6 & \left(\alpha _0^2-1\right)\alpha _5 \alpha_{7} \\
\left(\alpha _0^2-1\right) \alpha _4 \alpha _6 & \left(\alpha _0^2-1\right) \alpha _5 \alpha _6 &\left(\alpha _0^2-1\right) \left(\alpha _6^2-1\right) & \left(\alpha _0^2-1\right) \alpha _6\alpha_{7} \\
\left(\alpha _0^2-1\right) \alpha _4 \alpha_{7} & \left(\alpha _0^2-1\right) \alpha _5 \alpha_{7} & \left(\alpha _0^2-1\right) \alpha _6 \alpha_{7} & \left(1-\alpha _0^2\right) \left(1 - \alpha_{7}^{2}\right) \\
\end{bmatrix} \\
\Gamma_{2} := \begin{bmatrix}
-\alpha _0^2 \left(1-\alpha _0^2-\alpha _4^2\right) & \alpha _0^2 \alpha _4 \alpha _5 & \alpha _0^2 \alpha _4 \alpha _6 & \alpha _0^2 \alpha _4 \alpha_{7} \\
\alpha _0^2 \alpha _4 \alpha _5 & \alpha _0^2 \alpha _5^2 & \alpha _0^2 \alpha _5 \alpha _6 & \alpha _0^2 \alpha _5 \alpha_{7} \\
\alpha _0^2 \alpha _4 \alpha _6 & \alpha _0^2 \alpha _5 \alpha _6 & \alpha _0^2 \alpha _6^2 & \alpha _0^2 \alpha _6 \alpha_{7} \\
\alpha _0^2 \alpha _4 \alpha_{7} & \alpha _0^2 \alpha _5 \alpha_{7} & \alpha _0^2 \alpha _6 \alpha_{7} & \alpha _0^2 \alpha_{7}^{2} \\
\end{bmatrix}.
\end{gather}
By inspection, one notices that 
\begin{align}
\braket{010| I_{AC}(\rho_{ABC}) |010} = -\alpha _0^2 \left(1-\alpha _0^2-\alpha _4^2\right)
\end{align}
which is negative whenever $0< \alpha_{0}^{2} + \alpha_{4}^{2} < 1$.

\subsection{Proof of~\Cref{thm:family_of_networks}} \label{app:fam_networks}

In this section, we prove \Cref{thm:family_of_networks} regarding Cut inflations $G'(n,l,j)$ of the networks $G(n,l)$ where $n > l \ge 2$. Recall that $G(n,l)$ has visible and latent nodes 
\begin{equation}
\begin{gathered}
\textsf{Vnodes}(G(n,l)) = \{A^{[1]},...,A^{[n]}\}, \\ \textsf{Lnodes}(G(n,l)) = \{L^{[1]},...,L^{[n]}\},
\end{gathered}
\end{equation}
while $G'(n,l,j)$ has visible and latent nodes
\begin{equation}
\begin{gathered}
\textsf{Vnodes}(G'(n,l,j)) = \{A_{1}^{[1]},...,A_{1}^{[n]} \}, \\
\textsf{Vnodes}(G'(n,l,j)) = \{L_{1}^{[1]},...,L_{1}^{[n]},L_{2}^{[j-l+2]},L_{2}^{[j-l+3]},...,L_{2}^{[j]} \}.
\end{gathered}
\end{equation}
With this choice of notation, along with the fact that every index value greater than $n$ or less than or equal to $0$ is taken modulo $n$, it is easy to see that any proof of involving injectable and expressible set of $G'(n,l,j)$ can quickly become heavily entrenched in keeping track of the sub- and superscript labelling of nodes. To ease the notational burden, we introduce some further terminology and notation, and make some simplifying choices.

The main simplifying choice is that we will only consider the Cut inflation $G'(n,l,n)$, that is, the Cut inflation with the cut between the visible node $A_{1}^{[n]}$ and $A_{1}^{[1]}$ from now on. The proof for any other choice of cut proceeds analogously: by making the substitution $[t] \mapsto [t - j]$ for every superscript (with the usual convention of index values modulo $n$), the case of the Cut inflation $G'(n,l,j)$ is mapped to the case $G'(n,l,n)$ that we consider here.

Next, it will be convenient to consider the set $\textsf{Lnodes}(G'(n,l,n))$ equipped with the total order given by:
\begin{align}
L_{2}^{[n-l+2]} < L_{2}^{[n-l+3]} < ... < L_{2}^{[n]} < L_{1}^{[1]} < L_{1}^{[2]} < ... < L_{1}^{[n]}. \label{eq:Lnodes_ordering}
\end{align}
We will call a subset $\bm{X'} \subseteq \textsf{Lnodes}(G'(n,l,n))$ \textit{totally adjacent} with respect to the above total order, if it possible to write the elements of $\bm{X'}$ as a sequence $x'_{1},...,x'_{|\bm{X'}|}$ such that (i) $x'_{1} < x'_{2} < ... < x'_{|\bm{X'}|}$ in the order of \Cref{eq:Lnodes_ordering} and (ii) for each $t \in \{1, ..., |\bm{X'}| - 1\}$, there does not exist a $v \in \textsf{Lnodes}(G'(n,l,n))$ such that $x'_{t} < v < x'_{t+1}$. The sequence $x'_{1},...,x'_{|\bm{X'}|}$ will be called the \textit{adjacency sequence} of $\bm{X'}$.

We will also make use of the notation $\textsf{Anc}_{G'(n,l,j))}(\bm{U'}) \subseteq \textsf{Lnodes}(G'(n,l,j))$ to denote the set of ancestors in $G'(n,l,j)$ for $\bm{U'} \subseteq \textsf{Vnodes}(G'(n,l,j))$.

Finally, let us recall the definition of injectable and expressible sets. For an inflation $G'$ of a causal model $G$, a subset $\bm{U'} \subseteq \textsf{Vnodes}(G')$ is \textit{injectable} if there exists a subset $\bm{U} \subseteq \textsf{Vnodes}(G)$ such that $\bm{U'}$ and $\bm{U}$ are the same up to copy-indices  and moreover the ancestral subgraph of $\bm{U'}$ in $G'$, denoted $\mathsf{ansubgraph}_{G'}(\bm{U'})$, is the same up to copy-indices as the ancestral subgraph of $\bm{U}$ in $G$, denoted $\mathsf{ansubgraph}_{G}(\bm{U})$. Furthermore, a subset $\bm{U'} \subseteq \textsf{Vnodes}(G')$ is \textit{ai-expressible} if it can be written as the union of injectable sets that are ancestrally independent. The notion of ai-expressibility is a special case of the general notion of expressible set (c.f. \cite{WolfeSpekkensFritz_2019}), however, as we require neither the more general notion here nor to distinguish between the different cases, we drop the `ai-' from the terminology and refer to the sets simply as ``expressible''. Any subset of an injectable or expressible set is also injectable or expressible.

As we will be considering injectable and expressible sets for the inflation $G'(n,l,n)$, let us recall the connectivity of this model: 
\begin{itemize}
    \item every latent node $L_{2}^{[s]}$, where $s \in \{n-l+2,...,n\}$ is the parent of $A_{1}^{[1]},...,A_{1}^{[s+l-1]}$;
    \item every latent node $L_{1}^{[s]}$, where $s \in \{1,...,n-l+1\}$, is the parent of $A_{1}^{[s]},...,A_{1}^{[s + l - 1]}$;
    \item every latent node $L_{1}^{[s]}$, where $s \in \{n-l+2,...,n\}$, is the parent of $A_{1}^{[s]},...,A_{1}^{[n]}$.
\end{itemize}
Before proving the theorem, we consider the following lemmata:
\begin{Lemma} \label{lem:adj_seq_subset} Let $\bm{U'} \subseteq \textsf{Vnodes}(G'(n,l,n))$ be non-empty. Suppose that $\bm{X'} := \textsf{Anc}_{G'(n,l,n))}(\bm{U'})$ is totally adjacent with respect to the order on $\textsf{Lnodes}(G'(n,l,n))$ (i.e. \Cref{eq:Lnodes_ordering}) with adjacency sequence $x'_{1},...,x'_{|\bm{X'}|}$. Then either 
\begin{enumerate}
    \item[(a)] $x'_{1} = L_{2}^{[s]}$ and $x'_{|\bm{X'}|} = L_{1}^{[s+|\bm{X'}|-1]}$ for some $s \in \{n-l+2, ..., n\}$, or
    \item[(b)] $x'_{1} = L_{1}^{[s]}$ and $x'_{|\bm{X'}|} = L_{1}^{[s + |\bm{X'}|-1]}$ for some $s \in \{1, ..., n-l+1\}$.
\end{enumerate} 
In either case, $\bm{U'} \subseteq \{A_{1}^{[s+l-1]}, ..., A_{1}^{[s + |\bm{X'}|-1]} \}$ for the relevant value of $s$.
\end{Lemma}

\begin{proof} Let $\bm{U'} \subseteq \textsf{Vnodes}(G'(n,l,n))$ be non-empty and $\bm{X'} := \textsf{Anc}_{G'(n,l,n))}(\bm{U'})$ be totally adjacent with respect to the order on $\textsf{Lnodes}(G'(n,l,n))$. Since $\bm{U'}$ contains at least one visible node and since each visible node in $G'(n,l,n)$ has at least $l$ parents, $|\bm{X'}| \ge l$. Note that $\textsf{Lnodes}(G'(n,l,n))$ contains $l-1$ elements with a subscript `$2$', namely $L_{2}^{[n-l+2]}, ..., L_{2}^{[n]}$, meaning that any totally adjacent subset of $\textsf{Lnodes}(G'(n,l,n))$ containing at least $l$ elements has adjacency sequence that must terminate with an element with a subscript `$1$'. The possible cases then split into the cases (a) and (b) in the statement of the lemma. 

Due to the assumption of total adjacency for $\bm{X'}$, the sequence $x'_{1}, ..., x'_{|\bm{X'}|}$ contains $|\bm{X'}|$ adjacent elements in the ordering of \Cref{eq:Lnodes_ordering}. If, as in case (a), the sequence commences with $x_{1} = L_{2}^{[s]}$ for some $s \in \{n-l+2, ..., n\}$, then the sequence is
\begin{align}
L_{2}^{[s]},L_{2}^{[s+1]},\dots,L_{2}^{[n]},L_{1}^{[1]},L_{1}^{[2]},\dots,L_{1}^{[s+|\bm{X'}| - 1]}. 
\end{align}
If, as in case (b), the sequence commences with $L_{1}^{[s]}$ for $s \in \{1, ..., n - l + 1 \}$ (note that if $s \in \{n-l+2, ...,n \}$, then the sequence couldn't be at least $l$ elements long and totally adjacent with respect to the order \Cref{eq:Lnodes_ordering}), then the sequence is
\begin{align}
L_{1}^{[s]},\dots,L_{1}^{[s+|\bm{X'}|-1]}.
\end{align}

We now show that $\bm{U'} \subseteq \{A_{1}^{[s+l-1]}, ..., A_{1}^{[s + |\bm{X'}|-1]} \}$. If $\{A_{1}^{[s+l-1]}, ..., A_{1}^{[s + |\bm{X'}|-1]} \} = \textsf{Vnodes}(G'(n,l,n))$, then this is trivially true so we assume that $\{A_{1}^{[s+l-1]}, ..., A_{1}^{[s + |\bm{X'}|-1]} \} \neq \textsf{Vnodes}(G'(n,l,n))$ from now on. In particular, this means that either $s \neq n-l+2$, resulting in $A_{1}^{[s+l-1]} = A_{1}^{[r]}$ for some $r > 1$ after applying the convention on the indices, or $s \neq n - |\bm{X'}| + 1$, resulting in $A_{1}^{[s + |\bm{X'}| - 1]} = A_{1}^{[r]}$ for some $r < n$, or both.

We proceed by contradiction. Suppose $\bm{U'} \nsubseteq \{A_{1}^{s+l-1}, ..., A_{1}^{[s + |\bm{X'}|-1]} \}$ and recall that either (a') $\bm{X'} = \{L_{2}^{[s]}, L_{2}^{[s+1]},...,L_{2}^{[n]},L_{1}^{[1]}, L_{1}^{[2]},...,L_{1}^{[s+|\bm{X'}|-1]} \}$ for some $s \in \{n-l+2,...,n\}$, or (b') $\bm{X'} = \{L_{1}^{[s]},L_{1}^{[s+1]},...,L_{1}^{[s+|\bm{X'}|-1]} \}$ for some $s \in \{1, ..., n-l+1\}$. In case (a'), if $s \neq n - |\bm{X'}| + 1$ and there exists an $A_{1}^{[t]} \in \bm{U'}$ with $t \in \{s+|\bm{X'}|, ..., n\}$\footnote{Recall that the convenient on index values also applies when considering ranges of values, i.e., if $s + |\bm{X'}| > n$ it is taken modulo $n$ in the range $\{s+|\bm{X'}|, ..., n\}$.}, then since $L_{1}^{[t]}$ is a parent of $A_{1}^{[t]}$, this would require that $L_{1}^{[t]} \in \bm{X'}$, forming a contradiction. If $s \neq n-l+2$ and there exists an $A_{1}^{[t]} \in \bm{U'}$ with $t \in \{1,...,s+l-2\}$, then since $L_{2}^{[t-l+1]}$ is a parent of $A_{1}^{[t]}$, this would require that $L_{2}^{[t-l+1]} \in \bm{X'}$ which also forms a contradiction. Since at least one of these two possibilities must occur under the assumption that $\{A_{1}^{s+l-1}, ..., A_{1}^{[s + |\bm{X'}|-1]} \} \neq \textsf{Vnodes}(G'(n,l,n))$, this completes the case for (a').

The situation for (b') is analogous. If $s \neq n - |\bm{X'}|+1$ and there exists an $A_{1}^{[t]} \in \bm{U'}$ with $t \in \{s+|\bm{X'}|, ..., n\}$, then since $L_{1}^{[t]}$ is a parent of $A_{1}^{[t]}$, this would require that $L_{1}^{[t]} \in \bm{X'}$, forming a contradiction, same as above. If $s \neq n-l+2$ and there exists an $A_{1}^{[t]} \in \bm{U'}$ with $t \in \{1,...,s+l-2\}$, then either (I) $A_{1}^{[t]}$ has a parent $L_{2}^{[t-l+1]}$ if $t \in \{1, ..., l-1\}$ or (II) $A_{1}^{[t]}$ has a parent $L_{1}^{[t-l+1]}$ if $t \in \{l, ..., s+l-2\}$. In case (I), this would require that $L_{2}^{[t-l+1]} \in \bm{X'}$ and in case (II), this would require that $L_{1}^{[t-l+1]} \in \bm{X'}$, both of which form contradictions. 
\end{proof}

\begin{Lemma} \label{lem:Gprime_inj} Let $\textsf{Lnodes}(G'(n,l,n))$ be equipped with the ordering of \Cref{eq:Lnodes_ordering} and let $\bm{U'} \subseteq \textsf{Vnodes}(G'(n,l,n))$. If $\textsf{Anc}_{G'(n,l,n))}(\bm{U'})$ is totally adjacent with respect to the ordering on $\textsf{Lnodes}(G'(n,l,n))$ and if $|\textsf{Anc}_{G'(n,l,n))}(\bm{U'})| = n$, then $\bm{U'}$ is an injectable set.
\end{Lemma}

\begin{proof} Let $\bm{U'} \subseteq \textsf{Vnodes}(G'(n,l,n))$ be such that $\bm{X'} := \textsf{Anc}_{G'(n,l,n))}(\bm{U'})$ is totally ordered and that $|\bm{X'}| = n$. From \Cref{lem:adj_seq_subset}, we know that either $\bm{X'} = \{L_{2}^{[s]},...,L_{2}^{[n]},L_{1}^{[1]},...,L_{1}^{[n+s-1]} \}$ for some $s \in \{n-l+2,...,n\}$ or $\bm{X'} = \{L_{1}^{[s]},...,L_{1}^{[n + s -1]}\}$ for some $s \in \{1, ..., n-l+1\}$, and moreover that $\bm{U'} \subseteq \{A_{1}^{[s+l-1]}, ..., A_{1}^{[n+s-1]} \}$ for the relevant value of $s$. Since any subset of an injectable set is also injectable, it suffices to show that $\bm{W'} := \{A_{1}^{[s+l-1]}, ..., A_{1}^{[n+s-1]} \}$ is injectable in $G'(n,l,n)$.

Recall that $\bm{W'} \subseteq \textsf{Vnodes}(G'(n,l,n))$ is injectable if and only if there exists a subset $\bm{W} \subseteq \textsf{Vnodes}(G(n,l))$ such that $\bm{W'} \sim \bm{W}$ and $\mathsf{ansubgraph}_{G'(n,l,n)}(\bm{W'}) \sim \mathsf{ansubgraph}_{G(n,l)}(\bm{W})$, where $\sim$ denotes sameness up to copy-index and where $\mathsf{ansubgraph}_{G'(n,l,n)}(\bm{W'})$ denotes the ancestral subgraph of $\bm{W'}$ in $G'(n,l,n)$ (similarly for $\mathsf{ansubgraph}_{G(n,l)}(\bm{W})$). Defining $\bm{W} := \{A^{[s+l-1]}, ..., A^{[n+s-1]}\}$, we have that $\bm{W'} \sim \bm{W}$. 

Next, recall that $\textsf{Lnodes}(G(n,l)) = \{L^{[1]},...,L^{[n]}\}$. Due to the choice of ordering in \Cref{eq:Lnodes_ordering} and to the fact that $|\bm{X'}| = n$, we have that $\bm{X'} \sim \textsf{Lnodes}(G(n,l))$ in either case for $\bm{X'}$. Since each visible node $A^{[t]}$ in $G(n,l)$ has parents $L^{[t-l+1]},L^{[t-l+2]},...,L^{[t]}$, we see that $\bm{X} := \textsf{Anc}_{G(n,l)}(\bm{W}) = \{L^{[s]},...,L^{[n+s-1]} \} = \textsf{Lnodes}(G(n,l))$. Hence, we have that $\bm{X'} \sim \bm{X}$. Recall that, by definition of inflation, the ancestral subgraph of any single visible node in an inflation is  the same up to copy-indices as the ancestral subgraph of the corresponding visible node in the original causal model. Since $\bm{X'}$ contains no two elements with a common superscript but different subscripts, any common parents of any subset of visible nodes of $\bm{W'}$ will be the unique latent nodes with the relevant superscript, which ensures that $\mathsf{ansubgraph}_{G'(n,l,n)}(\bm{W'}) \sim \mathsf{ansubgraph}_{G(n,l)}(\bm{W})$.
\end{proof}

\begin{Lemma} \label{lem:Gprime_expr} Let $\textsf{Lnodes}(G'(n,l,n))$ be given the ordering in \Cref{eq:Lnodes_ordering}. Let $\bm{U'_{1}}, \bm{U'_{2}}\subseteq \textsf{Vnodes}(G'(n,l,n))$. If 
\begin{enumerate}
    \item $A_{1}^{[n]} \in \bm{U'_{1}}$ and $A_{1}^{[1]} \in \bm{U'_{2}}$,
    \item $\textsf{Anc}_{G'(n,l,n))}(\bm{U'_{1}})$ and $\textsf{Anc}_{G'(n,l,n))}(\bm{U'_{2}})$ are totally adjacent, and
    \item $\textsf{Anc}_{G'(n,l,n))}(\bm{U'_{1}})$ and $\textsf{Anc}_{G'(n,l,n))}(\bm{U'_{2}})$ form a set partition of $\textsf{Lnodes}(G'(n,l,n))$,
\end{enumerate}
then $\bm{U'_{1}} \cup \bm{U'_{2}}$ is an expressible set.
\end{Lemma}

\begin{proof} Let $\bm{U'_{1}}, \bm{U'_{2}}\subseteq \textsf{Vnodes}(G'(n,l,n))$ be such that $A_{1}^{[n]} \in \bm{U'_{1}}$ and $A_{1}^{[1]} \in \bm{U'_{2}}$, $\bm{X_{1}'} := \textsf{Anc}_{G'(n,l,n))}(\bm{U'_{1}})$ and $\bm{X'_{2}} := \textsf{Anc}_{G'(n,l,n))}(\bm{U'_{2}})$ are totally adjacent, and $\bm{X'_{1}}$ and $\bm{X'_{2}}$ form a set partition of $\textsf{Lnodes}(G'(n,l,n))$. Recall that for $\bm{U'_{1}} \cup \bm{U'_{2}}$ to be expressible, it suffices to have that $\bm{U'_{1}}$ and $\bm{U'_{2}}$ are injectable and are ancestrally independent. However, the latter requirement follows directly from the assumption that $\bm{X'_{1}}$ and $\bm{X'_{2}}$ form a set partition of $\textsf{Lnodes}(G'(n,l,n))$ (recall that mutual exclusivity is a requirement in the definition of set partition). 

To show that $\bm{U'_{1}}$ and $ \bm{U'_{2}}$ are indeed injectable, we first note the following bounds on $\bm{X'_{1}}$ and $\bm{X'_{2}}$. Since $A_{1}^{[n]} \in \bm{U'_{1}}$, it must be the case that $|\bm{X'_{1}}| \geq l$ since $A_{1}^{[n]}$ has $l$ parents in $G'(n,l,n)$. By the same reasoning $|\bm{X'_{2}}| \geq l$. Since $|\textsf{Lnodes}(G'(n,l,n))| = n+l-1$, by the assumption that $\bm{X'_{1}}, \bm{X'_{2}}$ partition $\textsf{Lnodes}(G'(n,l,n))$ it must also be the case that $|\bm{X'_{1}}|, |\bm{X'_{2}}| < n$. We now show that both $\bm{U'_{1}}$ and $\bm{U'_{2}}$ are subsets of injectable sets and are hence injectable.

The case for $\bm{U'_{1}}$ proceeds as follows. Since $A_{1}^{[n]} \in \bm{U'_{1}}$ and $L_{1}^{[n]}$ is a parent of $A_{1}^{[n]}$, it must be that $L_{1}^{[n]} \in \bm{X'_{1}}$. Since $\bm{X'_{1}}$ is totally adjacent by assumption, the adjacency sequence $x'_{1}, ..., x'_{|\bm{X'_{1}}|}$ of $\bm{X'_{1}}$ must be of the form
\begin{align}
x'_{1}, \dots, x'_{|\bm{X'_{1}}|} = L_{1}^{[n - |\bm{X'_{1}}| + 1]} , \dots , L_{1}^{[n]}.
\end{align}
The visible node $A_{1}^{[n - |\bm{X'_{1}}| + l - 1]}$ is not in $\bm{U'_{1}}$ since $L_{1}^{[n - |\bm{X'_{1}|}]}$ is a parent of $A_{1}^{[n - |\bm{X'_{1}}| + l - 1]}$ but is not in $\bm{X'_{1}}$. However, all other parents of $A_{1}^{[n - |\bm{X'_{1}}| + l - 1]}$ are in $\bm{X'_{1}}$, so then $\textsf{Anc}_{G'(n,l,n))}(\bm{U'_{1}} \cup \{A_{1}^{[n - |\bm{X'_{1}}| + l - 1]} \}) = \bm{X'_{1}} \cup \{ L_{1}^{[n - |\bm{X'_{1}|}]}\}$. Similarly, $A_{1}^{[n - |\bm{X'_{1}}| + l - 2]}$ is not in $\bm{U'_{1}} \cup \{A_{1}^{[n - |\bm{X'_{1}}| + l - 1]} \}$ since $L_{1}^{[n - |\bm{X'_{1}|-1}]}$ is a parent of $A_{1}^{[n - |\bm{X'_{1}}| + l - 2]}$ but is not in $\bm{X'_{1}} \cup \{ L_{1}^{[n - |\bm{X'_{1}}|]}\}$, while all other parents of $A_{1}^{[n - |\bm{X'_{1}}| + l - 2]}$ are. So then $\textsf{Anc}_{G'(n,l,n))}(\bm{U'_{1}} \cup \{ A_{1}^{[n - |\bm{X'_{1}}| + l - 2]},A_{1}^{[n - |\bm{X'_{1}}| + l - 1]}\}) = \bm{X'_{1}} \cup \{L_{1}^{[n - |\bm{X'_{1}| - 1}]},L_{1}^{[n - |\bm{X'_{1}|}]}\}$. Proceeding in this fashion, we can define 
\begin{align}
\bm{W'_{1}} := \bm{U'_{1}} \cup \{A_{1}^{[l]} , \dots,A_{1}^{[n - |\bm{X'_{1}}| + l - 1]} \}
\end{align}
and observe that $\textsf{Anc}_{G'(n,l,n))}(\bm{W'_{1}})$ has $n$ elements and is totally adjacent (the adjacency sequence is $L_{1}^{[1]}, ..., L_{1}^{[n]}$). Thus $\bm{W'_{1}}$ is injectable by \Cref{lem:Gprime_inj}.

The case for $\bm{U'_{2}}$ proceeds similarly. Since $A_{1}^{[1]} \in \bm{U'_{2}}$ and $L_{2}^{[n-l+2]}$ is a parent of $A_{1}^{[1]}$, it must be that $L_{2}^{[n-l+2]} \in \bm{X'_{2}}$. Since $\bm{X'_{2}}$ is totally adjacent by assumption, the adjacency sequence $y'_{1},...,y'_{|\bm{X'_{2}}|}$ of $\bm{X'_{2}}$ must be of the form
\begin{align}
y'_{1},...,y'_{|\bm{X'_{2}}|} = L_{2}^{[n-l+2]}, ..., L_{2}^{[n]},L_{1}^{[1]}, ..., L_{1}^{[|\bm{X'_{2}}| - l+1]}
\end{align}
The visible node $A_{1}^{[|\bm{X'_{2}}| - l+2]}$ cannot be in $\bm{U'_{2}}$ since $L_{1}^{[|\bm{X'_{2}}| - l+2]}$ is a parent of $A_{1}^{[|\bm{X'_{2}}| - l+2]}$ but is not in $\bm{X'_{2}}$. However, all other parents of $A_{1}^{[|\bm{X'_{2}}| - l+2]}$ are in $\bm{X'_{2}}$, so it follows that $\textsf{Anc}_{G'(n,l,n))}(\bm{U'_{2}} \cup \{A_{1}^{[|\bm{X'_{2}}| - l+2]} \}) = \bm{X'_{2}} \cup \{L_{1}^{[|\bm{X'_{2}}| - l+2]} \}$. Just as with the case above, we continue appending visible nodes to $\bm{U'_{2}}$ until we obtain a provably injectable set. Explicitly, we define $\bm{W'_{2}} := \bm{U'_{2}} \cup \{A_{1}^{[|\bm{X'_{2}}| - l+2]}, ..., A_{1}^{[n - l+1]} \} $ and get that $\textsf{Anc}_{G'(n,l,n))}(\bm{W'_{2}}) = \{L_{2}^{[n-l+2]},\dots,L_{2}^{[n]},L_{1}^{[1]},\dots,L_{1}^{[n-l+1]} \}$ is of size $n$ and totally adjacent (with adjacency sequence $L_{2}^{[n-l+2]}, \dots, L_{2}^{[n]},L_{1}^{[1]}, \dots,L_{1}^{[n-l+1]}$). By \Cref{lem:Gprime_inj}, $\bm{W'_{2}}$ is injectable.
\end{proof}

Let us now turn to the proof of \Cref{thm:family_of_networks}, which we restate here for convenience, with the minor change of replacing $G'(n,l,j)$ for a general $j \in \{1,...,n\}$ with $G'(n,l,n)$ pursuant to the comments made earlier: 
\begin{Theorem}  Let $k \in \mathbb{N}_{>1}$ be odd. If $n \geq (l-1)k$, then there exists a partition $\{\mathcal{S}_{1},...,\mathcal{S}_{k}\}$ of $\textsf{Vnodes}(G'(n,l,n))$ such that, taking ${\bf V} := \{\mathcal{S}_{1},...,\mathcal{S}_{k}\}$, all elements of $2^{{\bf V}} \setminus {\bf V}$ are injectable or expressible in $G'(n,l,n)$.
\end{Theorem}

\begin{proof}[Proof of \Cref{thm:family_of_networks}] Consider $G'(n,l,n)$ for $n \geq (l-1)k$ for some odd $k \in \mathbb{N}_{ > 1}$ and consider $\textsf{Lnodes}(G'(n,l,n))$ equipped with the order as in \Cref{eq:Lnodes_ordering}. 

Define $\bm{U'}(r) := \{A_{1}^{[r]},...,A_{1}^{[n-l+r]}\}$ for $r \in \{1,...,l\}$. Then
\begin{align}
\textsf{Anc}_{G'(n,l,n)}(\bm{U'}(r)) = \begin{cases} \{L_{2}^{[n-l+1+r]}, ...,L_{2}^{[n]},L_{1}^{[1]},...,L_{1}^{[n-l+r]} \}, &\text{ if } r \in \{1,...,l-1\}, \\
\{L_{1}^{[1]},...,L_{1}^{[n]} \}, &\text{ if } r = l.
\end{cases}
\end{align}
In either case, $\textsf{Anc}_{G'(n,l,n)}(\bm{U'}(r))$ satisfies the conditions of \Cref{lem:Gprime_inj} and hence $\bm{U'}(r)$ is injectable. 

Now define $\bm{U'_{1}}(t) := \{A_{1}^{[n-t+1]},A_{1}^{[n-t+2]},...,A_{1}^{[n]} \}$ and $\bm{U'_{2}}(t') = \{A_{1}^{[1]}, A_{1}^{[2]}, ..., A_{1}^{[t']} \}$ for $t,t' \in \{1, ..., n-l\}$. We get that
\begin{align}
\textsf{Anc}_{G'(n,l,n)}(\bm{U'_{1}}(t)) &= \{L_{1}^{[n-l+2-t]},\dots ,L_{1}^{[n]}\}, \\
\textsf{Anc}_{G'(n,l,n)}(\bm{U'_{2}}(t')) &= \{L_{2}^{[n-l+2]}, \dots ,L_{2}^{[n]},L_{1}^{[1]}, \dots, L_{1}^{[t']} \},
\end{align}
both of which are totally adjacent with respect to \Cref{eq:Lnodes_ordering}. If in addition $t + t' = n-l+1$, then $\textsf{Anc}_{G'(n,l,n)}(\bm{U'_{1}}(t))$ and $\textsf{Anc}_{G'(n,l,n)}(\bm{U'_{2}}(t'))$ partition $\textsf{Lnodes}(G'(n,l,n))$, in which case $\bm{U'_{1}}(t) \cup \bm{U'_{2}}(t')$ is expressible.

Let us now define, for $r \in \{1, ..., l\}$,
\begin{align}
\mathcal{A}(r) := \begin{cases} \{A_{1}^{[1]},...,A_{1}^{[l-1]}\}, & \text{ if } r = l, \\
\{A_{1}^{[n-l+2]},...,A_{1}^{[n]}\}, & \text{ if } r = 1, \\
\{A_{1}^{[1]}, ..., A_{1}^{[r-1]} \} \cup \{A_{1}^{[n-l+ r + 1]}, ..., A_{1}^{[n]} \}, & \text{ if } 1 < r < l,
\end{cases}
\end{align}
and, for $t \in \{1,...,n-l\}$,
\begin{align}
\mathcal{B}(t) := \{ A_{1}^{[n-l+2-t]},...,A_{1}^{[n-t]}\}.
\end{align}
Note that, for any $r$ and $t$ in the allowed ranges, $|\mathcal{A}(r)| = |\mathcal{B}(t)| = l-1$. Furthermore, we get that
\begin{gather}
\textsf{Vnodes}(G'(n,l,n))) \setminus \mathcal{A}(r) = \bm{U'}(r), \\
\textsf{Vnodes}(G'(n,l,n))) \setminus \mathcal{B}(t) = \bm{U'_{1}}(t) \cup \bm{U'_{2}}(n-l+1-t),
\end{gather}
so if we are able to construct a set partition of $\textsf{Vnodes}(G'(n,l,n))$ out of $\mathcal{A}(r)$, $\mathcal{B}(t)$ or sets containing them for suitable values for $r,t$, then we are guaranteed that their set complements are injectable or expressible.

Since a set partition is made up of mutually disjoint sets, let us consider the conditions under which the $\mathcal{A}(r)$ and $\mathcal{B}(t)$ are disjoint. The sets $\mathcal{A}(r_{1})$ and $\mathcal{A}(r_{2})$ satisfy $\mathcal{A}(r_{1}) \cap \mathcal{A}(r_{2}) = \emptyset$ if $r_{1} = 1$ and $r_{2} = l$. The sets $\mathcal{B}(t_{1})$ and $\mathcal{B}(t_{2})$ satisfy $\mathcal{B}(t_{1}) \cap \mathcal{B}(t_{2}) = \emptyset$ if $|t_{1} - t_{2}| > l-2$. The sets $\mathcal{A}(r)$ and $\mathcal{B}(t)$ satisfy $\mathcal{A}(r) \cap \mathcal{B}(t) = \emptyset$ if $r = 1$ and $t > l-2$, if $r = l$ and $t < n -2l + 3$, or if $r \neq 1,l$ and $l-1 < t+r < n-l+3$.

Let us now define $\widehat{\mathcal{S}}_{1} := \mathcal{A}(1)$ and $\widehat{\mathcal{S}}_{q} := \mathcal{B}((q-1)(l-1))$ for $q \in \{2,..., \kappa\}$ where $\kappa := 1 +  \lfloor \frac{n-2l+3}{l-1}\rfloor$. To see that $\kappa$ is well-defined (i.e., $\kappa \ge 2$), we note that by assumption $n \geq k(l-1)$ and $l-1 > 0$, so
\begin{align}
\frac{n-2l+3}{l-1} \ge \frac{k(l-1) - 2l + 3}{l-1} \ge k-2 + \frac{1}{l-1} \ge 1
\end{align}
since $k$ is both odd and greater than $1$. In fact, the above reasoning demonstrates that $\kappa \geq k-1$. Note that the definitions of the $\widehat{\mathcal{S}}_{i}$ so far ensure that they are mutually disjoint: $\widehat{\mathcal{S}}_{1} \cap \widehat{\mathcal{S}}_{q} = \emptyset$  since $(q-1)(l-1) \geq l-2$ for all $q \in \{2, ..., \kappa\}$ and $\widehat{\mathcal{S}}_{q_{1}} \cap \widehat{\mathcal{S}}_{q_{2}} = \emptyset$ for all $q_{1} \neq q_{2} \in \{1,...,\kappa\}$ since $|(q_{1} - 1)(l-1) - (q_{2} - 1)(l-1)| = |(q_{1} - q_{2})(l-1)| > l -2$. Furthermore, we have that
\begin{align}
\bigcup_{i = 1}^{\kappa} \widehat{\mathcal{S}}_{i} = \{A_{1}^{[n-\kappa (l - 1) + 1]}, ..., A_{1}^{[n]}\}.
\end{align}
Let us then define $\widehat{S}_{\kappa + 1} := \{A_{1}^{[1]}, ..., A_{1}^{[n-\kappa (l -1)]}\}$. As it stands, $\{\widehat{\mathcal{S}}_{1}, ..., \widehat{\mathcal{S}}_{\kappa + 1} \}$ forms a valid set partition of $\textsf{Vnodes}(G'(n,l,n))$. Furthermore, we know that $\textsf{Vnodes}(G'(n,l,n)) \setminus \widehat{\mathcal{S}}_{q}$ is injectable or expressible for each $q \in \{1, ..., \kappa\}$ since
\begin{align}
\textsf{Vnodes}(G'(n,l,n)) \setminus \widehat{\mathcal{S}}_{1} = \bm{U'}(1)
\end{align}
and
\begin{align}
\textsf{Vnodes}(G'(n,l,n)) \setminus \widehat{\mathcal{S}}_{q} = \bm{U'_{1}}((q-1)(l-1)) \cup \bm{U'_{2}}(n-q(l-1)), \quad q \in \{2, ..., \kappa\}
\end{align}
by construction. To see that $\textsf{Vnodes}(G'(n,l,n)) \setminus \widehat{\mathcal{S}}_{\kappa +1}$ is injectable also, we note that
\begin{align}
n - \kappa[l-1] \geq n - \frac{n - 2l + 3}{l-1}(l-1) \ge 2l-3 \ge l-1
\end{align}
where the last inequality uses that $l \ge 2$. In particular, this means that $\widehat{\mathcal{S}}_{\kappa+1} \supseteq \mathcal{A}(l)$ and hence that $\textsf{Vnodes}(G'(n,l,n)) \setminus \widehat{\mathcal{S}}_{\kappa +1} \subseteq \bm{U'}(l)$, from which injectibility follows (recall that subsets of injectable sets are injectable).

Finally, it remains only to construct a partition using precisely $k$ subsets. From above, we know that $\kappa \ge k-1$ meaning that $\kappa+1 \ge k$, so the partition $\{\widehat{\mathcal{S}}_{1}, ..., \widehat{\mathcal{S}}_{\kappa+1} \}$ has at least as many elements as required. If $\kappa + 1 = k$, taking $\mathcal{S}_{q} = \widehat{\mathcal{S}}_{q}$ for all $q \in \{1, ..., \kappa + 1\}$ finishes the job. In the case where $\kappa + 1 > k$, we can simply define a partition $\{\mathcal{S}_{1}, ..., \mathcal{S}_{k}\}$ where one or more of the $\mathcal{S}_{i}$ consist of a union of $\widehat{\mathcal{S}}_{q}$. For example, by taking $\mathcal{S}_{q} = \widehat{\mathcal{S}}_{q}$ for all $q = 1, ..., k-1$ and $\mathcal{S}_{k} := \cup_{i=k}^{\kappa + 1} \widehat{\mathcal{S}}_{i}$. In this case, $\textsf{Vnodes}(G'(n,l,n)) \setminus \mathcal{S}_{q} \subseteq \textsf{Vnodes}(G'(n,l,n)) \setminus \widehat{\mathcal{S}}_{q}$ for all $q \in \{1, ...,k\}$ and so remain injectable or expressible. Thus, for ${\bf V} := \{\mathcal{S}_{1}, ..., \mathcal{S}_{k}\}$, each element of $2^{{\bf V}} \setminus {\bf V}$ is injectable or expressible as required.
\end{proof}

Using \Cref{lem:Gprime_inj} and \Cref{lem:Gprime_expr}, we can now justify the claims made in the main text regarding the injectable and expressible sets of the pentagon and woven hexagon scenarios. For the Cut inflation $G'(5,2,5)$ of the pentagon scenario $G(5,2)$, we consider $\textsf{LNodes}(G'(5,2,5))$ with the order
\begin{align}
L_{2}^{[5]} < L_{1}^{[1]} < L_{1}^{[2]} < L_{1}^{[3]} < L_{1}^{[4]} < L_{1}^{[5]} . \label{eq:pentagon_order}
\end{align}
For $\bm{U'} := \{A_{1}^{[1]}, A_{1}^{[2]}, A_{1}^{[3]}, A_{1}^{[4]} \}$, we have that
\begin{align}
\textsf{Anc}_{G'(5,2,5)}(\{A_{1}^{[1]}, A_{1}^{[2]}, A_{1}^{[3]}, A_{1}^{[4]} \}) = \{L_{2}^{[5]}, L_{1}^{[1]},L_{1}^{[2]},L_{1}^{[3]},L_{1}^{[4]}\}
\end{align}
which has $5$ elements and is totally adjacent with adjacency sequence $L_{2}^{[5]}, L_{1}^{[1]},L_{1}^{[2]},L_{1}^{[3]},L_{1}^{[4]}$, so by \Cref{lem:Gprime_inj}, $\bm{U'}$ is injectable. For $\bm{U'} := \{A_{1}^{[2]}, A_{1}^{[3]}, A_{1}^{[4]}, A_{1}^{[5]}\}$, we have that
\begin{align}
\textsf{Anc}_{G'(5,2,5)}(\{A_{1}^{[2]}, A_{1}^{[3]}, A_{1}^{[4]}, A_{1}^{[5]}\}) = \{L_{1}^{[1]}, L_{1}^{[2]},L_{1}^{[3]},L_{1}^{[4]},L_{1}^{[5]}\}
\end{align}
which also has $5$ elements and is totally adjacent (with adjacency sequence $L_{1}^{[1]}, L_{1}^{[2]},L_{1}^{[3]},L_{1}^{[4]},L_{1}^{[5]}$), so is injectable. 

Defining $\bm{U'_{1}} := \{ A_{1}^{[1]}, A_{1}^{[2]}, A_{1}^{[3]}\}$ and $\bm{U'_{2}} := \{A_{1}^{[5]} \}$, we get that 
\begin{equation}
\begin{gathered}
\textsf{Anc}_{G'(5,2,5)}(\{ A_{1}^{[1]}, A_{1}^{[2]}, A_{1}^{[3]}\}) = \{L_{2}^{[5]}, L_{1}^{[1]}, L_{1}^{[2]}, L_{1}^{[3]} \}, \\
\textsf{Anc}_{G'(5,2,5)}(\{A_{1}^{[5]} \}) = \{L_{1}^{[4]}, L_{1}^{[5]}\},
\end{gathered}
\end{equation}
both of which are totally adjacent with respect to the order in \Cref{eq:pentagon_order} and which form a partition of $\textsf{Lnodes}(G'(5,2,5))$. So by \Cref{lem:Gprime_expr}, $\{A_{1}^{[1]}, A_{1}^{[2]}, A_{1}^{[3]}, A_{1}^{[5]} \}$ is expressible. The cases where $\bm{U'_{1}} := \{A_{1}^{[1]}, A_{1}^{[2]}\}$ and $\bm{U'_{2}} = \{A_{1}^{[4]}, A_{1}^{[5]} \}$, and where $\bm{U'_{1}} := \{A_{1}^{[1]} \}$ and $\bm{U'_{2}} = \{A_{1}^{[3]}, A_{1}^{[4]}, A_{1}^{[5]} \}$, proceed analogously.

For the Cut inflation $G'(6,3,6)$ of the woven hexagon scenario $G(6,3)$, we consider the order on $\textsf{Lnodes}(G'(6,3,6))$ given by
\begin{align}
L_{2}^{[5]} < L_{2}^{[6]} < L_{1}^{[1]} < L_{1}^{[2]} < L_{1}^{[3]} < L_{1}^{[4]} < L_{1}^{[5]} < L_{1}^{[6]}.
\end{align}
Analogously to the pentagon scenario, \Cref{lem:Gprime_inj} can be used to show that $\{A_{1}^{[1]}, A_{1}^{[2]}, A_{1}^{[3]}, A_{1}^{[4]} \}$, $\{A_{1}^{[2]}, A_{1}^{[3]}, A_{1}^{[4]}, A_{1}^{[5]} \}$ and $\{A_{1}^{[3]}, A_{1}^{[4]}, A_{1}^{[5]}, A_{1}^{[6]} \}$ are injectable, and \Cref{lem:Gprime_expr} can be used to show that $\{A_{1}^{[1]}, A_{1}^{[2]}, A_{1}^{[3]} \} \cup \{A_{1}^{[6]} \}$, $\{A_{1}^{[1]}, A_{1}^{[2]} \} \cup \{A_{1}^{[5]}, A_{1}^{[6]} \}$ and $\{A_{1}^{[1]}\} \cup \{A_{1}^{[4]}, A_{1}^{[5]}, A_{1}^{[6]} \}$ are expressible.

\section{Higher-dimensional generalized Schmidt decomposition} \label{app:hdim_gen_schmidt}

The following is a summary of the generalized Schmidt decomposition presented in~\cite[Thm 2,][]{carteret2000multipartite}. After presenting the general case, we consider the specific case considered in the main text. 

Consider $\H = \bigotimes_{i=1}^{n} \H_{i}$ with $n \geq 3$ and let $d_{i} := \dim \H_{i}$ and $D := \dim \H$. Let us label the component Hilbert spaces such that $2 \leq d_{1} \leq ... \leq d_{n}$. Let $\{I_{1}, ..., I_{N}\}$ be the ordered set of length $(n-1)$ tuples $(i_{1},...,i_{n-1})$ with $1 \leq i \leq d_{k}$ which excludes tuples of the form $(i,...,i)$ and $(d_{1}, ..., d_{k},i,...,i)$ with $d_{k} \leq i < d_{k+1}$ and $1 \leq k \leq n-2$. The ordering of $\{I_{1},...,I_{N}\}$ is lexicographical. Define the set $\mathcal{B}_{0}$ to be the set of $n$-tuples $(i_{1}, ..., i_{n})$ with $(i_{1}, ..., i_{n-1}) = I_{k}$ and $i_{n} = d_{n-1} + l$ where $1 \leq k \leq \min\{N, d_{n} - d_{n-1} \}$ and $k \leq l \leq d_{n} - d_{n-1}$. Define also the following sets of $n$-tuples:
\begin{align}
\mathcal{B}_{1} &:= \left\{ (d_{1}, ..., d_{k-1}, j, d_{k+1},...,d_{n-1}, d_{n-1}) : 1 \leq j < d_{k}, 1 \leq k \leq n-2 \right\} \\
\mathcal{B}_{2} &:= \left\{(d_{1}, ..., d_{n-2}, j, d_{n-1}) : 1 \leq j < d_{n-2} \right\}, \\
\mathcal{B}_{3} &:= \left\{(d_{1},...,d_{n-2},j,1) : d_{n-2} \leq j \leq d_{n-1} \right\}, \\
\mathcal{B}_{4} &:= \left\{ (j,j,...,j): 2 \leq j \leq d_{1} \right\} \cup \left\{ (d_{1},...,d_{k}, j, ..., j): 1 \leq k < n-1, d_{k} < j \leq d_{k+1} \right\} \nonumber \\
& \quad \quad \cup \left\{(d_{1}, ..., d_{n-1},j) : d_{n-1} < j \leq \min\left\{\frac{D}{d_{n}}, d_{n} \right\} \right\}.
\end{align}

Any $\ket{\Psi} \in \H$ can be written as 
\begin{align}
\ket{\Psi} = \sum_{i_{1}, ..., i_{n}} \alpha_{i_{1}...i_{n}}\ket{\psi_{i_{1}}^{(1)}...\psi_{i_{n}}^{(n)}}
\end{align}
where $\{\ket{\psi_{j}}^{(k)}\}_{j = 1}^{d_{k}}$ is a basis for $\H_{k}$ for each $k$ and where the coefficients $\alpha_{i_{1}...i_{n}}$ satisfy:
\begin{enumerate}
    \item \label{cond_1} $\alpha_{ii...ijii...i} = 0$ if $1 \leq i < d_{1}$ and $i < j$;
    \item \label{cond_2} $\alpha_{d_{1}...d_{k}ii...ijii...i} = 0$ if $d_{k} \leq i < d_{k+1}$ and $i < j$, $1 \leq k \leq n-2$;
    \item \label{cond_3} $\alpha_{I}$ for every $n$-tuple index $I \in \mathcal{B}_{0}$;
    \item \label{cond_4} the coefficients with indices in the sets $\mathcal{B}_{x}$, $x = 1,...,4$ are real and non-negative;
    \item \label{cond_5} for $i = 1,...,d_{n-1}$, define
    \begin{align}
    R_{i} = | \alpha_{d_{1}...d_{k}i...i}|,
    \end{align}
    where $k$ is such that $d_{k} < i \leq d_{k+1}$. Then $R_{1} \geq ... \geq R_{d_{n-1}}$.
\end{enumerate}

In~\Cref{sec:higher_dim_nodes}, we consider pure states in $\H = \H_{A} \otimes \H_{B} \otimes \H_{C} \cong \mathbb{C}^{2} \otimes \mathbb{C}^{2} \otimes \mathbb{C}^{4}$. This means that $d_{1} = d_{2} =2$, $d_{3} = 4$, $D = 16$ and $N = 3$. We have that 
\begin{gather}
\{I_{1}, I_{2}, I_{3} \} = \{(1,2), (2,1), (2,2)\}, \\
\mathcal{B}_{0} = \{(1,2,3),(1,2,4),(2,1,4) \},\\
\mathcal{B}_{1} = \{(1,2,2)\},\\
\mathcal{B}_{2} = \{(2,1,2) \},\\
\mathcal{B}_{3} = \{(2,2,1) \},\\
\mathcal{B}_{4} = \{(2,2,2) \} \cup \emptyset \cup \{(2,2,3), (2,2,4)\}.
\end{gather}
From~\Cref{cond_1}, we get that $\alpha_{112} = \alpha_{121} = \alpha_{211} = \alpha_{113} = \alpha_{114} = 0$.~\Cref{cond_2} doesn't rule out any coefficients.~\Cref{cond_3} ensures that $\alpha_{123} = \alpha_{124} = \alpha_{214} = 0$.~\Cref{cond_4} gives that $\alpha_{122}, \alpha_{212}, \alpha_{221}, \alpha_{222}, \alpha_{223}, \alpha_{224} \in \mathbb{R}_{\geq 0}$ and~\Cref{cond_5} requires that $|\alpha_{111}| \geq |\alpha_{222}|$.

To arrive at the form given in the main text (\Cref{eq:hd_gen_schmidt_224}), we make a few notational modifications. Firstly, we identify the bases $\{\ket{\psi_{j}}^{(k)}\}_{j = 1}^{d_{k}}$ for $k = A,B,C$ with the computational basis in each case (i.e. $\{\ket{0}, \ket{1}\}$ for $A$ and $B$, and $\{\ket{0}, ..., \ket{3}\}$ for $C$). We write each coefficient not constrained to by real and positive in polar form and relabel all non-zero coefficients $\alpha_{ijk}$ to $\alpha_{x}$ for $x = 0, ..., 7$ where the ordering corresponds to the lexicographical ordering of the corresponding computational basis states.

\section{Some inflations produce trivial witnesses} \label{app:ring_inflation}

\begin{figure}[htbp]
\begin{center}
\includegraphics[width=0.50\textwidth]{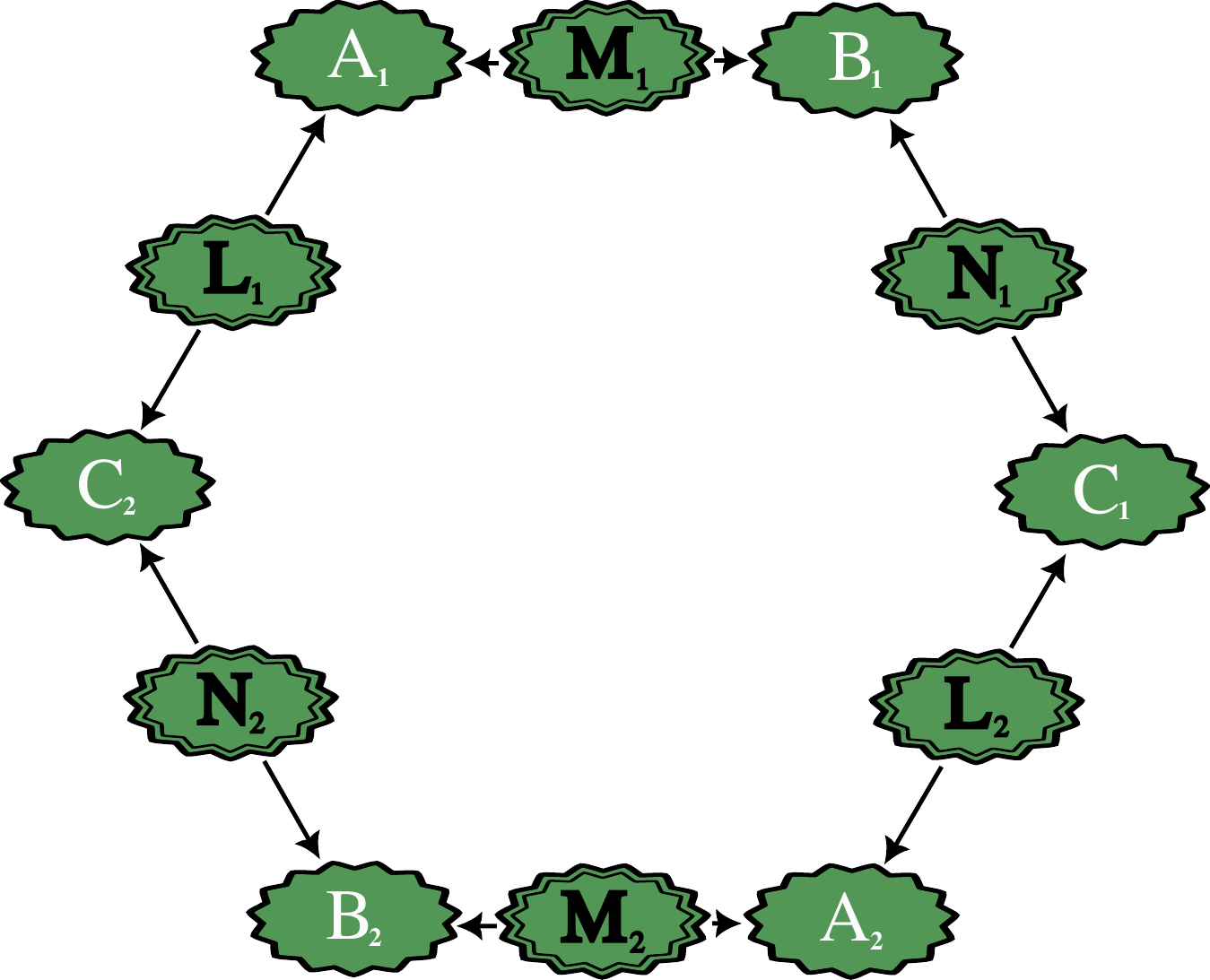}
\caption{The Ring inflation of the triangle scenario. Note that, despite the similar structure, this DAG is distinct from that of the network $G(6,2)$ from the family of networks considered in \Cref{sec:beyond_triangle}, as it is an inflation of the triangle scenario and hence has node labels that are the same up to copy-indices as those of the triangle scenario (i.e., the network $G(3,2)$).}
\label{fig:ring_inf}
\end{center}
\end{figure}

For the networks considered in the main text, we focused on a single type of nonfanout inflation, namely that of the Cut inflation. However, for a given network $G$, various choices of nonfanout inflation are often available. It is then natural to ask whether these other choices also lead to non-trivial $G$-incompatibility witnesses. Here, we demonstrate that this is not guaranteed to be the case.

Consider the nonfanout inflation of the triangle scenario depicted in \Cref{fig:ring_inf}, which is known as the Ring inflation \cite{gisin2020constraints, navascues2020genuine}. For this inflation, the sets
\begin{align}
\{A_{1}, B_{1} \}, \quad \{B_{1},C_{1}\}, \quad \{C_{1}, A_{2} \}, \quad \{A_{2}, B_{2} \}, \quad \{B_{2}, C_{2} \}, \quad \{C_{2}, A_{2}\},
\end{align}
are all injectable. Furthermore, any pair of these sets that differ only in their copy-indices, for example $\{A_{1},B_{1}\}$ and $\{A_{2},B_{2}\}$, share no common ancestors, meaning that the sets
\begin{align}
\{A_{1},B_{1},A_{2},B_{2}\}, \quad \{A_{1},C_{1},A_{2},C_{2}\}, \quad \{B_{1},C_{1},B_{2},C_{2}\}
\end{align}
and all subsets thereof, are expressible. 

Let ${\bf V} := \{\mathcal{S}_{1}, \mathcal{S}_{2}, \mathcal{S}_{3}\}$ with $\mathcal{S}_{1} := \{A_{1},A_{2} \}$, $\mathcal{S}_{2} := \{B_{1},B_{2} \}$ and $\mathcal{S}_{3} := \{C_{1}, C_{2}\}$. The set ${\bf V}$ defines a partition of the visible nodes of the Ring inflation and moreover, $2^{{\bf V}} \setminus {\bf V}$ consists of injectable and expressible sets as established above. As per \Cref{sec:beyond_triangle}, we are thus in a position to write down a candidate triangle-incompatibility witness based on the inequality from Hall (cf \Cref{eq:Hall_op_ineq}). For $\rho_{ABC}$ a state on the Hilbert spaces associated to the visible nodes of the triangle scenario, the triangle-incompatibility witness for the choice of ${\bf V}$ defined above is
\begin{align}
I^{\textrm{Ring}}(\rho_{ABC}) = \mathbb{1} - \rho_{A} \otimes \rho_{A} - \rho_{B}\otimes \rho_{B} - \rho_{C} \otimes \rho_{C} + \rho_{AB} \otimes \rho_{AB} + \rho_{AC} \otimes \rho_{AC} + \rho_{BC} \otimes \rho_{BC}
\end{align}
where each term in the above expression is to be understood as an operator on $\mathcal{H}_{A_{1}} \otimes \mathcal{H}_{B_{1}} \otimes \mathcal{H}_{C_{1}} \otimes \mathcal{H}_{A_{2}} \otimes \mathcal{H}_{B_{2}} \otimes \mathcal{H}_{C_{2}}$ (that is, the identity operators have been omitted from the notation as in the main text).

Despite being expressed entirely in terms of marginals of $\rho_{ABC}$, just like the triangle-incompatibility witnesses derived from the Cut inflation, $I^{\textrm{Ring}}$ is a trivial triangle-incompatibility witness. That is, $I^{\textrm{Ring}}(\rho_{ABC})$ is positive semidefinite for all $\rho_{ABC}$, meaning that the triangle-incompatibility of no state $\rho_{ABC}$ can be witnessed. To see that $I^{\textrm{Ring}}(\rho_{ABC})$ is positive semidefinite, let us define $\mathcal{H}_{\bm{A}} := \mathcal{H}_{A_{1}} \otimes \mathcal{H}_{A_{2}}$ and similarly for $\mathcal{H}_{\bm{B}}$ and $\mathcal{H}_{\bm{C}}$. Let $\sigma$ be a state on $\mathcal{H}_{\bm{A}} \otimes \mathcal{H}_{\bm{B}} \otimes \mathcal{H}_{\bm{C}}$ given by $\sigma = \rho_{ABC} \otimes \rho_{ABC}$, with the ordering of the Hilbert spaces such that the two copies of $\rho_{ABC}$ corresponding to $\mathcal{H}_{A_{1}}\otimes \mathcal{H}_{B_{1}} \otimes \mathcal{H}_{C_{1}}$ and $\mathcal{H}_{A_{2}}\otimes \mathcal{H}_{B_{2}} \otimes \mathcal{H}_{C_{2}}$ respectively. For ${\bf V'} := \{\bm{A}, \bm{B}, \bm{C}\}$, \Cref{eq:Hall_op_ineq} guarantees that
\begin{align}
\Delta(\{\sigma_{{\bf X}} : {\bf X} \in 2^{{\bf V'}}\setminus {\bf V'}  \}) \ge 0.
\end{align}
However, $\Delta(\{\sigma_{{\bf X}} : {\bf X} \in 2^{{\bf V'}}\setminus {\bf V'}  \})$ is precisely the operator $I^{\textrm{Ring}}(\rho_{ABC})$:
\begin{align}
\Delta(\{\sigma_{{\bf X}} : {\bf X} \in 2^{{\bf V'}}\setminus {\bf V'}  \}) &= \mathbb{1} - \sigma_{\bm{A}} - \sigma_{\bm{B}} - \sigma_{\bm{C}} + \sigma_{\bm{AB}} + \sigma_{\bm{AC}} + \sigma_{\bm{BC}} \\
&= \mathbb{1} - \rho_{A} \otimes \rho_{A} - \rho_{B}\otimes \rho_{B} - \rho_{C} \otimes \rho_{C} + \rho_{AB} \otimes \rho_{AB} + \rho_{AC} \otimes \rho_{AC} + \rho_{BC} \otimes \rho_{BC} \\
&= I^{\textrm{Ring}}(\rho_{ABC}).
\end{align}
This example indicates that some care needs to be taken when leveraging the inequality of \Cref{eq:Hall_op_ineq} in the pursuit of non-trivial $G$-incompatibility witnesses for a given network scenario $G$. An analysis of the conditions under which a non-trivial witness can be derived is left for future work.

\section{Example illustrating Proposition~\ref{mainlemma}} \label{app:example_prop_1}

In this section, we provide an example that illustrates the content of Proposition~\ref{mainlemma} from the main text, which we have restated here:
\begin{Proposition}  
Let $G'$ be an inflation of the network $G$.
In this case, if the family of states $\{ \rho_{\bm{U}} : \bm{U} \in \mathsf{ImagesInjectableSets}(G) \}$ is compatible with $G$, then the family of states $\{ \sigma_{\bm{Y}'} : \bm{Y}' \in \mathsf{ExpressibleSets}(G') \}$, where 
 \begin{align}
 \sigma_{\bm{Y}'} := \bigotimes_{\bm{U} \in \mathsf{ImagesInjectableComponents}(\bm{Y}')} \rho_{\bm{U}},
 \end{align} 
 is compatible with $G'$. 
\end{Proposition}

In our example, let $G$ be the triangle scenario and let $G'$ be the $AB$-Cut inflation of $G$. Recall that the singleton sets $\{A_{1}\}$, $\{B_{1}\}$ and $\{C_{1}\}$ are all injectable sets by the definition of inflation and hence are elements of $\mathsf{ExpressibleSets}(G')$. Similarly, the sets $\{A_{1}, C_{1}\}$ and $\{B_{1},C_{1}\}$ are also injectable in the $AB$-Cut inflation of the triangle, while $\{A_{1}, B_{1}\}$ is not injectable but is expressible, with injectable components given by $\{A_{1}\}$ and $\{B_{1}\}$. 

Suppose that $\{ \rho_{\bm{U}} : \bm{U} \in \mathsf{ImagesInjectableSets}(G) \}$ is compatible with $G$. In particular, this means that there exists a state $\rho_{ABC}$ and a set of parameters 
\begin{align}
\texttt{par} := \{\E_{A|\LA \MA}, \; \E_{B|\MB \NB},\; \E_{C|\LC \NC},\; \rho_{L}, \;\rho_{M}, \;\rho_{N} \},
\end{align}
where the maps $\E_{A|\LA \MA}$, $\E_{B|\MB \NB}$, $\E_{C|\LC \NC}$ and states $\rho_{L} = \rho_{\LA \LC}$, $\rho_{M} = \rho_{\MA \MB}$, $\rho_{N} = \rho_{\NB \NC}$ are as in \Cref{Qcausalmodels}, such that
\begin{align}
\rho_{ABC} = \E_{A|\LA \MA} \otimes \E_{B|\MB \NB} \otimes \E_{C|\LC \NC}(\rho_{\LA \LC} \otimes \rho_{\MA \MB} \otimes \rho_{\NB \NC}).
\end{align}
Let us now consider the set of parameters $\texttt{par}'$ given by $\texttt{par}' = \textsf{Inflation}_{G\rightarrow G'}(\texttt{par})$. That is, we consider the latent spaces of $G'$ to be given by $\mathcal{H}_{L_{1}} \cong \mathcal{H}_{L}$, $\mathcal{H}_{M_{1}} \cong \mathcal{H}_{\MB}$, $\mathcal{H}_{M_{2}} \cong \mathcal{H}_{\MA}$ and $\mathcal{H}_{N_{1}} \cong \mathcal{H}_{N}$ and the visible space to be $\mathcal{H}_{A_{1}} \cong \mathcal{H}_{A}$, $\mathcal{H}_{B_{1}} \cong \mathcal{H}_{B}$ and $\mathcal{H}_{C_{1}} \cong \mathcal{H}_{C}$. The set $\texttt{par}'$ is then 
\begin{align}
\texttt{par}' = \{\E_{A_{1}|L_{1}^{(A_{1})} M_{2}},\; \E_{B_{1}|M_{1} N_{1}^{(B_{1})}},\; \E_{C_{1}|L_{1}^{(C_{1})}N_{1}^{(C_{1})}}, \; \rho_{L_{1}},\; \rho_{M_{1}},\; \rho_{M_{2}},\; \rho_{N_{1}} \},
\end{align}
where 
\begin{gather}
\E_{A_{1}|L_{1}^{(A_{1})} M_{2}} \equiv \E_{A|\LA \MA}, \quad \E_{B_{1}|M_{1} N_{1}^{(B_{1})}} \equiv \E_{B|\MB \NB}, \quad \E_{C_{1}|L_{1}^{(C_{1})}N_{1}^{(C_{1})}} \equiv \E_{C|\LC \NC}, \\
\rho_{L_{1}}\equiv \rho_{\LA \LC}, \quad \rho_{M_{1}} \equiv \rho_{\MB}, \quad \rho_{M_{2}} \equiv \rho_{\MA}, \quad \rho_{N_{1}} \equiv \rho_{\NB \NC},
\end{gather} 
with ``$\equiv$'' denoting equality up to the substitution of isomorphic Hilbert spaces (i.e., $\E_{A| \LA \MA}$ is a map from $\mathcal{H}_{\LA} \otimes \mathcal{H}_{\MA}$ to $\mathcal{H}_{A}$ where as $\E_{A_{1}|L_{1}^{(A_{1})} M_{2}}$ is a map from $\mathcal{H}_{L_{1}^{(A_{1})}} \otimes \mathcal{H}_{M_{2}} \cong \mathcal{H}_{\LA} \otimes \mathcal{H}_{\MA}$ to $\mathcal{H}_{A_{1}} \cong \mathcal{H}_{A}$, and similarly for the other parameters). Consequently, we can consider the state $\sigma_{A_{1}B_{1}C_{1}} \in \mathcal{L}(\mathcal{H}_{A_{1}} \otimes \mathcal{H}_{B_{1}} \otimes \mathcal{H}_{C_{1}})$ given by
\begin{align}
\sigma_{A_{1}B_{1}C_{1}} := \E_{A_{1}|L_{1}^{(A_{1})} M_{2}} \otimes \E_{B_{1}|M_{1} N_{1}^{(B_{1})}} \otimes \E_{C_{1}|L_{1}^{(C_{1})}N_{1}^{(C_{1})}}(\rho_{L_{1}} \otimes \rho_{M_{1}} \otimes \rho_{M_{2}} \otimes \rho_{N_{1}}).
\end{align}
If we consider the marginals of $\sigma_{A_{1}B_{1}C_{1}}$, we see that
\begin{align}
\sigma_{A_{1}} &= \textrm{Tr}_{B_{1}C_{1}}[\E_{A_{1}|L_{1}^{(A_{1})} M_{2}} \otimes \E_{B_{1}|M_{1} N_{1}^{(B_{1})}} \otimes \E_{C_{1}|L_{1}^{(C_{1})}N_{1}^{(C_{1})}}(\rho_{L_{1}} \otimes \rho_{M_{1}} \otimes \rho_{M_{2}} \otimes \rho_{N_{1}})] \\
&= \E_{A_{1}|L_{1}^{(A_{1})} M_{2}} (\rho_{L_{1}^{(A_{1})}} \otimes \rho_{M_{2}}) \\
&\equiv \E_{A|\LA \MA} (\rho_{\LA} \otimes \rho_{\MA}) \\
&= \rho_{A}
\end{align}
and similarly for the marginals $\sigma_{B_{1}}$ and $\sigma_{C_{1}}$. Likewise,
\begin{align}
\sigma_{A_{1}C_{1}} &= \textrm{Tr}_{B_{1}}[\E_{A_{1}|L_{1}^{(A_{1})} M_{2}} \otimes \E_{B_{1}|M_{1} N_{1}^{(B_{1})}} \otimes \E_{C_{1}|L_{1}^{(C_{1})}N_{1}^{(C_{1})}}(\rho_{L_{1}} \otimes \rho_{M_{1}} \otimes \rho_{M_{2}} \otimes \rho_{N_{1}})] \\
&= \E_{A_{1}|L_{1}^{(A_{1})} M_{2}}  \otimes \E_{C_{1}|L_{1}^{(C_{1})}N_{1}^{(C_{1})}}(\rho_{L_{1}} \otimes \rho_{M_{2}} \otimes \rho_{N_{1}^{(C_{1})}}) \\
&\equiv \E_{A|\LA \MA}  \otimes \E_{C|\LC \NC}(\rho_{L} \otimes \rho_{\MA} \otimes \rho_{\NC}) \\
& = \rho_{AC}
\end{align}
and similarly for $\sigma_{B_{1}C_{1}}$. In the case of $\sigma_{A_{1}B_{1}}$, we get that
\begin{align}
\sigma_{A_{1}B_{1}} &= \textrm{Tr}_{C_{1}}[\E_{A_{1}|L_{1}^{(A_{1})} M_{2}} \otimes \E_{B_{1}|M_{1} N_{1}^{(B_{1})}} \otimes \E_{C_{1}|L_{1}^{(C_{1})}N_{1}^{(C_{1})}}(\rho_{L_{1}} \otimes \rho_{M_{1}} \otimes \rho_{M_{2}} \otimes \rho_{N_{1}})] \\
&= \E_{A_{1}|L_{1}^{(A_{1})} M_{2}} \otimes \E_{B_{1}|M_{1} N_{1}^{(B_{1})}}(\rho_{L_{1}^{(A_{1})}} \otimes \rho_{M_{1}} \otimes \rho_{M_{2}
} \otimes \rho_{N_{1}^{(B_{1})}}) \\
&= \E_{A_{1}|L_{1}^{(A_{1})} M_{2}}(\rho_{L_{1}^{(A_{1})}} \otimes \rho_{M_{1}}) \otimes \E_{B_{1}|M_{1} N_{1}^{(B_{1})}}(\rho_{M_{1}} \otimes \rho_{N_{1}^{(B_{1})}}) \\
&\equiv \rho_{A} \otimes \rho_{B}.
\end{align}
Since the states $\sigma_{A_{1}} \equiv \rho_{A}$, $\sigma_{B_{1}} \equiv \rho_{B}$, $\sigma_{C_{1}} \equiv \rho_{C}$, $\sigma_{A_{1}B_{1}} \equiv \rho_{A} \otimes \rho_{B}$, $\sigma_{A_{1}C_{1}} \equiv \rho_{AC}$ and $\sigma_{B_{1}C_{1}} \equiv \rho_{BC}$ are all marginals of $\sigma_{A_{1}B_{1}C_{1}}$ which is compatible with $G'$, it follows that the set $\{\rho_{A},\;  \rho_{B}, \; \rho_{C},\; \rho_{A} \otimes \rho_{B},\; \rho_{AC},\; \rho_{BC} \}$ is compatible with $G'$. Proposition~\ref{mainlemma} establishes this implication for general networks $G$ and inflations $G'$ thereof.

\end{document}